\documentclass[lettersize,journal]{IEEEtran}
\usepackage{amsmath,amssymb,amsfonts,amsthm,mathrsfs}
\usepackage{multicol}
\usepackage{multirow}
\usepackage{algorithmic}
\usepackage{algorithm}
\usepackage{array}
\usepackage[caption=false,font=normalsize,labelfont=sf,textfont=sf]{subfig}
\usepackage{textcomp}
\usepackage{stfloats}
\usepackage{url}
\usepackage{verbatim}
\usepackage{graphicx}
\usepackage{epsfig}
\usepackage{balance}
\usepackage{cite}
\usepackage{bm}
\hyphenation{op-tical net-works semi-conduc-tor IEEE-Xplore}
\usepackage{empheq}
\usepackage[colorlinks=true,linkcolor=magenta,citecolor=blue,urlcolor=cyan,filecolor=red,draft]{hyperref}
\newtheorem{theorem}{Theorem}

\newtheorem{lemma}{Lemma}
\newtheorem{remark}{Remark}

\newtheorem{proposition}{Proposition}

\newtheorem{mydef}{Definition}

\newcommand{\bR}{\mathbb{R}}
\newcommand{\bZ}{\mathbb{Z}}
\newcommand{\bN}{\mathbb{N}}
\newcommand{\bC}{\mathbb{C}}

\newcommand{\bx}{{\bm{x}}}

\newcommand{\bU}{U^{T_s}}
\newcommand{\hF}{\mathcal{F}}     
\newcommand{\hD}{\mathcal{D}}

\newcommand{\hL}{\mathcal{L}}

\newcommand{\bIm}{\mathrm{Im}}

\newcommand{\Log}{\mathrm{Log}}
\newcommand{\sspan}{\mathrm{span}}

\newcommand{\hG}{\mathscr{G}}

\newcommand{\bl}[1]{{\color{black}{#1}}}

\newcommand{\Real}{\mathbb{R}}
\newcommand{\abs}[1]{\lvert#1\rvert}
\newcommand{\norm}[1]{\lVert#1\rVert}

\providecommand{\DIFaddbeginFL}{} 
\providecommand{\DIFaddendFL}{} 
\providecommand{\DIFdelbeginFL}{} 
\providecommand{\DIFdelendFL}{} 
\LetLtxMacro{\DIFOincludegraphics}{\includegraphics} 
\newcommand{\DIFaddincludegraphics}[2][]{{\color{blue}\fbox{\DIFOincludegraphics[#1]{#2}}}} 
\LetLtxMacro{\DIFOaddbeginFL}{\DIFaddbeginFL} 
\LetLtxMacro{\DIFOaddendFL}{\DIFaddendFL} 
\LetLtxMacro{\DIFOdelbeginFL}{\DIFdelbeginFL} 
\LetLtxMacro{\DIFOdelendFL}{\DIFdelendFL} 
\DeclareRobustCommand{\DIFaddbeginFL}{\DIFOaddbeginFL \let\includegraphics\DIFaddincludegraphics} 

\begin{document}
	
	\title{A Generalized Nyquist-Shannon Sampling Theorem Using the Koopman Operator}
	
	\author{Zhexuan Zeng, Jun Liu, and Ye Yuan
		\thanks{This work was supported by the National Natural Science Foundation of China under Grant 92167201 and China Scholarship Council program (Project ID: 202306160107).}
		\thanks{
			Zhexuan Zeng and Ye Yuan are with School of Artificial Intelligence and Automation, Huazhong University of Science and Technology, Wuhan, China.
			
			Jun Liu is with the Department of Applied Mathematics, Faculty of Mathematics, University of Waterloo, Waterloo, Ontario N2L 3G1, Canada. }
		\thanks{{$^*$}For correspondence, \href{mailto:
				yye@hust.edu.cn}{\tt yye@hust.edu.cn}.}}

	\markboth{Journal of \LaTeX\ Class Files,~Vol.~X, No.~X, Jan~2023}%
	{Shell \MakeLowercase{\textit{et al.}}: A Sample Article Using IEEEtran.cls for IEEE Journals}
	
	
	\maketitle
	
	\begin{abstract}
		
		In the field of signal processing, the sampling theorem plays a fundamental role for signal reconstruction as it bridges the gap between analog and digital signals.
		Following the celebrated Nyquist-Shannon sampling theorem, \bl{generalizing the sampling theorem to non-band-limited signals} remains a major challenge. In this work, a generalized sampling theorem, which builds upon the Koopman operator, is proposed for signals in a generator-bounded space. It naturally extends the Nyquist-Shannon sampling theorem in that: 1) for band-limited signals, the lower bounds of the sampling frequency and the reconstruction formulas given by these two theorems are exactly the same; 2) the Koopman operator-based sampling theorem can also provide a finite bound of the sampling frequency and a reconstruction formula for certain types of non-band-limited signals, which cannot be addressed by Nyquist-Shannon sampling theorem. These non-band-limited signals include\bl{, but are} not limited to, the inverse Laplace transform with limit imaginary interval of integration, and linear combinations of complex exponential functions. Furthermore, the Koopman operator-based reconstruction method is supported by theoretical results on its convergence. This method is illustrated numerically through several examples, demonstrating its robustness against low sampling frequencies.
	\end{abstract}

	\begin{IEEEkeywords}
		Sampling theorem, Nyquist-Shannon sampling theorem, non-band-limited signal, Koopman operator, signal reconstruction.
	\end{IEEEkeywords}
	
	\section{INTRODUCTION}
	\IEEEPARstart{O}{ne} of the central problems in signal processing is to reconstruct the continuous-time (CT) signal from discrete-time (DT) samples. Sampling theorem, which allows faithful representation of CT signal by its DT samples without aliasing, bridges the gap between analog and digital signals. Its wide range of applications includes radio engineering, crystallography, optics, and other scientific areas.
	
	The Nyquist-Shannon sampling theorem \cite{shannon1949communication} is a landmark in both mathematical and engineering literature, which gives a mechanism to convert band-limited signals into a sequence of numbers. Over the years, there are various generalized sampling theorems focusing on the extensions of signal classes, which can be primarily divided into two perspectives. 
	The first extension perspective starts from the integral form of a signal, which extends to general limit integral transforms besides the Fourier transform \cite{kramer1959sampling,walter1988sampling} or signals possessing specific integral properties \cite{cambanis1976zakai}. Most extensions within this perspective are formulated for band-limited functions \cite{jerri1977shannon}. The second one is inspired by wavelet theory, generalizing the basis function in the reconstruction formula from the classical sinc-function to more general functions \cite{unser2000sampling,zayed1993advances,eldar2015sampling}, which leads to not necessarily band-limited signals. It includes generating functions, such as splines or wavelets, of shift-invariant spaces  \cite{de1994approximation,deboor1994structure}. \bl{There are also works focusing on developing precise sampling schemes for parametric signals of finite rate of innovation, which allow the signals to be reconstructed from samples with a finite sampling frequency} \cite{vetterli2002sampling}. 
		However, most generalizations primarily leverage known signal structures for signal reconstruction, which does not breach the Nyquist-Shannon theorem \cite{mishali2011sub}.

		Based on the conclusion of the Nyquist-Shannon sampling theorem, the lower bound of sampling frequency is determined by the oscillation frequency in the (Fourier) frequency domain for band-limited signals $g(t)$. Despite the fact that most signals in the real world also oscillate at finite frequencies (e.g., $e^{-t}\cos t$) \cite{yuan2019data}, they cannot be analyzed by the Nyquist-Shannon sampling theorem because they have non-zero amplitude growth and are not band-limited. It inspires us that the spectrum of the Koopman operator \cite{Koopman1931hamiltonian}, which we will refer to as the Koopman spectrum for \bl{brevity, has} a significant connection with the features of oscillation and amplitude growth of functions \cite{mezic2005spectral}. Furthermore, a similar sampling issue is investigated for exact identification of dynamical systems \cite{Yue2020a, zeng2022sampling, zeng2024sampling}, whose results also depend on the Koopman spectrum. Therefore, it leads us to reexamine the foundations of sampling and develop a generalized sampling theorem \bl{based on} the Koopman operator. 
		
		

		This work investigates the sampling theorem for perfect reconstruction of \bl{nonparametric} signals that are ``band-limited'' in the sense of Koopman spectrum, where signals' oscillation frequency and growth rate of amplitude are both finite. By describing the signal in the Koopman operator framework, the sampling issue, i.e., one-to-one relationship between the signal and its samples, is translated into a one-to-one map between the DT Koopman operator and its generator. The result shows that the signal aliasing depends on the imaginary part of the Koopman spectrum, where the Nyquist rate is showed to be its special case. Moreover, the reconstruction formula is also represented by the Koopman operator, which can be reduced to classical forms for band-limited \cite{shannon1949communication} and Zakai's \cite{cambanis1976zakai} signal classes. 
		To numerically illustrate this generalized sampling theorem, the Koopman operator-based reconstruction method is proposed with theoretical convergence. Moreover, several examples of signals are presented for numerical illustration, including band-limited and linear combinations of polynomial and complex exponential signals. 
		
		The rest of this paper is organized as follows. The Nyquist-Shannon sampling theorem and Koopman operator theory are introduced in Section \ref{sec:K_f_s}. The generalized sampling theorem formulated by the Koopman operator is proposed in Section \ref{sec:mainresult}. Then this sampling theorem is analyzed for signals belonging to infinite-dimensional and finite-dimensional spaces in Section \ref{sec:infinitespace} and Section \ref{finitespace}, respectively. In Section \ref{sec:method}, the reconstruction algorithm is provided with convergence result. Finally, the sampling theorem and the reconstruction in the presence of noise are illustrated numerically with four types of examples in Section \ref{sec:num}.
		
		\subsection{Notation}
		The paper uses the notation shown here:
		$\hL(X)$ represents all bounded linear operators from linear space $X$ to itself, $\hD(\cdot)$ represents the domain of the operator, $\sspan\{\cdot\}$ denotes the linear space spanned by basis functions, $\bIm(\cdot)$ represents the imaginary part of complex numbers, $\sigma(\cdot)$, $\sigma_p(\cdot)$ respectively represent the spectrum and eigenvalues (point spectrum) of the linear operator, $U^\tau|_{\hF_e}$ and $L|_{\hF_e}$ represent the restriction of the Koopman operator $U^\tau$ and the generator $L$ to the functional space $\hF_e$, respectively. 
		
		\section{Preliminary}\label{sec:K_f_s}
		
		\subsection{Sampling problem and Nyquist-Shannon sampling theorem}\label{sec:problem}
		Consider the original signal $g(t)$ and uniform sampling with sampling period $T_s$. The idealized sampling of $g(t)$ can be represented by a periodic impulse train multiplied by $g(t)$ \cite[Page 516]{oppenheim1997signals}, as illustrated in Fig. \ref{fig00}, where the sampling function is $p(t) = \sum_{k=-\infty}^\infty \delta(t-kT_s)$.  It leads to $$g_p(t) = \sum_{k=-\infty}^\infty g(t)\delta(t-kT_s)$$ and the samples $\{g(kT_s)\}_{k\in\bZ}$. The key question of signal processing is  
		whether the samples are faithful representations of $g(t)$. 
		If so, how do we reconstruct $g(t)$ from $\{g(kT_s)\}_{k\in\bZ}$? 
		\begin{figure}[!t]
			\centering
			\includegraphics[height=.1\textheight]{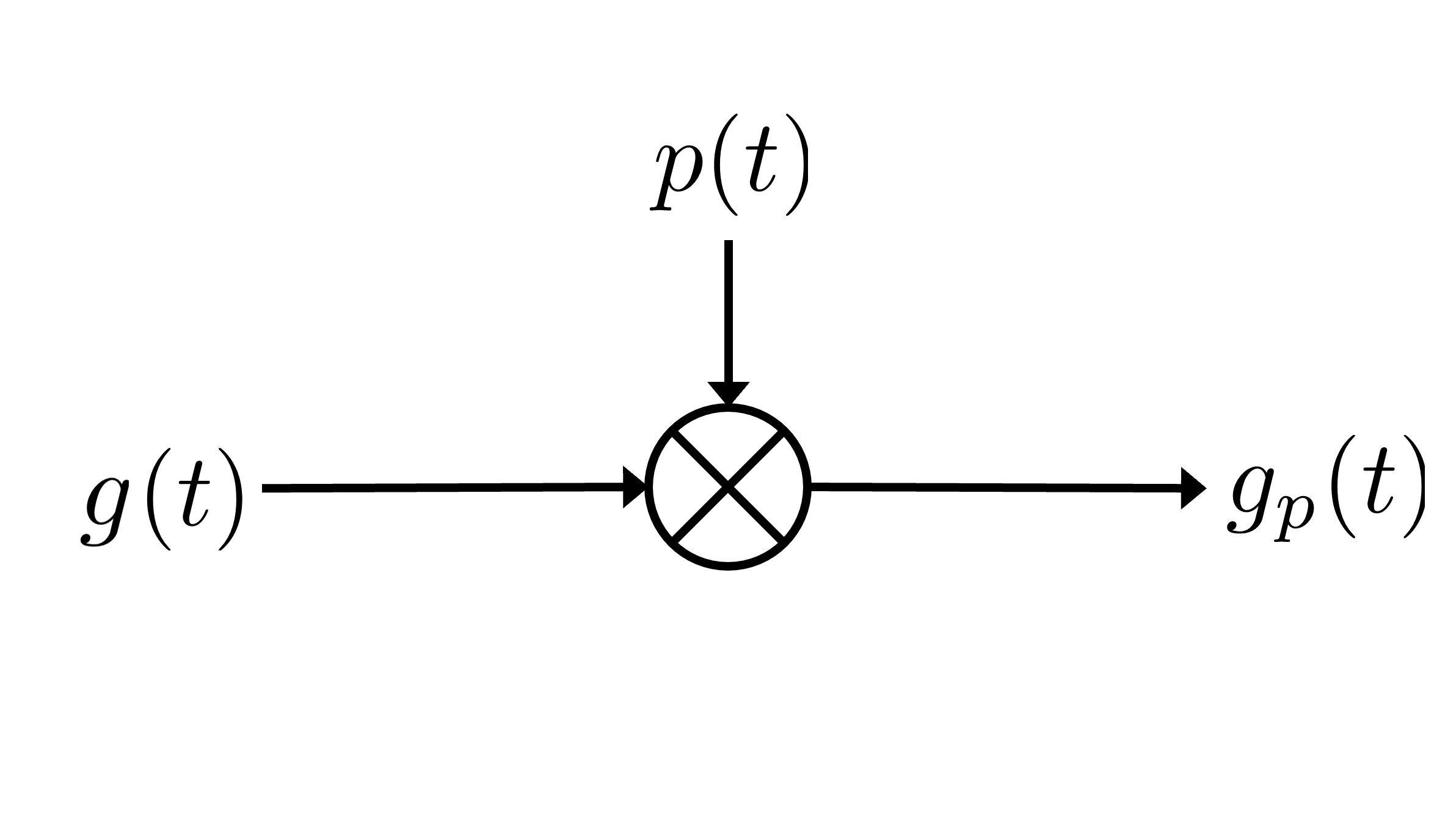}
			\caption{Impulse-train sampling \cite[Page 516]{oppenheim1997signals}. The sampling of the signal $g(t)$ leads to $g_p(t)$, which is represented by the multiplication of $g(t)$ and sampling function $p(t) = \sum_{k=-\infty}^\infty \delta(t-kT_s)$, where $T_s$ is the sampling period.}
			\label{fig00}
		\end{figure}
		
		
		The Nyquist-Shannon sampling theorem answers this question for band-limited signals, which plays a fundamental role in signal processing \cite{shannon1949communication}. It states that a band-limited signal $g(t)$ with a maximum frequency $\omega_M$ (rad/s) can be exactly reconstructed from samples with a sampling period $T_s\le \pi/\omega_M$. The reconstruction formula is given as \begin{equation}\label{bandlimited_recons}
			g(t) = \sum_{n=-\infty}^{\infty}g(nT_s)\frac{\sin(\pi/T_s(t-nT_s))}{\pi/T_s(t-nT_s)}.
		\end{equation} 
		\bl{The study of the sampling theorem has a long history as detailed in \cite{butzer2011interpolation}. In fact, this reconstruction \eqref{bandlimited_recons} was first clearly articulated by Ogura in his paper \cite{ogura1920certain}, which predates Shannon's well-known work \cite{shannon1949communication}.}
		
		Since this work generalizes the sampling theorem using the Koopman operator, we briefly introduce the Koopman operator theory for a space of signals as follows.
		
		\subsection{The Koopman operator}\label{sec:intr_koopman}
		Consider the semigroup of the Koopman operator $U^\tau,\tau\ge 0$ \cite{Koopman1931hamiltonian}, which is defined as follows
		\begin{equation}\label{eq4}
			U^\tau g=g\circ S^\tau  , \,\, g\in\hF,
		\end{equation} where $g:\bR\to\bC$ is a function that belongs to a Banach space $\hF$, and $S^\tau(t) = t+\tau$ is the flow of time $t\in\bR$. The Koopman operator $U^\tau$ is linear, i.e., for signal functions $g_1,g_2\in \hF$,
		\begin{equation*}
			U^\tau (\alpha_1 g_1+\alpha_2g_2)=\alpha_1U^\tau  g_1+\alpha_2U^\tau  g_2, \ \alpha_1,\alpha_2\in \bR.
		\end{equation*}
		Therefore, $U^\tau$ 
		describes the evolution of signal functions in $\hF$ in terms of linear transformation. 
		
		The infinitesimal generator $L$ of the Koopman operator is also a linear operator, which is defined as:
		\begin{equation}\label{eq5}
			L g=\lim_{\tau \to0^+}\frac{1}{\tau}(U^\tau -I)g, \ g\in \hD(L).
		\end{equation} So $Lg$ can be seen as the ``derivative'' of the signal function $g$ with respect to time, i.e., $\dot{g}=L g$, where $\dot{g}$ denotes $\partial(g\circ S^\tau )/\partial{\tau}|_{\tau=0}$\cite{mauroy2020koopman}. 
		
		
		When $U^\tau$ and $L$ are both bounded operators, we have \begin{equation}\label{exp}
			U^\tau  = e^{L\tau}.
		\end{equation} Besides, the spectrum also admits this exponential relationship \cite[Lemma 3.13]{engel2000one}, i.e., $$\sigma(U^\tau) = \{e^{\lambda \tau}:\lambda\in\sigma(L)\}.$$ In the following, we also call $\sigma(L)$ as the Koopman spectrum. 
		
		Here we introduce some definitions of the spectrum of a linear operator $L\in \hL(X)$. Denote $I$ as the identity operator on $\hF$. Then a complex value $\lambda$ belongs to the spectrum of $L$, i.e., $\lambda\in\sigma(L)$, if the operator $L-\lambda I$ does not have an inverse. The value $\lambda$ belongs to the point spectrum of $L$, i.e., $\lambda\in\sigma_p(L)$, if $L - \lambda I$ is not injective. 
		When the Koopman operator $U^\tau$ admits a point spectrum, the eigenvalue and associated eigenfunction of the Koopman operator are defined as follows.
		\begin{mydef}[Koopman eigenvalues and eigenfunctions]
			Given the Koopman operator $U^\tau$, the function $\phi_k\in\hF, \phi_k\neq0$ is an eigenfunction of $U^\tau$ if
			$$
			U^\tau\phi_k = e^{\lambda_k \tau }\phi_k,
			$$
			where $\lambda_k\in\bC$ is the associated Koopman eigenvalue.
		\end{mydef}

		The function $\phi_k$ is also an eigenfunction of the generator $L$ when it is an eigenfunction of the Koopman operator, i.e.,
		$$
		L\phi_k = \lambda_k \phi_k,
		$$ where $\lambda_k$ is the associated eigenvalue of the generator $L$, and it belongs to its point spectrum $\sigma_p(L)$.

		\section{The generalized sampling theorem}\label{sec:mainresult}
		
		\subsection{Signal Class}\label{sec:signal}
		We address the sampling theorem for signals $g(t)$ belonging to the generator-bounded space $\hF_e$. In this section, we first present the definition and properties of $\hF_e$. 
		The generator-bounded space $\hF_e$ is defined as follows. 
		
		\begin{mydef}[Generator-bounded space $\hF_e$]\label{space}
			The generator-bounded space $\hF_e$ is a separable Banach space that satisfies: \begin{itemize}
				\item[1)] The space $\hF_e$ is invariant under the Koopman operator $U^\tau $, i.e., $\forall \tau\ge 0, \forall g\in\hF_e, g(t+\tau)\in\hF_e$.
				\item[2)] The generator $L|_{\hF_e}$ is bounded, i.e., $\|L|_{\hF_e}\|=\sup_{\|g\|=1}\|g^{(1)}\|\bl{<}\infty$, where $g^{(1)}$ is the derivative of $g$.
			\end{itemize}
		\end{mydef}
		
		Here we show the property that a signal $g(t)$ belonging to $\hF_e$ oscillates with a finite frequency and a finite growth rate of amplitude at $t<\infty$. Since  $g\in\hF_e$ and $\hF_e$ is an invariant space of the Koopman operator, the value of the signal $g(t)$ at $t = \tau$ is represented by the Koopman operator $U^\tau|_{\hF_e}$, i.e., $$g(\tau) = U^\tau |_{\hF_e}g(0) = e^{L|_{\hF_e}\tau}g(0) .$$
		When $\hF_e$ is infinite-dimensional, it follows from the functional calculus that 
		\begin{equation}\label{functionalc}
			g(\tau) =  U^\tau |_{\hF_e}g (0)=\int_{\sigma(L|_{\hF_e})}e^{\lambda \tau}E({\rm d}\lambda)g(0),
		\end{equation} where $E$ is the resolution of the identity for $L|_{\hF_e}$ \cite[P. 898]{dunford1954spectral}. Besides, when $\hF_e$ is finite-dimensional, i.e., $\hF_e=\sspan\{\phi_1,\ldots,\phi_n \}$ with the spectrum $\sigma(L|_{\hF_e}) = \{\lambda_k\}_{k=1}^n$, then $g(\tau)$ is determined by 
		\begin{equation}\label{eqpr}
			g(\tau) = U^\tau |_{\hF_e}g (0)=\sum_{k=1}^{n} e^{\lambda_k \tau} E(\lambda_k) g(0).\end{equation}
		Hence the oscillation and growth of amplitude are characterized by the real and imaginary parts of $\lambda\in \sigma(L|_{\hF_e})$, respectively. Based on the theorem of spectral radius \cite[P. 241]{engel2000one}, i.e., $$r_\sigma(L|_{\hF_e})\le\|L|_{\hF_e}\|,$$ where the spectral radius $r_\sigma(L|_{\hF_e}) := \sup\{|\lambda|:\lambda\in\sigma(L|_{\hF_e})\}$, we have a bounded spectrum $\sigma(L|_{\hF_e})$ because $\|L|_{\hF_e}\|<\infty$. 
		Hence, the signal $g(t)\in\hF_e$ is ``band-limited'' in the sense of Koopman spectrum, which oscillates with a finite frequency and a finite growth rate of amplitude for $t<\infty$.
		
		\begin{remark}
			The generator-bounded space $\hF_e$ is a natural extension of the band-limited signal space, as it transforms the characterization of signal properties from the one-dimensional Fourier frequency $\omega\in\bR$ to the richer two-dimensional Koopman spectrum $\lambda\in\bC.$ In particular, a band-limited signal is determined by a limited Fourier frequency domain, which corresponds to the bounded Koopman spectrum that \bl{lies} only on the imaginary axis of the complex space $\bC$. Consequently, based on the physical interpretation of the Koopman spectrum, a signal that is ``band-limited'' in this spectrum suggests a natural expansion of band-limited signals to much broader signal classes allowing for non-zero growth rates of amplitude. 
			
		\end{remark}
		
		\subsection{The lower bound of sampling frequency to avoid aliasing}\label{sec:koop_analyze}
		In this section, we propose the Koopman operator-based sampling theorem for signals $g(t)$ in the generator-bounded space $\hF_e$. It specifies the lowest sampling frequency of samples to reconstruct the original signal without aliasing, which is determined by the imaginary part of the Koopman spectrum. 
		
		
		We analyze the sampling theorem by the Koopman operator by transforming the one-to-one relationship between the signal and its samples to that between the generator $L|_{\hF_e}$ and the DT Koopman operator $U^{T_s}|_{\hF_e}$. Specifically, consider the signal $g\in\hF_e$ and $U^{T_s}|_{\hF_e}$ defined by the DT flow $S^{T_s}(t) = t+T_s$. Then $U^{T_s}|_{\hF_e}$ describes samples of sampling period $T_s$, i.e., $U^{T_s}|_{\hF_e} g(t_0) = g(t_0+T_s)$. By obtaining the generator $L|_{\hF_e}$ from $U^{T_s}|_{\hF_e}$ uniquely, the signal $g(\tau)$ can be reconstructed by $g(\tau) = U^\tau |_{\hF_e}g(0) = e^{L|_{\hF_e}\tau}g(0)$. In summary, the framework of the analysis is illustrated in Fig. \ref{fig1}.
		
		\begin{figure}
			\centering
				\includegraphics[width=0.48\textwidth]{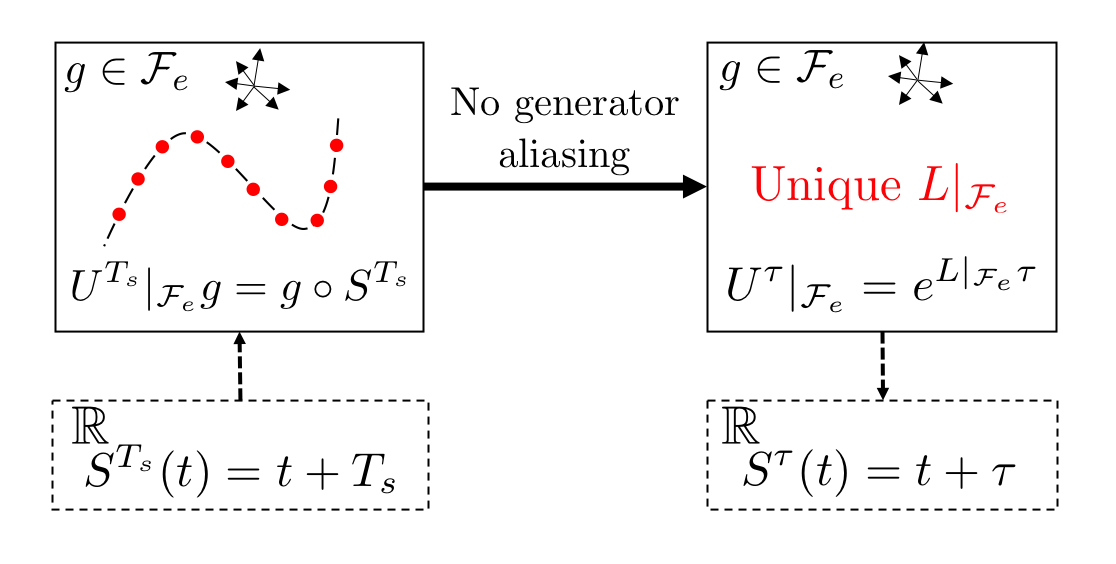}
			\caption{The framework to investigate the sampling theorem of signals by the Koopman operator. The analysis consists of two steps: (1) Considering the sampling problem in the generator-bounded space $\hF_e$ of signal $g(t)$, (2) obtaining the generator $L|_{\hF_e}$ uniquely from DT Koopman operator $\bU|_{\hF_e}$, (3) reconstructing the signal $g(t)$ by the generator $L|_{\hF_e}$.}
			\label{fig1}
		\end{figure}
		For the convenience of proving the main result, we first introduce the essential lemma to guarantee the uniqueness of operator logarithm.
		
		\begin{lemma}[Principal logarithm{\cite[Thm 2]{krabbe1956logarithm}}]\label{principal}
			Consider an operator $Y\in\hL(X)$ whose logarithm is well-defined and denote the principal logarithm of $Y$ as $B=\Log(Y)$. The principal logarithm $B$ is uniquely obtained, where the spectrum satisfies $\sigma(B)\subset \hG(\pi)$ and $\hG(\pi)=\{z \in \bC:-\pi <\bIm(z)< \pi \}$. 
		\end{lemma}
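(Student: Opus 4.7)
The plan is to prove the two assertions separately: first existence together with the spectral containment via the holomorphic (Dunford--Riesz) functional calculus, then uniqueness by reducing to a scalar argument inside a commutative subalgebra. Since the logarithm of $Y$ is assumed well-defined, the spectrum $\sigma(Y)$ avoids the branch cut $(-\infty,0]$ of the principal complex logarithm. I would construct
$$B := \frac{1}{2\pi i}\int_\Gamma \log(z)\,(zI - Y)^{-1}\,{\rm d}z,$$
where $\log$ denotes the principal branch and $\Gamma$ is any positively oriented rectifiable contour in $\bC \setminus (-\infty,0]$ enclosing $\sigma(Y)$. Standard properties of the Dunford--Riesz calculus yield $e^B = Y$, and the spectral mapping theorem gives $\sigma(B) = \log(\sigma(Y))$. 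Because the principal branch sends the slit plane into $\{z:-\pi < \bIm(z)<\pi\}$, the spectral containment $\sigma(B)\subset \hG(\pi)$ is immediate.

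For uniqueness, let $B'\in\hL(X)$ be any bounded operator with $e^{B'}=Y$ and $\sigma(B')\subset \hG(\pi)$. I would first show that $B'$ commutes with $B$: since $B'$ commutes with its own exponential $Y$, it commutes with every resolvent $(zI-Y)^{-1}$ for $z\notin\sigma(Y)$, hence with $B$ through its integral representation above. Setting $D:=B-B'$, commutativity yields $e^D = e^{B}e^{-B'} = Y Y^{-1} = I$.

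Working inside the closed commutative subalgebra generated by $B$, $B'$, and $I$, the spectral mapping theorem applied to $e^D = I$ gives $\sigma(D)\subset 2\pi i\,\bZ$, while the spectrum of a difference of commuting operators satisfies $\sigma(D)\subset\sigma(B)-\sigma(B')\subset\{z:|\bIm(z)|<2\pi\}$. Together these force $\sigma(D)=\{0\}$, i.e., $D$ is quasinilpotent. Writing $e^z-1=z\,g(z)$ with $g$ entire and $g(0)=1$, one obtains $0 = e^D-I = D\,g(D)$; since $\sigma(g(D))=g(\sigma(D))=\{1\}$ does not contain zero, $g(D)$ is invertible, so $D=0$ and $B'=B$.

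The main obstacle is establishing commutativity of $B$ and $B'$: in general, two operator logarithms of the same $Y$ need not commute, and without commutativity the identity $e^{B-B'}=e^{B}e^{-B'}$ fails and the whole strip-width argument collapses. The functional-calculus representation of $B$ as an integral against resolvents of $Y$ is precisely what makes this step go through, since it reduces commutativity with $B$ to commutativity with $Y$, which is automatic for any operator satisfying $e^{B'}=Y$. An alternative route, likely closer in spirit to Krabbe's original argument, is to perform a Riesz spectral decomposition of $Y$ and reduce uniqueness to a pointwise choice among the countable family $\{\log(\mu)+2\pi i k\}_{k\in\bZ}$ of candidate scalar logarithms at each $\mu\in\sigma(Y)$, where the strip constraint $\hG(\pi)$ isolates exactly one.
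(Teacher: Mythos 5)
The paper does not actually prove this lemma: it is imported verbatim as a citation to Krabbe's Theorem 2, and is then used in Theorem \ref{thKs} only for the uniqueness assertion (at most one logarithm of $Y$ has spectrum in the open strip $\hG(\pi)$). Your proposal therefore supplies a proof where the paper offers none, and the argument you give is the standard and essentially correct one: construct $B$ by the Dunford--Riesz calculus with the principal branch, get $e^{B}=Y$ and $\sigma(B)=\log(\sigma(Y))\subset\hG(\pi)$ by spectral mapping; for uniqueness, note that any competitor $B'$ with $e^{B'}=Y$ commutes with $Y$, hence with the resolvents $(zI-Y)^{-1}$, hence with $B$, so $D=B-B'$ satisfies $e^{D}=I$, $\sigma(D)\subset 2\pi i\,\bZ$, while the strip hypotheses force $|\bIm(\sigma(D))|<2\pi$, giving $\sigma(D)=\{0\}$; then $D\,g(D)=0$ with $g(D)$ invertible kills $D$. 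This is a legitimate self-contained route to the cited result, and the reduction of the commutativity obstacle to commutativity with $Y$ is exactly the right move.

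Two points deserve tightening. First, the inference ``logarithm well-defined $\Rightarrow$ $\sigma(Y)$ avoids $(-\infty,0]$'' is not valid as stated: $Y=-I=e^{i\pi I}$ has a bounded logarithm but $\sigma(Y)\subset(-\infty,0)$. What is true, and what your argument actually needs, is the converse-flavored observation that if \emph{some} logarithm $B'$ has $\sigma(B')\subset\hG(\pi)$, then $\sigma(Y)=e^{\sigma(B')}$ automatically misses $(-\infty,0]$, so the functional-calculus $B$ exists and your uniqueness argument applies; phrased this way the lemma's (somewhat loose) hypothesis causes no trouble and the statement is never vacuously misapplied. Second, the inclusion $\sigma(B-B')\subset\sigma(B)-\sigma(B')$ should be justified inside a \emph{maximal} commutative subalgebra containing $B$ and $B'$ (which is inverse-closed, so relative spectra coincide with spectra in $\hL(X)$); in the closed subalgebra merely generated by $B$, $B'$, $I$ the relative spectra can be strictly larger and the strip bound would not follow. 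Both are routine repairs, so I regard the proposal as correct in substance.
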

		
		Based on Lemma \ref{principal}, here we propose the sampling theorem for signals belonging to the generator-bounded space $\hF_e$. 
		
		\begin{theorem}[Koopman operator-based sampling theorem]\label{thKs}
			Consider a signal $g(t)$ belonging to a generator-bounded space $\hF_e$. There is no aliasing of the signal $g(t)$ 
			if and only if the sampling frequency (rad/s) satisfies  
			\begin{equation}\label{eqcons}
				\omega_s>2\min_{\hF_e}\{\max|\bIm(\sigma(L|_{\hF_e}))|\}.
			\end{equation}
		\end{theorem}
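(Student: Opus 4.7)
The plan is to reduce the sampling--aliasing question to a uniqueness problem for the operator logarithm. Because $g\in\hF_e$ and $\hF_e$ is Koopman-invariant, the sample sequence $\{g(kT_s)\}_{k\in\bZ}$ is completely encoded by the action of $\bU|_{\hF_e}$ on $\hF_e$, and by \eqref{exp} one has $\bU|_{\hF_e}=e^{T_s L|_{\hF_e}}$. Since the continuous reconstruction is $g(\tau)=e^{\tau L|_{\hF_e}}g(0)$, there is no aliasing if and only if $L|_{\hF_e}$ is uniquely determined by $\bU|_{\hF_e}$. This is exactly the framework sketched in Fig.~\ref{fig1}, so the theorem reduces to identifying the sharp condition under which the operator exponential $L\mapsto e^{T_sL}$ is injective on the set of admissible generators.

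For the sufficient direction, I would assume that the minimum on the right-hand side of \eqref{eqcons} is attained (or approximated) by some admissible $\hF_e$ containing $g$, and that $\omega_s>2\max|\bIm(\sigma(L|_{\hF_e}))|$. Rewriting this as $T_s\cdot\max|\bIm(\sigma(L|_{\hF_e}))|<\pi$ and using $\sigma(T_sL|_{\hF_e})=T_s\sigma(L|_{\hF_e})$, one obtains $\sigma(T_sL|_{\hF_e})\subset\hG(\pi)$. Lemma~\ref{principal} then identifies $T_sL|_{\hF_e}$ as the principal logarithm of $\bU|_{\hF_e}$, so that $L|_{\hF_e}=\frac{1}{T_s}\Log(\bU|_{\hF_e})$ is unique; reconstruction via $g(\tau)=e^{\tau L|_{\hF_e}}g(0)$ follows.

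For the necessary direction, I would argue by contrapositive: if $\omega_s\le 2\max|\bIm(\sigma(L|_{\hF_e}))|$ for \emph{every} generator-bounded $\hF_e$ containing $g$, then one can exhibit a distinct signal $\tilde g$ with identical samples. The mechanism is the classical branch ambiguity of the complex logarithm: whenever $\lambda\in\sigma(L|_{\hF_e})$ satisfies $|\bIm(\lambda)|\ge \pi/T_s$, the shifted value $\lambda-\frac{2\pi i}{T_s}\mathrm{sgn}(\bIm(\lambda))$ gives the same $e^{T_s\lambda}$. Using the functional calculus \eqref{functionalc} (or its finite-dimensional version \eqref{eqpr}), one can build a modified generator $\tilde L$ producing the same $\bU|_{\hF_e}$ but a different continuous signal $\tilde g$. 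The minimum over $\hF_e$ in \eqref{eqcons} is natural in this light: aliasing is avoided as long as \emph{some} admissible Koopman representation of $g$ lets the principal branch apply, and it is unavoidable only when \emph{every} such representation forces a spectrum outside $\hG(\pi/T_s)$.

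The main obstacle is the necessity step in the infinite-dimensional setting. The branch-shift construction via the resolution of the identity $E(\mathrm{d}\lambda)$ must produce a perturbed operator that is actually the bounded infinitesimal generator of a Koopman semigroup on some admissible generator-bounded space, rather than an abstract algebraic substitute; one must also guarantee that the resulting signal is pointwise distinct from $g$ (not merely operator-theoretically different). Handling the minimum requires care as well, because $g$ may belong to many generator-bounded spaces with different spectra, so the argument must show that the infimum encodes the tightest possible reconstruction threshold and is compatible with the branch-shift construction on every competing representation.
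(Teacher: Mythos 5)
Your proposal follows essentially the same route as the paper: both reduce no-aliasing to unique recovery of $L|_{\hF_e}$ from $\bU|_{\hF_e}$ via the multi-valued operator logarithm, invoke Lemma~\ref{principal} to get the spectral-strip condition $\sigma(T_sL|_{\hF_e})\subset\hG(\pi)$, and handle the minimum in \eqref{eqcons} by observing that reconstruction only needs one admissible ``kernel'' space $\hF_e^\gamma$. The necessity obstacle you flag (realizing the branch-shifted generator as an actual Koopman generator on an admissible space) is not treated in any more detail by the paper either, which relies on the logarithm's branch ambiguity and illustrates the aliased signal only in a remark and numerically.
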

		
		\begin{proof}
			The samples of $g(t)$ is described by the DT Koopman operator $U^{T_s}|_{\hF_e}$, i.e., $g((k+1)T_s) = U^{T_s}|_{\hF_e}g(kT_s), k\in\bZ$. Since $U^{T_s}|_{\hF_e}$ and its generator $L|_{\hF_e}$ are both bounded, we have
			\begin{equation*}
				U^{T_s}|_{\hF_e}=\exp(L|_{\hF_e}T_s)
				,
			\end{equation*} where the exponential is defined by \cite[P. 172]{hille1996functional}.
			Then $L|_{\hF_e}$ can be obtained from $U^{T_s}|_{\hF_e}$ by operator logarithm \cite[P. 172]{hille1996functional}. However, there are multiple solutions of the logarithm \cite[P. 173]{hille1996functional}, i.e., 
			\begin{align*}
				\log(U^{T_s}|_{\hF_e})&=\log(\exp (L|_{\hF_e}T_s))=L|_{\hF_e}T_s+2\pi i \sum_{k}n_k I_k,
			\end{align*}where $n_k$ are integers, $I_k$ denote idemponents, i.e., $\exp(2\pi i I_k)=I$, and $I$ denotes the identity operator. Based on Lemma \ref{principal}
			, the generator $L|_{\hF_e}$ is obtained uniquely from $U^{T_s}|_{\hF_e}$ if and only if $L|_{\hF_e}T_s$ is principal logarithm, i.e., $L|_{\hF_e}T_s = \Log(U^{T_s}|_{\hF_e})$. It follows that the spectrum of $T_sL|_{\hF_e}$ lies in the strip $\{z \in \bC:-\pi <\bIm(z)< \pi \}$, i.e., 
			\begin{equation}\label{omega_s}
				\omega_s=2\pi/T_s>2\max|\bIm(\sigma(L|_{\hF_e}))|.
			\end{equation}
			Since $g(\tau) = U^\tau|_{\hF_e}g(0)$ can be reconstructed if there exists one $\hF_e$ such that $L|_{\hF_e}$ can be obtained from $U^{T_s}|_{\hF_e}$. Therefore, we only require the ``kernel'' generator-bounded space \begin{equation}\label{kernel}
					\hF_e^\gamma = \arg\min_{\hF_e}\{\max|\bIm(\sigma(L|_{\hF_e}))|\},
			\end{equation} to satisfy \eqref{omega_s}. 
			Then the value of $g(t)$ at $t=\tau$ can be reconstructed by \begin{equation}\label{reconstruction} 
				g(\tau) = \exp\left(\frac{\tau}{T_s}\Log(U^{T_s}|_{\hF_e^\gamma}) \right)g(0).
			\end{equation}
		\end{proof}	
		
		The sampling bound in \eqref{eqcons} ensures that $g(t)$ is the unique function in $\hF_e^\gamma$ passing through the samples. \bl{When the sampling frequency is too low to satisfy \eqref{eqcons}, signal aliasing exists. Specifically, there exists another signal $\hat{g}(t)$ that generates all the samples and is characterized by a Koopman spectrum whose imaginary part is confined to $i[-\omega_s/2, \omega_s/2]\subsetneqq \bIm(\sigma(L|_{\hF_e^\gamma}))$. This results in a different signal $\hat{g}(t)$ with lower oscillation frequency compared to $g(t)$, which will be illustrated numerically in later sections.}
		
		Moreover, the sampling bound $\omega_\gamma = 2\max|\bIm(\sigma(L|_{\hF_e^\gamma}))|$ is related only to the imaginary part of the Koopman spectrum, which reflects the oscillation frequency of the signal. Therefore, this result coincides with the fact that the oscillation causes aliasing while amplitude growth rate does not. 
		
		\begin{remark}[The reason for generalization]
			The reason why the Koopman operator-based sampling theorem extends Nyquist-Shannon sampling theorem is that the two-dimensional Koopman spectrum is able to independently characterize the phase oscillation frequency that causes signal aliasing and the exponential growth of amplitude that does not. 
			Hence, when the sampling is fast enough to capture the phase oscillation of $g\in\hF_e^\gamma$ whose growth rate is limited at $t<\infty$, the signal $g(t)$ can be definitely determined by the samples because it is the unique solution in $\hF_e^\gamma$ that matches $g(kT_s),k\in\bN$
			. 
			
		\end{remark}

		\subsection{Reconstruction from samples}\label{sec:formula}
		
		When the sampling period $T_s$ satisfies \eqref{eqcons}, we have $$\exp\left(\frac{\tau}{T_s}\Log(U^{T_s}|_{\hF_e^\gamma}) \right) = \exp(\tau L|_{\hF_e^\gamma})=U^\tau|_{\hF_e^\gamma}.$$ It leads \eqref{reconstruction} to $
		g(\tau) = U^{\tau}|_{\hF_e^\gamma} g(0),$ where $\tau\ge 0$. Denote $\Phi = [\varphi_0,\ldots,\varphi_{M-1}]$ as a set of basis functions of $\hF_e^\gamma$. \bl{Since $\hF_e^\gamma$ is invariant under the action of the Koopman operator (see \eqref{kernel} and Definition \ref{space}), we have} $U^\tau|_{\hF_e^\gamma}g = \Phi \bm c(\tau)$, where $\bm c(\tau) = [c_1(\tau),\ldots,c_M(\tau)]^T$ is the coordinate of $U^\tau|_{\hF_e^\gamma}g$ on the basis $\Phi$. Then we have \begin{equation}\label{rerecons}
			g(\tau) = U^\tau|_{\hF_e^\gamma}g(0) = \Phi(0)\bm c(\tau),
		\end{equation} where \bl{the explicit representation of} $\bm c(\tau)$ is determined by the specific type of signals and basis $\Phi$.
		
		Hence, the formula \eqref{rerecons} implies that the reconstruction of $g$ may leverage the values of unknown basis functions $\Phi(0)$, even though Theorem \ref{thKs} guarantees $g(t)$ as the unique function in $\hF_e ^{\gamma}$ passing through all the samples $g(kT_s)$. However, the ability to reconstruct the signal from its samples is the fundamental statement of the sampling theorem. Hence, another requirement and its criterion are presented to ensure that the signal can be expressed by its samples. 
		\begin{theorem}[Reconstruction requirement]\label{recons_requrie}
			Consider the signal $g\in \hF_e^\gamma$ (see \eqref{kernel}) and sampling period $T_s = 2\pi/\omega_s$ satisfying \eqref{eqcons}. The signal $g(t)$ can be represented by the value of its samples if $\{U^{kT_s}|
			_{\hF_e^\gamma}g(t)=g(t+kT_s)\}_{k=0,\ldots,M-1}$ is a set of basis functions of $\hF_e^\gamma$.
		\end{theorem}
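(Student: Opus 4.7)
The plan is to use the hypothesized basis structure together with the invariance of $\hF_e^\gamma$ under the Koopman semigroup to express $g(\tau)$ directly as a finite linear combination of the samples $g(kT_s)$, and then invoke Theorem \ref{thKs} to make the combination coefficients computable from the sample sequence alone. Since $\hF_e^\gamma$ is $U^\tau$-invariant (Definition \ref{space}) and $\{g(\cdot+kT_s)\}_{k=0}^{M-1}$ is a basis by hypothesis, the shifted signal $U^\tau g = g(\cdot+\tau)\in\hF_e^\gamma$ admits a unique expansion
\begin{equation*}
g(t+\tau) = \sum_{k=0}^{M-1} c_k(\tau)\,g(t+kT_s),
\end{equation*}
so evaluating at $t=0$ yields the candidate formula $g(\tau) = \sum_{k=0}^{M-1} c_k(\tau)\,g(kT_s)$, which already exhibits the signal as a linear combination of its samples.

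To promote this to a genuine reconstruction I would argue that the vector $\bm{c}(\tau) = (c_0(\tau),\ldots,c_{M-1}(\tau))^T$ is determined by the samples alone. In the basis $\Phi = [g(\cdot),g(\cdot+T_s),\ldots,g(\cdot+(M-1)T_s)]$, the operator $U^{T_s}|_{\hF_e^\gamma}$ sends each basis element to the next, except at the last, so its matrix $[U^{T_s}]_\Phi$ is of companion form whose only nontrivial column encodes the expansion of $g(\cdot+MT_s)$ in $\Phi$. Evaluating that identity at $t=jT_s$ for $j=0,\ldots,M-1$ yields a Hankel linear system in the samples $\{g(kT_s)\}_{k=0}^{2M-1}$ whose solution gives that column, hence $[U^{T_s}]_\Phi$ from samples. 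Because $T_s$ obeys \eqref{eqcons}, Theorem \ref{thKs} grants a unique principal logarithm, so $[L]_\Phi = T_s^{-1}\Log([U^{T_s}]_\Phi)$ is well defined, and $\bm{c}(\tau) = \exp(\tau[L]_\Phi)\bm{e}_1$ is then sample-determined, with $\bm{e}_1$ the coordinate vector of $g = g(\cdot+0\cdot T_s)$ in $\Phi$.

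The main obstacle will be verifying nonsingularity of the Hankel matrix used above, because the basis hypothesis asserts linear independence of functions on $\bR$, not of sample vectors at a particular grid. I would address this by noting that linear independence of $\{g(\cdot+kT_s)\}_{k=0}^{M-1}$ as elements of $\hF_e^\gamma$ forces the matrix of point evaluations at any sufficiently generic grid of $M$ consecutive sample times to be nonsingular; since the derivation above is invariant under a small translation of the evaluation grid, no generality is lost by assuming the canonical grid works. The remaining pieces — uniqueness of the basis expansion and of the principal logarithm — are immediate from the hypothesis and from Theorem \ref{thKs}, yielding the desired sample-based representation of $g(\tau)$.
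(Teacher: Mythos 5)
Your first paragraph is exactly the paper's proof: by invariance of $\hF_e^\gamma$ under the Koopman semigroup and the hypothesis that $\{g(t+kT_s)\}_{k=0}^{M-1}$ is a basis, the shifted signal expands uniquely as $g(t+\tau)=\sum_{k=0}^{M-1}c_k(\tau)\,g(t+kT_s)$, and setting $t=0$ exhibits $g(\tau)$ as a combination of its samples --- which is all the theorem claims.

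Your second and third paragraphs go beyond the statement by trying to show that the coefficients $c_k(\tau)$ are themselves computable from the sample sequence. That stronger, constructive claim is not part of this theorem (the paper handles it separately, via the criterion of Proposition \ref{criterion} and the reconstruction method of Section \ref{sec:method} with the convergence result of Theorem \ref{convergence}), and the argument you sketch for it has a genuine gap: linear independence of the time-delay functions $\{g(\cdot+kT_s)\}_{k=0}^{M-1}$ as elements of $\hF_e^\gamma$ does not imply that the $M\times M$ matrix of their point values on the particular grid $\{0,T_s,\ldots,(M-1)T_s\}$ (equivalently, your Hankel matrix of samples) is nonsingular, and the proposed remedy of perturbing the evaluation grid is not available, since the data consist only of samples at integer multiples of $T_s$. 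Because this extra step is not needed for the stated result, the proof of the theorem itself is correct and coincides with the paper's; just be aware that the sample-computability of $\bm c(\tau)$ requires the additional rank condition the paper imposes later.
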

		\begin{proof}
			The signal can be written as $g(t) = \sum_{k=0}^{M-1} c_k(0) g(t+kT_s)$, where $c_k(0) = 1$ for $k=0,$ $c_k(0) = 0$ for $k\neq 0$. If $\{g(t+kT_s)\}_{k=0,\ldots,M-1}$ is a set of basis functions of $\hF_e^{\gamma}$, it follows \bl{from} $U^\tau|_{\hF_e^\gamma}g\in\hF_e^\gamma$ that for $\forall \tau\ge0$,  \begin{equation}\label{recon_formula}
				g(t+\tau) = U^\tau|_{\hF_e^\gamma} \sum_{k=0}^{M-1} c_k(0) g(t+kT_s) = \sum_{k=0}^{M-1} c_k(\tau) g(t+kT_s).
			\end{equation} 
			Then $g(\tau)$ is represented by its samples by letting $t=0$.
		\end{proof}
		\begin{remark}[The comparison with shift-invariant (SI) space]
			The reconstruction formula \eqref{recon_formula} seems to be similar with that of signals in SI spaces, i.e., $g(t)=\sum_{k\in\bZ}c_k\varphi(t-k),$ where $\varphi(t)$ is a generating function satisfying Riesz basis condition and the partition of unity \cite{unser2000sampling}. However, these two formula essentially reconstruct signals from two different perspectives. Specifically, SI spaces are spanned by known basis functions $\{\varphi(t-k)\}_{k\in\bZ}$, which to some extent leverages signal structure. Then the goal of signal reconstruction is to recover coefficients $\{c_k\}_{k\in\bZ}$ from samples $\{g(kT_s)\}_{k\in\bZ}$. In contrast, the formula \eqref{recon_formula} is built on obtaining the evolution $U^\tau g(t)= g(t+\tau)$ for a given $t=0$. Leveraging the property that  $g(t+\tau)\in\sspan_{k=0,\ldots,M-1}\{g(t+kT_s)\}$, the values of time-delay functions $\{g(kT_s)\}_{k\in\bZ}$ at $t=0$ are available samples. Then the values of samples $\{g(kT_s)\}_{k=0}^{M-1}$ become coefficients and $\{c_k(\tau)\}_{k=1}^{M-1}$, governed by the Koopman operator, become basis functions in the reconstruction formula. Therefore, this method reconstructs signals by recovering basis functions $\{c_k(\tau)\}_{k=1}^{M-1}$ by the Koopman operator, with known coefficients $\{g(kT_s)\}_{k=0}^{M-1}$. 
		\end{remark}
		Here we further provide a criterion to determine whether this requirement (Theorem \ref{recons_requrie}) can be satisfied or not when $\hF_e^\gamma$ is finite-dimensional. 
		
		\begin{proposition}[Criterion of time-delay basis]\label{criterion}
			Given finite-dimensional generator-bounded space $\hF_e^\gamma$ with a set of basis function $\Phi = [\varphi_0,\ldots,\varphi_{M-1}]$. The time-delay functions $\{g(t+kT_s)\}_{k=0}^{M-1}$ can be basis functions of $\hF_e^\gamma$ if and only if \begin{equation*}
				\mathrm{rank}[\lambda I-U_M ,\bm a]=M
			\end{equation*} for every eigenvalue $\lambda$ of $U_M$, where $U_M$ denotes the matrix representation of the Koopman operator $U^{T_s}|_{\hF_e^\gamma}$, and $\bm a = [a_0,\ldots,a_{M-1}]^T$ denotes the coordinate of $g$, i.e., $U^{T_s}|_{\hF_e^\gamma} \Phi = \Phi U_M, g = \Phi \bm a$.
		\end{proposition}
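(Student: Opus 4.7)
The plan is to recognize this as the classical controllability (Popov–Belevitch–Hautus) rank test in disguise. The first step is to translate the linear-independence question for the time-delay functions $\{g(t+kT_s)\}_{k=0}^{M-1}$ into a statement about vectors in $\bC^M$. Using $g=\Phi\bm a$ together with $U^{T_s}|_{\hF_e^\gamma}\Phi=\Phi U_M$, I would compute by induction that
\begin{equation*}
g(t+kT_s)=U^{kT_s}|_{\hF_e^\gamma}g=\Phi\,U_M^{k}\bm a,\qquad k=0,1,\ldots,M-1.
\end{equation*}
Since $\Phi$ is itself a basis of $\hF_e^\gamma$, the coordinate map $\Phi\bm v\mapsto\bm v$ is an isomorphism onto $\bC^M$, so $\{g(t+kT_s)\}_{k=0}^{M-1}$ is a basis of $\hF_e^\gamma$ if and only if the vectors $\bm a,U_M\bm a,\ldots,U_M^{M-1}\bm a$ are linearly independent in $\bC^M$, i.e.\ the Krylov matrix $[\bm a,\,U_M\bm a,\,\ldots,\,U_M^{M-1}\bm a]$ has rank $M$.

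The second step is to identify this Krylov/rank condition with the controllability of the single-input pair $(U_M,\bm a)$ and invoke the PBH rank test: $(U_M,\bm a)$ is controllable if and only if
\begin{equation*}
\mathrm{rank}[\lambda I-U_M,\,\bm a]=M\quad\text{for every }\lambda\in\sigma(U_M).
\end{equation*}
I would state this as a classical fact with a citation, but sketch the idea for completeness. For the ``only if'' direction, assume some eigenvalue $\lambda_0$ makes the rank drop; then there is a nonzero left eigenvector $\bm w^{*}(\lambda_0 I-U_M)=0$ with $\bm w^{*}\bm a=0$, which gives $\bm w^{*}U_M^{k}\bm a=\lambda_0^{k}\bm w^{*}\bm a=0$ for every $k$, contradicting full rank of the Krylov matrix. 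For the ``if'' direction, one shows contrapositively that if the Krylov matrix is rank-deficient, its left null-space contains an eigenvector of $U_M^{*}$ orthogonal to $\bm a$, producing a $\lambda$ where $[\lambda I-U_M,\bm a]$ drops rank.

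Assembling the two steps gives the claimed equivalence. The main conceptual obstacle is just spotting that the basis property is exactly cyclicity of the vector $\bm a$ under $U_M$; once that is noted, the proof reduces to an application of PBH, with the only bookkeeping being to verify that the matrix representation convention $U^{T_s}|_{\hF_e^\gamma}\Phi=\Phi U_M$ is used consistently so that powers on the operator side translate cleanly to powers of $U_M$ acting on $\bm a$. I would also briefly remark that this criterion can be checked in practice by computing the eigenvalues of $U_M$ and testing the $M\times(M+1)$ augmented rank at each, avoiding the potentially ill-conditioned direct computation of the Krylov determinant.
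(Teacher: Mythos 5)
Your proposal is correct and follows essentially the same route as the paper: reduce the basis property of the time-delay functions to the Krylov rank condition $\mathrm{rank}[\bm a, U_M\bm a,\ldots,U_M^{M-1}\bm a]=M$ via the matrix representation, then invoke the controllability/PBH equivalence (which the paper simply cites and you additionally sketch).
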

		\begin{proof}
			The functions $\{g(t+kT_s)\}_{k=0}^{M-1}$ are basis functions of $\sspan_{k=0}\{\varphi_k, \ldots, \varphi_{M-1}\}$ if and only if $\{g(t+kT_s)\}_{k=0}^{M-1}$ are linearly independent. Since $\hF_e^\gamma$ is invariant under the action of Koopman operator, given basis functions $\Phi = [\varphi_0,\ldots,\varphi_{M-1}]$ and signal $g = \Phi \bm a$, we have \begin{equation*}
				g(t+T_s) = U^{T_s}|_{\hF_e^\gamma} g(t) = \Phi(t) U_M \bm a,
			\end{equation*} where $U_M\in\bR^{M\times M}$ denotes the matrix representation of $U^{T_s}|_{\hF_e^\gamma}$, $\bm a \in\bR^{M\times 1}$ denotes the coordinate of $g$ on the basis $\Phi$. Then the linearly independence of $\{g(t+kT_s)\}_{k=0,\ldots,M-1}$ is equivalent to \begin{equation}\label{rank}
				\mathrm{rank} [\bm a, U_M \bm a, \ldots, U_M^{M-1}\bm a] = M.
			\end{equation}
			\bl{This requirement is equivalent to ensuring that the linear system defined by $U_M$ and $\bm a$, i.e., $\dot{\bx}(t) = U_M \bx(t)+ \bm a u(t)$, is controllable \cite[Theorem 12.1, Page 149]{hespanha2018linear}. Based on the Popov-Belevitch-Hautus test for controllability \cite[Theorem 12.3, Page 152]{hespanha2018linear}}, the necessary and sufficient condition of \eqref{rank} is $\mathrm{rank}[\lambda I-U_M, \bm a] = M$ for every eigenvalue $\lambda$ of $U_M$.
		\end{proof}
		Although the criterion for finite-dimensional $\hF_e^\gamma$ is proposed by Proposition \ref{criterion}, it is still challenging to derive a similar result for signals in infinite-dimensional space. In particular, the Popov-Belevitch-Hautus test used in the proof of Proposition \ref{criterion} \bl{remains an} unsolved problem for infinite-dimensional systems in the mathematical control field \cite{blondel2009unsolved}.

		To further illustrate the Koopman operator-based sampling theorem, we present two spaces of signals as examples of infinite-dimensional and finite-dimensional generator-bounded space $\hF_e$ in the following, of which non-periodic and periodic band-limited signals are proved to be their special cases.
		
		\section{Infinite-dimensional signal space}\label{sec:infinitespace}
		Since Laplace transform is both a powerful tool in signal processing and a natural extension from Fourier transform, here we consider \bl{inverse Laplace-type} signal $g(t) = \frac{1}{2\pi}\int_{-c}^{c}G(\omega)e^{(\alpha+i\omega) t}{\rm d}\omega$, where $G(\omega)\in L^2[-c,c], c\ge 0$, as an example in infinite-dimensional $\hF_e$, which is common in the field of engineering. 
		\subsection{Inverse Laplace-type signal}\label{sec:invers}	
		By the definition of the signal $g(t)$, it belongs to the space $\left\{g(t) = \frac{1}{2\pi}\int_{-c}^{c}G(\omega)e^{(\alpha+i\omega) t}{\rm d}\omega:G(\omega)\in L^2[-c, c]\right\}$. To prove that this space is a generator-bounded space, we first introduce a lemma.

		\begin{lemma}[Parseval’s Theorem]\label{lemma7}
			For the function $g_0(t) = \frac{1}{2\pi}\int_{-c}^{c}G(\omega)e^{i\omega t}{\rm d}\omega$, we have \begin{align*}
				\int_{-\infty}^{\infty}|g_0(t)|^2{\rm d}t = \frac{1}{2\pi}\bl{\int_{-c}^{c}}|G(\omega)|^2{\rm d}\omega,
			\end{align*} where $G(\omega)$ is the Fourier transform of $g_0(t)$.
		\end{lemma}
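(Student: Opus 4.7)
The plan is to reduce the statement to the classical Plancherel identity on all of $\mathbb{R}$. First I would extend the frequency domain function: define $\tilde G:\bR\to\bC$ by $\tilde G(\omega) = G(\omega)$ for $\omega\in[-c,c]$ and $\tilde G(\omega)=0$ otherwise. Since $G\in L^2[-c,c]$, we immediately have $\tilde G\in L^2(\bR)$ with $\int_{\bR}|\tilde G(\omega)|^2\,{\rm d}\omega = \int_{-c}^{c}|G(\omega)|^2\,{\rm d}\omega$, and moreover $g_0(t) = \frac{1}{2\pi}\int_{\bR}\tilde G(\omega)e^{i\omega t}{\rm d}\omega$, i.e.\ $g_0$ is exactly the inverse Fourier transform of $\tilde G$.

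Next I would invoke the classical Plancherel theorem for $L^2(\bR)$: under the convention $\hat f(\omega) = \int_{\bR} f(t)e^{-i\omega t}{\rm d}t$ and $f(t)=\frac{1}{2\pi}\int_{\bR}\hat f(\omega)e^{i\omega t}{\rm d}\omega$, one has
\begin{equation*}
\int_{\bR}|f(t)|^2{\rm d}t \;=\; \frac{1}{2\pi}\int_{\bR}|\hat f(\omega)|^2{\rm d}\omega.
\end{equation*}
Applying this with $\hat f = \tilde G$ and $f = g_0$ yields the claimed identity directly, since the right-hand side collapses to $\frac{1}{2\pi}\int_{-c}^{c}|G(\omega)|^2{\rm d}\omega$ by construction of $\tilde G$. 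The band-limited structure of $\tilde G$ also guarantees that $g_0\in L^2(\bR)$, so both sides are finite.

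If a self-contained derivation is preferred rather than citing Plancherel, the alternative is a direct computation: write $|g_0(t)|^2 = g_0(t)\overline{g_0(t)}$ as a double integral over $[-c,c]\times[-c,c]$, apply Fubini (justified because $\tilde G\in L^1\cap L^2$, so $g_0$ is bounded and the integrand is absolutely integrable after truncation and a limiting argument), and use the orthogonality relation $\int_{\bR} e^{i(\omega-\omega')t}{\rm d}t = 2\pi\,\delta(\omega-\omega')$ in the distributional sense to collapse one frequency integral. The only subtlety worth flagging is this last step, because the $t$-integral is only conditionally convergent; the cleanest workaround is the truncate-and-take-limits argument in $L^2$, which is precisely what Plancherel packages. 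For that reason I would opt for the first route and simply cite Plancherel, keeping the lemma's proof to a few lines.
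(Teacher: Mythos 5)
Your proposal is correct: the zero-extension of $G$ to $\tilde G\in L^2(\bR)$ followed by the classical Plancherel identity (with the $1/2\pi$ convention matching the paper's inverse-transform normalization) gives exactly the stated equality, and your choice to avoid the distributional $\delta$-function computation is sound. The paper treats this lemma as the classical Parseval/Plancherel theorem and offers no separate proof, so your argument is essentially the same (implicit) route made explicit.
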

		
		\begin{proposition}[Inverse Laplace-type signal]\label{non-bandlimited}
			The space of inverse Laplace-type signals \begin{equation}\label{inverse}
				\hF_e = \left\{g(t) = \frac{1}{2\pi}\int_{-c}^{c}G(\omega)e^{(\alpha+i\omega) t}{\rm d}\omega:G(\omega)\in L^2[-c, c]\right\},
			\end{equation}where $c>0,\alpha\in\bR$, is a generator-bounded space.
		\end{proposition}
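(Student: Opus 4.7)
The plan is to verify directly the three conditions in Definition \ref{space}: that $\hF_e$ is a separable Banach space, that $\hF_e$ is invariant under $U^\tau$, and that $L|_{\hF_e}$ is bounded. The key observation, which motivates invoking Lemma \ref{lemma7}, is that every $g\in\hF_e$ factors as $g(t)=e^{\alpha t}g_0(t)$ with $g_0(t)=\frac{1}{2\pi}\int_{-c}^{c}G(\omega)e^{i\omega t}{\rm d}\omega\in L^2(\bR)$, so the map $G\mapsto g$ identifies $\hF_e$ isometrically (up to a multiplicative constant) with the Hilbert space $L^2[-c,c]$ via Parseval's theorem applied to $g_0$.

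First I would equip $\hF_e$ with the norm $\|g\|:=\|G\|_{L^2[-c,c]}$. Injectivity of $G\mapsto g$ follows from the uniqueness of Fourier inversion applied to the stripped function $g_0$, so this map is an isometric bijection onto $L^2[-c,c]$, which is itself a separable Banach space; hence condition (1) of Definition \ref{space} is inherited directly.

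For invariance under $U^\tau$, substituting $t\mapsto t+\tau$ inside the defining integral gives
\begin{equation*}
g(t+\tau)=\frac{1}{2\pi}\int_{-c}^{c}\bigl[e^{(\alpha+i\omega)\tau}G(\omega)\bigr]e^{(\alpha+i\omega)t}{\rm d}\omega,
\end{equation*}
and since $|e^{(\alpha+i\omega)\tau}|=e^{\alpha\tau}$ is bounded on $[-c,c]$, the new coefficient $e^{(\alpha+i\omega)\tau}G(\omega)$ remains in $L^2[-c,c]$, so $U^\tau g\in\hF_e$.

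For the boundedness of the generator, differentiating under the integral sign (justified because $(\alpha+i\omega)G(\omega)e^{(\alpha+i\omega)t}$ is dominated by an integrable envelope uniformly on compact $t$-sets) produces
\begin{equation*}
(Lg)(t)=\frac{1}{2\pi}\int_{-c}^{c}(\alpha+i\omega)G(\omega)e^{(\alpha+i\omega)t}{\rm d}\omega,
\end{equation*}
so $Lg\in\hF_e$ with coefficient $(\alpha+i\omega)G(\omega)$. The estimate $\|Lg\|^2=\int_{-c}^{c}(\alpha^2+\omega^2)|G(\omega)|^2{\rm d}\omega\le(\alpha^2+c^2)\|g\|^2$ then yields $\|L|_{\hF_e}\|\le\sqrt{\alpha^2+c^2}<\infty$. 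The one delicate point will be the choice of norm: using the $L^2(\bR)$-norm of $g$ itself would fail whenever $\alpha\neq 0$, since $g$ need not be square integrable, and it is precisely to transport the $L^2$ structure onto the coefficient side that Lemma \ref{lemma7} is needed.
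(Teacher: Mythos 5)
Your proposal is correct and follows essentially the same route as the paper: factor $g(t)=e^{\alpha t}g_0(t)$, use Parseval (Lemma \ref{lemma7}) to transfer the norm off the non-square-integrable signal, verify invariance by absorbing $e^{(\alpha+i\omega)\tau}$ into the coefficient, and bound the generator by $\sqrt{c^2+\alpha^2}$. The only cosmetic difference is that you place the norm directly on $G\in L^2[-c,c]$, whereas the paper uses the weighted $L^2(e^{-2\alpha t})$ norm of $g$ itself; by Parseval these agree up to a constant factor, so the arguments are equivalent.
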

		\begin{proof}
			We first prove that $\hF_e$ given by \bl{\eqref{inverse}} is a Hilbert space. For $\forall g\in\hF_e$, we can write it as follows
			\begin{equation*}
				g(t) = e^{\alpha t}g_0(t),
			\end{equation*} where $g_0(t)$ belongs to the traditional band-limited space, i.e.,
			\begin{equation*}
				\left\{g_0(t) = \frac{1}{2\pi}\int_{-c}^{c}G(\omega)e^{i\omega t}{\rm d}\omega:G(\omega)\in L^2[-c,c]\right\}\subset L^2.
			\end{equation*}
			Hence, the space \bl{\eqref{inverse}} is a Hilbert space with the weighted $L^2(e^{-2\alpha t})$ norm $\|\cdot\|$.
			Then we show that the Koopman operaor $U^\tau$ is invariant on $\hF_e$. For $\forall \tau>0, \forall g\in\hF_e$, we have \begin{align*}
				U^\tau  g(t) = g\circ S^\tau (t) = \bl{\frac{1}{2\pi}}\int_{-c}^{c} \bl{G(\omega)}e^{(\alpha+i\omega)\tau}e^{(\alpha+i\omega)t}{\rm d}\omega. 
			\end{align*} It follows that $U^\tau  g\in \hF_e$ because $e^{(\alpha+i\omega)\tau}\bl{G(\omega)}\in L^2[-c,c]$. 
			Here we prove that the restriction of the generator $L|_{\hF_e}$ is bounded. It follows \bl{from} the definition of the generator that $L|_{\hF_e} g(t) = g^{(1)}(t)$, where $g^{(1)}(t)$ denotes the derivative, i.e.,
			\begin{equation*}
				g^{(1)}(t) = g^{(1)}_0 (t)e^{\alpha t} + \alpha e^{\alpha t}g_0(t)
			\end{equation*} Hence, we have $g^{(1)}(t)e^{-\alpha t} = g^{(1)}_0 (t) + \alpha g_0(t)$. The Fourier transform of $g^{(1)}(t)e^{-\alpha t}$ is $(i\omega+\alpha)G(\omega)$, where $G(\omega)$ is the Fourier transform of $g_0(t) = \frac{1}{2\pi}\int_{-c}^{c}G(\omega)e^{i\omega t}{\rm d}\omega$. It follows \bl{from} Lemma \ref{lemma7} that
			\begin{equation}\label{limit_derivative}
				\begin{aligned}
					&\|g_0^{(1)}+\alpha g_0\|_2^2=\frac{1}{2\pi}\int_{-c}^{c}(\omega^2+\alpha^2)|G(\omega)|^2 {\rm d}\omega\\
					&\le\frac{(c^2+\alpha^2)}{2\pi}\int_{-c}^{c}|G(\omega)|^2 {\rm d}\omega=(c^2+\alpha^2)\|g_0\|_2^2,
				\end{aligned} 
			\end{equation} where $\|\cdot\|_2$ denotes the $L^2$ norm. 
			Then we have \begin{align*}
				\|L|_{\hF_e}\| &= \sup_{\|g\|=1}\left\|L|_{\hF_e}g\right\|=\sup_{\|g_0\|_2=1}\|g^{(1)}e^{-\alpha t}\|_2
				\\
				& = \sup_{\|g_0\|_2=1}\|g_0^{(1)}+\alpha g_0\|_2\le \sqrt{c^2+\alpha^2}.
			\end{align*}
		\end{proof}	
		Based on this Proposition, the space of band-limited signal $g_0(t)$ is a special case of generator-bounded space ($\alpha=0$).
		
		\bl{\begin{remark}[Other examples of generator-bounded space]
				Besides the space of inverse Laplace-type signals, there are other types of infinite-dimensional generator-bounded space. For example, the Zakai's class of signals forms a generator-bounded space, i.e.,
				\begin{equation*}
					\begin{aligned}
						\hF(c,\delta) = \left\{g = g * h: g\in L^2\left((1+t^2)^{-1}\right), \right. \\ \left.\ h(t) = \frac{1}{2\pi} \int_{-\infty}^{\infty}H(\omega)e^{i\omega t}{\rm d}\omega\right\},
					\end{aligned}
				\end{equation*} where 
				\begin{equation*}
					H(\omega) = \left\{\begin{matrix}
						&1~ &|\omega|\le c,\\
						&1-\frac{|\omega|-c}{\delta}~& c< |\omega|\le c+\delta,\\
						&0~&|\omega|>c+\delta.
					\end{matrix}
					\right.
				\end{equation*}
				The proof can be found in the Appendix \ref{otherexamples} (Proposition \ref{generalized_zakai}). Moreover, based on this result, its generalized version, i.e., \begin{equation*}
					\begin{aligned}
						\hF(c,\delta,\alpha) = \{g(t) = e^{\alpha t} g_0(t): g_0\in \hF(c,\delta),\alpha\in\bR\},
					\end{aligned}
				\end{equation*} can also be proved as a generator-bounded space in a similar manner by considering the norm $\|g\| = \sqrt{\int_{-\infty}^{\infty} \frac{|g(t)e^{-\alpha t}|^2}{1+t^2}{\rm d}t}$.
		\end{remark}}
		
		\subsection{Sampling and reconstruction for inverse-Laplace of signals}\label{sec:reconstruction_inverse}
		In the following, we analyze the Koopman spectrum $\sigma(L|_{\hF_e})$ for the space $\hF_e$ given by \eqref{inverse}.
		\begin{proposition}\label{invers_spectrum}
			Consider the generator-bounded space $\hF_e$ given by \eqref{inverse}. The Koopan spectrum is $\sigma(L|_{\hF_e}) = \alpha+i[-c,c]$.
		\end{proposition}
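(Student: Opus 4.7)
The plan is to exploit the concrete integral representation of elements in $\hF_e$ to diagonalize the generator $L|_{\hF_e}$, reducing the spectral question to that of a multiplication operator on $L^2[-c,c]$. Specifically, as noted in the proof of Proposition \ref{non-bandlimited}, writing $g(t) = e^{\alpha t} g_0(t)$ with $g_0$ band-limited and using the weighted norm, the Fourier coefficient map $\Psi: G \mapsto g$ from $L^2[-c,c]$ onto $\hF_e$ is an isometry (up to the $2\pi$ factor) by Parseval's theorem (Lemma \ref{lemma7}). Under this identification I would transport the generator.

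First, I would differentiate under the integral sign to obtain
\begin{equation*}
L|_{\hF_e} g(t) = g^{(1)}(t) = \frac{1}{2\pi}\int_{-c}^{c}(\alpha+i\omega) G(\omega) e^{(\alpha+i\omega)t}{\rm d}\omega,
\end{equation*}
which shows that $\Psi^{-1} L|_{\hF_e} \Psi = M_h$, where $M_h$ is the multiplication operator by $h(\omega) = \alpha + i\omega$ on $L^2[-c,c]$. Since the spectrum is preserved under similarity, it suffices to determine $\sigma(M_h)$.

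Second, I would verify the two spectral inclusions directly. For $\lambda \notin \alpha + i[-c,c]$, the function $h(\omega) - \lambda$ is bounded away from zero on $[-c,c]$, so the operator $G(\omega) \mapsto G(\omega)/(h(\omega)-\lambda)$ is a bounded inverse of $M_h - \lambda I$, giving $\lambda \notin \sigma(M_h)$. For $\lambda = \alpha + i\omega_0$ with $\omega_0 \in [-c,c]$, I would construct an approximate eigensequence: choose normalized bumps $G_n$ concentrated in a shrinking neighborhood of $\omega_0$, for which $\|(M_h - \lambda I) G_n\| \to 0$ while $\|G_n\| = 1$, ruling out a bounded inverse. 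These two steps together give $\sigma(M_h) = \alpha + i[-c,c]$, hence $\sigma(L|_{\hF_e}) = \alpha + i[-c,c]$.

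The main subtlety is not the computation itself but ensuring that the isometry $\Psi$ and the differentiation-under-the-integral step are rigorously justified with the chosen weighted $L^2(e^{-2\alpha t})$ norm, so that the similarity $\Psi^{-1} L|_{\hF_e} \Psi = M_h$ really holds as an equality of bounded operators (not just on a dense subset). Since $L|_{\hF_e}$ is already known to be bounded from Proposition \ref{non-bandlimited} and $M_h$ is obviously bounded on $L^2[-c,c]$ with $\|M_h\|\le \sqrt{c^2+\alpha^2}$, this is a routine check, and no compact/self-adjoint theory is needed beyond the essential range characterization of multiplication operator spectra.
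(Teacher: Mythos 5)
Your proof is correct, and it reaches the result by a slightly different route than the paper. The paper works directly on $\hF_e$: it shows that $\lambda I-L|_{\hF_e}$ is surjective precisely when $\lambda\notin\alpha+i[-c,c]$ (by observing that solving $(\lambda I-L|_{\hF_e})\hat g=g$ forces $\hat G(\omega)=G(\omega)/(\lambda-\alpha-i\omega)$, which lies in $L^2[-c,c]$ for every $G$ exactly when $\lambda$ avoids the band), and then separately checks injectivity for such $\lambda$ by noting that $(\lambda I-L|_{\hF_e})g=0$ would force $g(t)=e^{\lambda t}g(0)$, incompatible with the representation \eqref{inverse} unless $g=0$; bijectivity then identifies the resolvent set. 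You instead make the underlying mechanism explicit as a similarity: the (suitably normalized) map $\Psi:G\mapsto g$ is an isomorphism of $L^2[-c,c]$ onto $\hF_e$ with the weighted norm, conjugation turns $L|_{\hF_e}$ into multiplication by $h(\omega)=\alpha+i\omega$, and the spectrum is the essential range of $h$, proved by the bounded-inverse argument off the band and by Weyl-type approximate eigensequences on it. The two arguments rest on the same Fourier-domain resolvent formula, but your version buys a cleaner logical structure: the Weyl-sequence step shows directly that no bounded inverse exists for $\lambda$ in the band (the paper only argues failure of surjectivity), and the multiplication-operator picture handles injectivity and bounded invertibility in one stroke, without the separate ODE argument or an implicit appeal to the open mapping theorem; it also makes visible that the band consists of approximate point spectrum rather than eigenvalues. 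The paper's version, in turn, stays entirely within the signal space and requires no transport of the operator. Your two flagged technical points (justifying differentiation under the integral, and that $\Psi$ intertwines the operators on all of $\hF_e$, not just a dense subset) are indeed the only things to check, and both are routine here since the frequency interval is compact and both operators are bounded.
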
 
		\begin{proof}
			The proof will be given in Appendix \ref{invers_spectrum_proof}.
		\end{proof}
		According to on Proposition \ref{invers_spectrum} and Theorem \ref{thKs}, the sampling bound of the inverse Laplace-type signal can be directly obtained, i.e., $\omega_s>2 c$ (rad/s). 
		In fact, the imaginary part of $\sigma(L|_{\hF_e})$ is closely related to the Fourier frequency. In particular, when $\alpha = 0$, the inverse Laplace-type signal is reduced to the non-periodic band-limited signal $g(t) = \frac{1}{2\pi}\int_{-c}^c G(\omega)e^{i\omega t}{\rm d}\omega$, where $G(\omega)\in L^2[-c,c]$. In this case, the Koopman spectrum is restricted to the imaginary axis and is consistent with the Fourier spectrum, i.e., $\sigma(L|_{\hF_e})= i[-c,c]$. Then the conclusion of Theorem \ref{thKs} is consistent with Nyquist-Shannon sampling theorem.
		
		Now we proceed to present the reconstruction formula of this type of signals from the Koopman operator perspective.
		
		\begin{proposition}\label{invers_recons}
			Consider an inverse Laplace-type signal $g(t) = \frac{1}{2\pi}\int_{-c}^{c}G(\omega)e^{(\alpha+i\omega) t}{\rm d}\omega$, where $G(\omega)\in L^2[-c, c],c>0,\alpha\in\bR$. Given $T_s$ satisfying Theorem \ref{thKs} (i.e., $T_s<\pi/c$), we have \begin{equation}\label{inverse_recons}
				g(\tau) = \sum_{k=-\infty}^{\infty}g(kT_s) e^{\alpha(\tau-kT_s)}\frac{\sin(\pi/T_s(\tau-kT_s))}{\pi/T_s(\tau-kT_s)}.
			\end{equation}
		\end{proposition}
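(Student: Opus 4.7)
The plan is to reduce the problem to the classical Nyquist-Shannon reconstruction by isolating the exponential envelope. The key observation is that any signal in $\hF_e$ from \eqref{inverse} factors as $g(t) = e^{\alpha t} g_0(t)$, where
$$g_0(t) = \frac{1}{2\pi} \int_{-c}^{c} G(\omega) e^{i \omega t}\, {\rm d}\omega$$
is a standard band-limited signal whose Fourier spectrum is supported in $[-c, c]$.

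First, I would apply the classical Nyquist-Shannon formula \eqref{bandlimited_recons} to $g_0$, which is justified because the hypothesis $T_s < \pi/c$ is precisely the Nyquist condition at maximum angular frequency $c$. This yields the standard sinc interpolation of $g_0$ from its samples $\{g_0(kT_s)\}_{k\in\bZ}$. Next, I would substitute $g_0(kT_s) = e^{-\alpha k T_s} g(kT_s)$ into the interpolation series and multiply both sides by $e^{\alpha \tau}$ to recover $g(\tau)$; collecting the exponentials into $e^{\alpha(\tau - kT_s)}$ produces the claimed identity in a line or two.

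I do not anticipate a serious analytic obstacle: pointwise convergence is inherited directly from the classical theorem applied to $g_0 \in L^2$ with compactly supported Fourier transform, and the multiplication by the scalar $e^{\alpha \tau}$ does not disturb it. A conceptually cleaner alternative, more in keeping with the Koopman-operator viewpoint of the paper, is to split the generator as $L|_{\hF_e} = \alpha I + L_0$, where by Proposition \ref{invers_spectrum} the operator $L_0$ has purely imaginary spectrum $i[-c,c]$. The decomposition
$$g(\tau) = e^{\tau L|_{\hF_e}} g(0) = e^{\alpha \tau}\, e^{\tau L_0} g(0)$$
then manifestly separates exponential growth from oscillatory dynamics, and the Koopman evolution restricted to the band-limited subspace $e^{\tau L_0}$ reproduces the sinc kernel. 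Both routes converge on the same closed-form formula, so the only real content of the proof is the factorization and a clean algebraic rearrangement.
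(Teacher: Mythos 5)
Your proposal is correct: the factorization $g(t) = e^{\alpha t} g_0(t)$ with $g_0$ band-limited to $[-c,c]$, followed by the classical cardinal series \eqref{bandlimited_recons} applied to $g_0$ (valid pointwise for $G\in L^2[-c,c]$ when $T_s<\pi/c$) and the substitution $g_0(kT_s)=e^{-\alpha kT_s}g(kT_s)$, yields exactly \eqref{inverse_recons}, and multiplying by the scalar $e^{\alpha\tau}$ does not disturb convergence. The paper takes a related but self-contained route: it writes $U^\tau g(t)=\frac{1}{2\pi}\int_{-c}^{c}G(\omega)e^{(\alpha+i\omega)t}e^{(\alpha+i\omega)\tau}{\rm d}\omega$ and expands $e^{i\tau\omega}$ in a Fourier series on $[-\pi/T_s,\pi/T_s]$ (Papoulis' argument), which gives the identity $g(t+\tau)=\sum_{k}c_k(\tau)e^{\alpha(\tau-kT_s)}g(t+kT_s)$ for \emph{every} $t$ before setting $t=0$. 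So instead of quoting the Shannon formula as a black box, the paper re-derives the sinc kernel in situ; the payoff is that it exhibits $U^\tau g$ as a linear combination of the time-delay functions $g(t+kT_s)$, i.e., it instantiates the Koopman reconstruction formula \eqref{recon_formula} for this signal class, which is the structural point the surrounding sections rely on. Your route is shorter and more elementary given that \eqref{bandlimited_recons} is already stated in the preliminaries, and your operator-splitting remark $L|_{\hF_e}=\alpha I + L_0$ is the same factorization expressed at the semigroup level; both arguments reduce to the same algebraic rearrangement, and I see no gap in either.
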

		\begin{proof}
			We have \begin{align}\label{Utg}
				U^\tau g(t) = g(t+\tau) =  \frac{1}{2\pi}\int_{-c}^{c}G(\omega)e^{(\alpha+i\omega) t}e^{(\alpha+i\omega) \tau}{\rm d}\omega.
			\end{align} The function $e^{i\tau \omega}$ can be expressed as a Fourier series on $[-\pi/T_s,\pi/T_s]$, i.e.,
			\begin{align}\label{eitau}
				e^{i\tau \omega} = \sum_{k=-\infty}^\infty c_k(\tau) e^{ikT_s\omega},
			\end{align} where the coefficient is $c_k(\tau) = \frac{\sin(\pi/T_s(\tau-kT_s))}{\pi/T_s(\tau-kT_s)}$. Since $[-c,c]\subset [-\pi/T_s,\pi/T_s]$ by Theorem \ref{thKs}, substituting \eqref{eitau} into \eqref{Utg} gives 
			\begin{align*} 
				g(t+\tau)&=\sum_{k=-\infty}^\infty  e^{\alpha(\tau-kT_s)}\frac{c_k(\tau)}{2\pi}\int_{-c}^{c}G(\omega)e^{(\alpha+i\omega) (t+kT_s)}  {\rm d}\omega\\
				&= \sum_{k=-\infty}^\infty c_k(\tau)e^{\alpha(\tau-kT_s)}g(t+kT_s).
			\end{align*} Then the reconstruction formula can be obtained by letting $t=0$, i.e., $\forall \tau>0,$ \begin{equation*}
				g(\tau) = \sum_{k=-\infty}^{\infty} \frac{\sin(\pi/T_s(\tau-kT_s))}{\pi/T_s(\tau-kT_s)} e^{\alpha(\tau-kT_s)}g(kT_s).
			\end{equation*}  
		\end{proof}
		This proof is similar to Papoulis' proof \cite{marks2009handbook}, and the formula \eqref{inverse_recons} can be reduced to the classical form \eqref{bandlimited_recons} for band-limited signals ($\alpha = 0$). Moreover, the reconstruction formula \bl{of a generalized version of Zakai's class \cite{cambanis1976zakai} can also be derived similarly (see Appendix \ref{zakai})}. Based on Proposition \ref{invers_recons}, the truncation error of inverse Laplace-type signal is given as follows.

		\begin{proposition}[Truncation error]
			Consider $N>0,$ \bl{$T_s <\pi/c,$} and the truncation error of $g(t)$ given by \begin{equation}
				e(t) = g(t) - \sum_{n=-N}^{N}g(nT_s)\frac{e^{\alpha (t-nT_s)}\sin((\pi/T_s)(t-nT_s))}{(\pi/T_s)(t-nT_s)}.
			\end{equation} Let $K(t)$ be an integer nearest to the truncation error observation time $t$, i.e., $t/T_s-1/2 < K(t) < t/T_s + 1/2$, and $N_1$ and $N_2$ be the number of samples available to the left and to the right of $K(t)$ in the finite approximation. We have \begin{equation}\label{bound}
				|e(t)| \le \frac{2(Ec/\pi)^{1/2}e^{\alpha t}|\sin(\pi t/T_s )|}{\pi(\pi-cT_s)}\left(\frac{1}{N_1}+\frac{1}{N_2}\right ),
			\end{equation} where $E =  \int_{-\infty}^\infty |g_0(t)|^2{\rm d}t<\infty$.
		\end{proposition}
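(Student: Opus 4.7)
The plan is to reduce this truncation bound for an inverse Laplace-type signal to the classical truncation estimate for ordinary band-limited signals, by factoring out the exponential $e^{\alpha t}$ that was already exploited in the proof of Proposition~\ref{non-bandlimited}. Writing $g(t)=e^{\alpha t}g_0(t)$ with the auxiliary band-limited signal $g_0(t)=\frac{1}{2\pi}\int_{-c}^{c}G(\omega)e^{i\omega t}{\rm d}\omega$, Lemma~\ref{lemma7} gives $\|g_0\|_2^2=E<\infty$, so $g_0$ is square-integrable and strictly band-limited to $[-c,c]$.

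I would first note that $g(nT_s)\,e^{\alpha(t-nT_s)}=e^{\alpha t}g_0(nT_s)$, which makes the exponential factor drop out of every term of the truncated interpolant and yields $e(t)=e^{\alpha t}\,e_0(t)$, where
\begin{equation*}
e_0(t)=g_0(t)-\sum_{n=-N}^{N}g_0(nT_s)\frac{\sin((\pi/T_s)(t-nT_s))}{(\pi/T_s)(t-nT_s)}
\end{equation*}
is the truncation error of the classical Shannon reconstruction of $g_0$. Since $T_s<\pi/c$, $g_0$ is strictly oversampled and admits the exact Shannon expansion on $\bZ$, so $e_0(t)$ is precisely the tail sum over $|n|>N$. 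Extracting the common factor $|\sin(\pi t/T_s)|$ via $\sin((\pi/T_s)(t-nT_s))=(-1)^n\sin(\pi t/T_s)$ and splitting the tail around $K(t)$ into the left block ($n<K(t)-N_1$) and the right block ($n>K(t)+N_2$), I would then invoke the standard oversampled-Shannon truncation estimate (Jagerman 1966 / Yao--Thomas 1966) to conclude
\begin{equation*}
|e_0(t)|\le\frac{2(Ec/\pi)^{1/2}|\sin(\pi t/T_s)|}{\pi(\pi-cT_s)}\left(\frac{1}{N_1}+\frac{1}{N_2}\right),
\end{equation*}
and multiplying by $e^{\alpha t}$ immediately recovers \eqref{bound}.

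The hard part is this last step: a naive Cauchy--Schwarz estimate of $\sum|g_0(nT_s)|/|t-nT_s|$ only produces an $O(N^{-1/2})$ rate and misses the $(\pi-cT_s)^{-1}$ prefactor. Achieving the $O(1/N_1+1/N_2)$ decay with the correct constant requires the oversampling guard band $\pi/T_s-c>0$ to be used essentially, typically by replacing the sinc kernel with a smoother band-limited interpolation kernel whose Fourier transform equals $1$ on $[-c,c]$ and vanishes outside $[-\pi/T_s,\pi/T_s]$; the algebraic decay rate of such a kernel is controlled by the guard band and produces both the $1/N$ behaviour and the factor $(\pi-cT_s)^{-1}$. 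Once this classical inequality is in hand, the reduction via the $e^{\alpha t}$ factorisation is routine, and the bound \eqref{bound} follows with no further analysis specific to the inverse-Laplace structure.
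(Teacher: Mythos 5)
Your proposal is correct and follows essentially the same route as the paper's proof: factor the signal as $g(t)=e^{\alpha t}g_0(t)$ so that the truncation error satisfies $e(t)=e^{\alpha t}e_0(t)$, and then invoke the classical Yao--Thomas truncation bound for the oversampled band-limited signal $g_0$ under the energy constraint $E=\int_{-\infty}^{\infty}|g_0(t)|^2\,{\rm d}t<\infty$. The extra commentary you give on how the guard band $\pi/T_s-c$ enters the classical estimate concerns the cited result itself and does not change the argument.
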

		\begin{proof}
			The inverse Laplace-type signal can be written as \begin{equation}\label{gg0}
				g(t) = e^{\alpha t}g_0(t),
			\end{equation} where $g_0(t) = \frac{1}{2\pi}\int_{-c}^c G(\omega)e^{i\omega t}{\rm d}t$ is a band-limited signal. It leads to $
			e(t) 
			= e^{\alpha t} e_0(t),$
			where \begin{equation}
				e_0(t) = g_0(t) - \sum_{n=-N}^{N}g_0(nT_s)\frac{\sin((\pi/T_s)(t-nT_s))}{(\pi/T_s)(t-nT_s)}
			\end{equation} denotes the truncation error of $g_0(t)$ that has been well studied \cite{yao1966truncation, brown1969bounds}. Since $G(\omega)\in L^2[-c,c]$, the bound \eqref{bound} can be obtained by leveraging the bound of $e_0(t)$ under the constraints on signal energy $E =  \int_{-\infty}^\infty |g_0(t)|^2{\rm d}t<\infty$ \cite{yao1966truncation}.
		\end{proof}
		
		\bl{It should be noted that there are still open problems concerning the tight bounds of truncation error. In particular, when $T_s<\pi/c$, the truncation error can be further reduced by introducing sampling windows $f(t)$ that have been investigated for band-limited $g_0(t)$ \cite{pawlak1996recovering}. The reconstruction formula then becomes \begin{equation*}
				g(t) = \sum_{n=-N}^{N}g(nT_s)\frac{e^{\alpha (t-nT_s)}\sin((\pi/T_s)(t-nT_s))}{(\pi/T_s)(t-nT_s)} f(t-kT_s),
			\end{equation*} where $f(t)$ is band-limited to $\pi/T_s-c$ and $f(0) = 1$. 
			To achieve fast convergence of the reconstruction error, extensive research has been conducted on $f(t)$, such as \cite{campbell1968sampling, jagerman1966bounds,
				knab1979interpolation}. }

		\section{Finite-dimensional signal space}\label{finitespace}
		Here we analyze an example of signals that belong to finite-dimensional $\hF_e$. Since polynomials and exponential functions can represent a wide class of functions, we consider linear combinations of polynomial and complex exponential signal $g(t) = \sum_{k=1}^{m}\sum_{l=0}^{b_k} a_{k,l} t^l e^{\lambda_k t}$ where $a_{k,l}\in\bR, a_{k,b_k}\neq 0, \lambda_k\in\bC,b_k\in\bN$. 
		\subsection{Polynomial and exponential signals}\label{sec:poly_exp}
		By the definition of $g(t)$, it belongs to the space $
		\sspan\{g_{1,0}(t) = e^{\lambda_1 t},g_{1,1}(t) = te^{\lambda_1 t},\ldots, g_{1,b_1}(t) = 
		g_{2,0}(t) = e^{\lambda_2 t},\ldots, 
		g_{m,b_m}(t) = t^{b_m}e^{\lambda_m t}$. 
In the following, we prove that it is a generator-bounded space, of which periodic band-limited signals are a special case ($\Re(\lambda_k)=0, b_k=0,\forall k=1.\ldots,m$).

\begin{proposition}[Polynomial and exponential signals]\label{poly_exp}
	The space of polynomial and exponential signals \begin{equation}\label{poly_exp_signal}
		\begin{aligned}
			\hF_e = \sspan\{e^{\lambda_1 t}, te^{\lambda_1 t},\ldots, t^{b_1}e^{\lambda_1 t}, e^{\lambda_2 t},\ldots, 
			t^{b_m}e^{\lambda_m t} \}. 
		\end{aligned}
	\end{equation} is a generator-bounded space.
\end{proposition}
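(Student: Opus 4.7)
The plan is to verify the two defining conditions of a generator-bounded space directly, exploiting the fact that $\hF_e$ is finite-dimensional. First, I would observe that $\hF_e$ is a finite-dimensional linear space spanned by the functions $\{t^l e^{\lambda_k t}\}_{k=1,\ldots,m;\, l=0,\ldots,b_k}$, which are linearly independent (a standard fact for polynomial-exponential families with distinct exponents $\lambda_k$). Equipped with any norm (say the $\ell^2$-norm on the coordinate vector with respect to this basis), $\hF_e$ is automatically a separable Banach space, so the ambient topological requirements are free.

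Next, I would verify Koopman invariance (condition 1 of Definition \ref{space}). For each basis element, I compute
\begin{equation*}
U^\tau(t^l e^{\lambda_k t}) = (t+\tau)^l e^{\lambda_k(t+\tau)} = e^{\lambda_k \tau}\sum_{j=0}^{l}\binom{l}{j}\tau^{l-j} t^j e^{\lambda_k t},
\end{equation*}
which is a finite linear combination of basis elements of $\hF_e$. By linearity, $U^\tau g \in \hF_e$ for every $g \in \hF_e$ and every $\tau \geq 0$, which gives invariance.

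Finally, I would verify boundedness of the generator (condition 2). Differentiating each basis element yields
\begin{equation*}
\frac{{\rm d}}{{\rm d}t}(t^l e^{\lambda_k t}) = l\, t^{l-1} e^{\lambda_k t} + \lambda_k\, t^l e^{\lambda_k t},
\end{equation*}
which again lies in $\hF_e$. Therefore $L|_{\hF_e}$ is a well-defined linear map from $\hF_e$ to itself. Since $\hF_e$ is finite-dimensional and all norms on a finite-dimensional space are equivalent, every linear endomorphism is automatically bounded, so $\|L|_{\hF_e}\| < \infty$.

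No step of this argument poses a real obstacle; the only subtlety worth calling out explicitly is the linear independence of the family $\{t^l e^{\lambda_k t}\}$ across distinct $\lambda_k$, which justifies that the indicated spanning set is actually a basis and that all the basis-level computations above extend unambiguously to a well-defined operator on $\hF_e$. Once these three bullets are in place, the proposition follows directly from Definition \ref{space}.
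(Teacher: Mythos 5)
Your proposal is correct and follows essentially the same route as the paper's own proof: finite dimensionality gives the separable Banach structure, the binomial expansion of $(t+\tau)^l e^{\lambda_k(t+\tau)}$ gives Koopman invariance, and boundedness of $L|_{\hF_e}$ follows from linearity on a finite-dimensional space. Your explicit remarks on linear independence of the basis and on the derivative of each basis element are fine additional details but do not change the argument.
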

\begin{proof}
	Firstly, the space $\hF_e$ is a Banach space because it is spanned by finite number of basis functions. Additionally, $\forall g\in\hF_e$, i.e., $g(t) = \sum_{k=1}^{m}\sum_{l=0}^{b_k}a_{k,l} t^{l}e^{\lambda_k t}$, we have \begin{equation*}
		U^{\tau}g(t) = \sum_{k=1}^{m}\sum_{l=0}^{b_k}\sum_{n=0}^{l}a_{k,l} C_l^n \tau^n e^{\lambda_k \tau} t^{l-n}e^{\lambda_k t}\in\hF_e.
	\end{equation*} Hence, it is an invariant space of $U^\tau$. Restricted in this finite-dimensional space, the generator is bounded because of its linearity.
\end{proof}

\subsection{Sampling and reconstruction for polynomial and exponential signals}
Then we analyze the associated Koopman spectrum to present the sampling bound by Theorem \ref{thKs}.
\begin{proposition}\label{poly_exp_spectrum}
	Consider the generator-bounded space $\hF_e$ given by \eqref{poly_exp_signal}. The Koopman spectrum is $\sigma(L|_{\hF_e}) = \{\lambda_k\}_{k=1}^m$.
\end{proposition}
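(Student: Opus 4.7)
The plan is to exploit the finite-dimensionality of $\hF_e$ so that $\sigma(L|_{\hF_e})$ reduces to the point spectrum, and then to read off the eigenvalues from a block-triangular matrix representation of $L|_{\hF_e}$ in the natural basis. First I would recall from the definition of the generator (see \eqref{eq5}) that on $\hF_e$ the operator $L$ acts as ordinary differentiation, and compute its action on the basis monomial-exponentials:
\begin{equation*}
	L\bigl(t^{l} e^{\lambda_k t}\bigr) = l\, t^{l-1} e^{\lambda_k t} + \lambda_k\, t^{l} e^{\lambda_k t},
\end{equation*}
with the convention that the first term vanishes when $l=0$.

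Next I would decompose $\hF_e$ as the direct sum $\hF_e = \bigoplus_{k=1}^{m} V_k$, where $V_k = \sspan\{t^{l}e^{\lambda_k t}:l=0,\ldots,b_k\}$. The linear independence of the full basis $\{t^{l} e^{\lambda_k t}\}$ across distinct $\lambda_k$ is a standard fact (it follows, e.g., from the Wronskian being non-zero, or from comparing asymptotic growth rates), and the computation above shows that each $V_k$ is $L$-invariant. In the ordered basis $(e^{\lambda_k t}, t e^{\lambda_k t}, \ldots, t^{b_k}e^{\lambda_k t})$, the restriction $L|_{V_k}$ is represented by an upper-triangular $(b_k+1)\times(b_k+1)$ matrix with $\lambda_k$ on every diagonal entry and the super-diagonal entries equal to $1,2,\ldots,b_k$. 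Its characteristic polynomial is therefore $(\mu-\lambda_k)^{b_k+1}$, so $\sigma(L|_{V_k}) = \sigma_p(L|_{V_k}) = \{\lambda_k\}$.

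Finally, since $\hF_e$ is finite-dimensional, $\sigma(L|_{\hF_e}) = \sigma_p(L|_{\hF_e})$, and because $L|_{\hF_e}$ is block diagonal with blocks $L|_{V_k}$, its spectrum is the union of the block spectra:
\begin{equation*}
	\sigma(L|_{\hF_e}) \;=\; \bigcup_{k=1}^{m}\sigma(L|_{V_k}) \;=\; \{\lambda_1,\ldots,\lambda_m\},
\end{equation*}
which is the claim. The routine parts of this argument are the differentiation rule and the triangular-matrix eigenvalue reading; the only conceptual step that needs a careful justification is the direct-sum decomposition, i.e.\ the linear independence of $\{t^{l} e^{\lambda_k t}\}$ across distinct $\lambda_k$, and that is what I would expect to be the sole non-bookkeeping obstacle in an otherwise purely computational proof.
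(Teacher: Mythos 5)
Your proposal is correct and follows essentially the same route as the paper: both represent $L|_{\hF_e}$ on the monomial--exponential basis as a block-diagonal matrix whose $k$-th block is bidiagonal with $\lambda_k$ on the diagonal and $1,\ldots,b_k$ on the off-diagonal, and then read off $\sigma(L|_{\hF_e}) = \{\lambda_k\}_{k=1}^m$ from the triangular structure. Your added remarks on the linear independence of the basis and on the coincidence of spectrum and point spectrum in finite dimension only make explicit what the paper leaves implicit.
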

\begin{proof}
	The proof will be given in Appendix \ref{poly_exp_spectrum_proof}.
\end{proof}

It follows \bl{from} Proposition \ref{poly_exp_spectrum} and Theorem \ref{thKs} that the sampling bound of $g(t)$ is $\omega_s>2 \max_k|\bIm(\lambda_k)|$ (rad/s). 
In particular, periodic band-limited signal is a special case of this signal class, i.e., $\forall k=1,\ldots,m, b_{k} = 0$ and $\lambda_k = i\beta_k, \beta_k\in\bR$. In this case, the Koopman (point) spectrum $\sigma(L|_{\hF_e})$ is restricted to the imaginary axis, i.e., $\sigma(L|_{\hF_e})= \{i \beta_k\}_{k=1}^{m}$, which is consistent with the Fourier frequency. Then we can also find that the conclusion of Theorem \ref{thKs} is consistent with the Nyquist-Shannon sampling theorem for these periodic band-limited signals. 


In the following, we proceed to present the formula represented by samples for this type of signals. 

\begin{proposition}\label{poly_exp_recons}
	Consider the polynomial and exponential signal $g(t) = \sum_{k=1}^{m}\sum_{l=0}^{b_k} a_{k,l} t^l e^{\lambda_k t},$ where $a_{k,l}\in\bR,a_{k,b_k}\neq 0,\lambda_k\in\bC,b_k\in\bN$, and $T_s$ satisfying Theorem \ref{thKs} (i.e., $T_s<\pi/\max_k|\bIm(\lambda_k)|$). Denote the basis $\Phi(t) = [t^{b_1} e^{\lambda_1 t},\ldots, e^{\lambda_1 t},\ldots, e^{\lambda_m t}]$ and the transition matrix $Q$ such that $[g(t),\ldots,g(t+(M-1)T_s)]= \Phi(t) Q$. We have \begin{equation}\label{poly_exp_recons_al}
		g(\tau) = [g(0), \ldots, g((M-1)T_s)] Q^{-1} \overline{U}_M^\tau Q \bm c(0),\end{equation} where $M = m+\sum_{k=1}^{m}b_k$, $\bm c(0) = [1,0,\ldots, 0]^T$, 
	\begin{equation}\label{matrix_koopman}
		\overline{U}_M^{\tau} = \left(\begin{matrix}
			\overline{U}_{b_1}^\tau&\\
			&\ddots&\\
			&&\overline{U}_{b_m}^\tau
		\end{matrix}
		\right),
	\end{equation} and $\overline{U}_{b_k}^\tau\in\bR^{(b_k+1)\times (b_k+1)}$ for $\forall k = 1,\ldots,m$, i.e.,
	\begin{equation*}
		\overline{U}_{b_k}^\tau = \left(\begin{matrix}
			e^{\lambda_k \tau}&&&\\
			C_{b_k}^1\tau e^{\lambda_k \tau}&e^{\lambda_k \tau}&&\\
			\vdots&\vdots&\ddots&\\
			C_{b_k}^{b_k}\tau^{b_k}e^{\lambda_k \tau}&C_{b_k-1}^{b_k-1}\tau^{b_k-1}e^{\lambda_k \tau}&\ldots&e^{\lambda_k\tau}
		\end{matrix}
		\right).
	\end{equation*} 
\end{proposition}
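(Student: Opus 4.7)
The plan is to reduce the problem to two ingredients: the matrix representation $\overline{U}_M^\tau$ of the Koopman operator $U^\tau|_{\hF_e}$ on the polynomial–exponential basis $\Phi$, and the change-of-basis matrix $Q$ that links $\Phi$ with the time-delay basis $\{g(t+kT_s)\}_{k=0}^{M-1}$. Once these two objects are identified, formula \eqref{poly_exp_recons_al} follows by equating two natural expressions for $g(t+\tau)$ and evaluating at $t=0$.

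First I would verify the matrix form \eqref{matrix_koopman} of $\overline{U}_M^\tau$ by computing the Koopman action on each basis element through the binomial theorem,
\begin{equation*}
U^\tau(t^l e^{\lambda_k t}) = (t+\tau)^l e^{\lambda_k(t+\tau)} = e^{\lambda_k \tau}\sum_{n=0}^{l} C_l^n \tau^n\, t^{l-n} e^{\lambda_k t}.
\end{equation*}
Because the right-hand side stays inside $\sspan\{e^{\lambda_k t}, te^{\lambda_k t}, \ldots, t^{b_k} e^{\lambda_k t}\}$, the matrix $\overline{U}_M^\tau$ splits into $m$ blocks, one per distinct $\lambda_k$. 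Using the ordering inside $\Phi$ (highest power $t^{b_k}$ first, descending to $e^{\lambda_k t}$), reading off the coordinates of $U^\tau(t^l e^{\lambda_k t})$ for $l=b_k, b_k-1,\ldots,0$ yields exactly the lower-triangular entries of $\overline{U}_{b_k}^\tau$ displayed in the statement.

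Next I would invoke the sampling hypothesis $T_s < \pi/\max_k |\bIm(\lambda_k)|$, which places us in the non-aliasing regime of Theorem \ref{thKs}. Combined with Proposition \ref{criterion}, this lets $\{g(t+kT_s)\}_{k=0}^{M-1}$ serve as a basis of $\hF_e$, so $Q$ is invertible. Let $\bm a\in\bC^M$ denote the coordinates of $g$ in $\Phi$, i.e.\ $g=\Phi\bm a$. On one hand,
\begin{equation*}
g(t+\tau) = U^\tau g(t) = \Phi(t+\tau)\bm a = \Phi(t)\,\overline{U}_M^\tau \bm a,
\end{equation*}
while on the other hand, by the reconstruction identity \eqref{recon_formula} together with the definition of $Q$,
\begin{equation*}
g(t+\tau) = \sum_{k=0}^{M-1} c_k(\tau)\, g(t+kT_s) = \Phi(t)\, Q \bm c(\tau).
\end{equation*}
Linear independence of $\Phi(t)$ forces $Q\bm c(\tau) = \overline{U}_M^\tau\bm a$. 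Setting $t=0$ in $g = \Phi Q\bm c(0)$ with $\bm c(0)=[1,0,\ldots,0]^T$ gives $\bm a = Q\bm c(0)$, hence $\bm c(\tau) = Q^{-1}\overline{U}_M^\tau Q\bm c(0)$. Substituting this into $g(\tau) = [g(0),\ldots,g((M-1)T_s)]\bm c(\tau)$ delivers \eqref{poly_exp_recons_al}.

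The main obstacle I anticipate is purely bookkeeping: carefully tracking the ordering of basis functions inside each $\lambda_k$-block so that the binomial coefficients $C_l^n$ land in the correct lower-triangular slots of $\overline{U}_{b_k}^\tau$, and verifying invertibility of $Q$ by applying the Popov-Belevitch-Hautus criterion of Proposition \ref{criterion} to $\overline{U}_M$ and the coordinate vector $\bm a$. The Koopman-theoretic content of the argument is otherwise elementary once $\overline{U}_M^\tau$ is in hand.
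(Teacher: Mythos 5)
Your proposal is correct and follows essentially the same route as the paper: compute the block lower-triangular matrix representation $\overline{U}_M^\tau$ on $\Phi$ via the binomial expansion, use the non-aliasing condition together with Proposition \ref{criterion} (PBH rank test on $\overline{U}_M^{T_s}$ and $\bm a$) to get invertibility of $Q$, and then obtain \eqref{poly_exp_recons_al} from the change-of-basis relation $Q^{-1}\overline{U}_M^\tau Q$ evaluated at $t=0$. Your coordinate identity $\bm c(\tau)=Q^{-1}\overline{U}_M^\tau Q\bm c(0)$ is just the paper's statement that $Q^{-1}\overline{U}_M^\tau Q$ is the matrix representation of $U^\tau$ on the time-delay basis, so the two arguments coincide (both, incidentally, leave the explicit rank verification $\mathrm{rank}[\lambda I-\overline{U}_M^{T_s},\bm a]=M$ at the same level of detail).
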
 
\begin{proof}
	The matrix representation of the Koopman operator $\overline{U}_M^{\tau}\in\bR^{M\times M}$ is given by \eqref{matrix_koopman} on the basis $\Phi(t)$, i.e., $$U^{\tau} \Phi  = \Phi \overline{U}_M^{\tau}.$$ The lower triangular matrix $\overline{U}_M^{\tau}$ has distinct eigenvalues $\{e^{\lambda_k T_s}\}_{k=1}^m$ when $T_s$ satisfies \eqref{eqcons}. Based on the basis $\Phi$, the coordinate of $g$ is $\bm a=[\bm a_1^T,\ldots, \bm a_m^T]^T$, where $\bm a_k^T = [a_{k,b_k},\ldots,a_{k,0}]$ for $\forall k=1,\ldots,m$. Then it can be proved that $\mathrm{rank}[\lambda_k I-\overline{U}_M^{T_s},\bm a]=M$. It follows \bl{from} Proposition \ref{criterion} that $\{g(t),\ldots,g(t+(M-1)T_s)\}$ is a set of basis functions and transition matrix $Q = [\bm s(0),\bm s(T_s),\ldots,\bm s((M-1)T_s)]$ is invertible, where $\bm s(\tau) = [\bm s_1^T,\ldots,\bm s_m^T]^T$, $\bm s_k = [s_{k,0},\ldots,s_{k,b_k}]^{T}$, $s_{k,p}(\tau) = \sum_{l=b_k-p}^{b_k}C_{l}^{b_k-p}a_{k,l}\tau^{l-b_k+p}e^{\lambda_k \tau}$ for $\forall k=1,\ldots,m, p=0,\ldots,b_k$. Hence, for $g(t) = [g(t),\ldots,g(t+(M-1)T_s)]\bm c(0)$, we have \begin{equation}
		\begin{aligned}
			U^\tau g(t) &= [g(t),\ldots,g(t+(M-1)T_s)] \overline{U}^\tau \bm c(0)\\
			&=[g(t),\ldots,g(t+(M-1)T_s)]Q^{-1}\overline{U}_M^\tau Q \bm c(0),
		\end{aligned}
	\end{equation} where $\bm c(0) = [1,0,\ldots,0]^T$ and $\overline{U}^\tau$ represents the matrix representation of $U^\tau$ on basis $[g(t),\ldots,g(t+(M-1)T_s)]$. 
	Then the result \eqref{poly_exp_recons_al} can be obtained by letting $t=0$.
\end{proof}

\section{Reconstruction method}\label{sec:method}
To numerically illustrate the sampling theorem, we propose a Koopman operator-based reconstruction method (KR) with theoretical convergence in this section. \bl{This algorithm is developed based on the reconstruction formula \eqref{recon_formula}. While the derivation of explicit reconstruction formula depends on the specific signal type (e.g., Proposition \ref{invers_recons} and Proposition \ref{poly_exp_recons}), this algorithm does not rely on prior knowledge of the signal. Specifically, the values of $c_k(\tau),\tau>0$ in \eqref{recon_formula} are computed using the Koopman operator identified from available samples.}

\subsection{Description of the method}\label{sec:description_method}
This method is applicable under the assumption that time-delay functions can be basis functions of $\hF_e^\gamma$ as required in Theorem \ref{recons_requrie}, which ensures that the signal can be represented by its samples.
The main idea is to compute the evolution of $g(t)$ under the action of $U^\tau$. Inspired by Theorem \ref{recons_requrie}, we first lift data to functional space $\hF_M$ spanned by $M$ time-delay functions of $g(t)$. Then we reconstruct the signal $g(t)$ by approximating the CT Koopman operator. The steps are given as follows.

\subsubsection{Lift data to functional space} With the sampling period $T_s$, we have the DT values of the signal $\{g(kT_s)\}_{k=0}^{N-1}$. Consider the functional space $\hF_M$ spanned by $M$ time-delay functions, i.e., $\hF_M = \sspan\{g,U^{T_s}g,\ldots,U^{(M-1)T_s}g \}$, where $U^{kT_s}g(t) = g(t+kT_s)$. Then we construct the  Hankel matrices $X,Y\in\bR^{(N-M)\times M}$ $(N\ge2M)$ as follows.
\begin{align*}
	X &= \left(\begin{matrix}
		g(0),&\ldots,&g((M-1)T_s)\\
		\vdots&\ddots&\vdots\\
		g((N-M-1)T_s),&\ldots,&g((N-2)T_s)
	\end{matrix}\right),\\Y &= \left(\begin{matrix}
		g(T_s),&\ldots,&g(MT_s)\\
		\vdots&\ddots&\vdots\\
		g((N-M)T_s),&\ldots,&g((N-1)T_s)
	\end{matrix}\right).
\end{align*}

\begin{remark}[The choice of $M$]\label{choice}
	Theoretically, the number of time-delay basis functions $M$ should be equal to the dimension of $\hF_e^\gamma$ to form a set of basis. In numerical practice, $M$ can be approximated by applying the singular value decomposition (SVD) on \begin{align*}
		X_1 = \left(\begin{matrix}
			g(0),&\ldots,&g((K-1)T_s)\\
			\vdots&\ddots&\vdots\\
			g((N-K-1)T_s),&\ldots,&g((N-2)T_s)
		\end{matrix}\right),
	\end{align*} where $K$ is large enough and $N\ge2K$. 
	Since singular values indicate the linear independence of column vectors and each column vector of $X_1$ represents values of a time-delay function, the dimension of the functional space $\hF_M$ can be approximated by counting the number of singular values that do not seem to decay to zero. Specifically, we can set a hard thresholding of singular values (e.g., 1e-10 as in \cite{arbabi2017ergodic}), to count the number of non-zero singular values. Then we choose suitable $M$ such that the constructed matrix $[X]_{(N-M)\times M}$ also has the same number of non-zero singular values. 
\end{remark}

\subsubsection{Approximate the Koopman operator and its generator}
Here we identify the finite-rank approximation of the DT Koopman operator and its generator, which are represented by the following matrices:
\begin{equation}\label{identified_koopman}\overline{U}^{T_s} = X^{\dagger}Y,\end{equation} 
\begin{equation}\label{log_matrix}
	\overline{L} = \log(\overline{U}^{T_s})/T_s.
\end{equation}
Then we approximate the matrix representation $\overline{U}^{\tau}$ of the CT Koopman operator $U^{\tau}$ on $\hF_M$ based on $\overline{L}$, i.e.,
\begin{equation}\label{exp_matrix}
	\overline{U}^{\tau} = \exp(\tau\overline{L}).
\end{equation} 
The $\log$ and $\exp$ functions in \eqref{log_matrix} and \eqref{exp_matrix} are calculated based on the definitions of exponential and logarithmic matrix functions \cite[Page 59]{petersen2008matrix}, i.e., \begin{align*}
	\log(\overline{U}^{T_s}) &\equiv \sum_{\bl{k=1}}^{\infty}\frac{(-1)^{k+1}}{k}(\overline{U}^{T_s}-I)^k,\\
	\exp(\tau \overline{L}) &\equiv \sum_{k=0}^\infty \frac{1}{k!}(\tau \overline{L})^k.
\end{align*}
\bl{In practice, the evaluation of matrix logarithm and exponential in \eqref{log_matrix} and \eqref{exp_matrix} requires efficient algorithms. In the field of numerical linear algebra, numerous approaches have been developed that are tailored to specific types of matrices \cite{moler2003nineteen, trefethen2022numerical, higham2008functions}. Here, we use the matrix logarithm algorithm described in \cite{
		al2013computing} and the matrix exponential algorithm described in \cite{
		al2010new}. These algorithms are implemented in MATLAB as the functions \texttt{logm} and \texttt{expm}, respectively.}

\subsubsection{Reconstruct the signal}
Finally, we reconstruct the signal by the approximated Koopman operator. With the time-delay basis function, we have 
$$g(\tau) = U^\tau  [g(0),\ldots,g((M-1)T_s)]\bm e_1,$$ where $\bm e_1 = [1,0,\ldots,0]^T$. 
Then we reconstruct the signal by \begin{equation*}
	\hat{g}(\tau)= \bm{g}(0)(\overline{U}^\tau\bm{e}_1),
\end{equation*}where $\bm{g}(0) = [g(0), g(T_s),\ldots,g((M-1)T_s)]$.

\begin{remark}[Connection with Koopman operator-based method for system identification]
	This approach for signal reconstruction is analogous to the Koopman operator-based method of identifying nonlinear system \cite{mauroy2019koopman}. Both methods approximate the Koopman operator in the lifted functional space $\hF_M$. However, the basis function $g(t)$ in $\hF_M$ is unknown for signal reconstruction, which is intended to be recovered through the flow of time $S^\tau(t) = t+\tau$. In contrast, for system identification, all basis functions in $\hF_M$ can be represented analytically and but the flow is unknown. By lifting to the chosen $\hF_M$, the unknown flow $S^\tau(\bx)$ of system states $\bx$ can be recovered from samples $\{\bx(kT_s)\}_{k\in\bZ}$. 
\end{remark}

\bl{\begin{remark}[Connection with Prony's method \cite{hauer1990initial}]
		Both the KR method and Prony method can reconstruct signals composed of exponential functions, i.e., $g(t) = \sum_{k=1}^{M} A_k e^{\lambda_k t}$. To reconstruct this signal, these two methods require the same minimum number of samples, i.e., $2M$. However, the applicability of the KR approach extends beyond Prony's method, accommodating not only exponential signals but also polynomial ones. Specifically, Prony's method requires finding the roots of a polynomial, which is derived based on properties of exponential signals. In contrast, the KR method reconstructs them based on the linearity of the Koopman operator, which does not depend on specific signal properties and holds for all types of signals within generator-bounded spaces.
\end{remark}}

\subsection{Theoretical convergence}\label{sec:theoguarantee}
In this section, we prove the convergence of the algorithm proposed in Section \ref{sec:description_method} in the optimal conditions
. 

For brevity, we use $U^\tau:\hF_e^\gamma\to\hF_e^\gamma$ to denote the Koopman operator defined on $\hF_e^\gamma$, $L$ to denote its generator, $\widehat{U}^{T_s}_M(N):\hF_M\to\hF_M$ to denote the finite-approximation of Koopman operator from $N$ samples, $P_M$ to denote the projection operator onto $\hF_M$, and $U^{\tau}_M = P_M U^{\tau}|_{\hF_M}:\hF_M\to\hF_M$ to denote the projection of the Koopman operator onto $\hF_M$
. 

Here we show the representation of the reconstruction error by the Koopman operator. The signal value $g(\tau)$ can be represented as
$$g(\tau) = U^{\tau}g(0).$$
Since $g\in\hF_M$, the reconstructed signal is computed as
$$\hat{g}(\tau)= \exp\left(\frac{\tau}{T_s}\Log ~\widehat{U}^{T_s}_M(N)
\right) g(0).$$
Then the reconstruction error can be measured as 
\begin{equation}\label{error1}
	\left\|U^\tau g-\exp\left(\frac{\tau}{T_s}\Log~\widehat{U}^{T_s}_M(N)\right) g\right\|.
\end{equation}

To prove its convergence, we first show the continuity of exponential and logarithm of linear bounded operators, whose proofs are given in Appendix \ref{sec:exp} and Appendix \ref{sec:log}. 

\begin{lemma}[Continuity of operator exponential]\label{exponential}
	Consider bounded operators $A,A_n\in\hL(X).$ If $\lim_{n\to\infty}\|(A-A_n)g\|=0,~\forall g\in X$, then $$\lim_{n\to\infty}\|(\exp A-\exp A_n )g\|=0.$$
\end{lemma}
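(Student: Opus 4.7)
The plan is to exploit the power series definition $\exp(A) = \sum_{k=0}^\infty A^k/k!$, reducing the statement to strong convergence of the iterated powers $A_n^k g \to A^k g$ for each $k$, and then to pass to the limit inside the series by a dominated-convergence argument.

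First I would establish uniform boundedness of the sequence. Strong convergence $A_n g \to A g$ implies in particular that $\{A_n g\}$ is bounded for each $g \in X$; by the Banach--Steinhaus (uniform boundedness) principle this yields a constant $C$ with $\sup_n \|A_n\| \le C$, and since $A$ is the strong limit we may assume $\|A\| \le C$ as well. This uniform bound is what lets the series remainders be controlled independently of $n$.

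Next I would prove by induction on $k$ that $\|(A^k - A_n^k) g\| \to 0$ for every $g \in X$. The base case $k=1$ is the hypothesis. For the inductive step I would write
\begin{equation*}
A_n^{k+1} g - A^{k+1} g = A_n\bigl(A_n^k g - A^k g\bigr) + (A_n - A)\, A^k g.
\end{equation*}
The first summand has norm at most $C\,\|A_n^k g - A^k g\|$, which tends to zero by the induction hypothesis; the second summand tends to zero because it is $(A_n - A)$ applied to the \emph{fixed} vector $A^k g$, where the hypothesis directly applies. Note that one cannot shortcut this by writing $(A - A_n) A_n^k g$ and invoking strong convergence naively, because $A_n^k g$ depends on $n$; the decomposition above is exactly what sidesteps that obstacle, which is the main subtle point of the proof.

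Finally I would combine these facts with the series expansion. Termwise,
\begin{equation*}
\bigl\|(\exp A - \exp A_n)\, g\bigr\| \le \sum_{k=0}^\infty \frac{1}{k!} \bigl\|(A^k - A_n^k)\, g\bigr\|,
\end{equation*}
and the $k$-th summand is dominated by $\tfrac{1}{k!}(\|A^k g\| + \|A_n^k g\|) \le \tfrac{2 C^k \|g\|}{k!}$, whose sum $2 e^C \|g\|$ is finite and independent of $n$. Since each summand tends to zero as $n \to \infty$ by the induction step, dominated convergence for series gives $\|(\exp A - \exp A_n) g\| \to 0$, which is the claim. The hard part will not be the series manipulation but the correct handling of the $n$-dependence in $A_n^k g$ via the recursive decomposition together with the Banach--Steinhaus bound.
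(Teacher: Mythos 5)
Your proof is correct, and it follows the same overall strategy as the paper's: expand $\exp$ as a power series, prove $\|(A^k-A_n^k)g\|\to 0$ by induction on $k$, and control the tail of the series. The differences are in how the two subtle points are handled, and your treatment is actually the more careful one. First, the paper controls the tails $\sum_{k\ge q}A_n^k g/k!$ by an $\varepsilon/3$ argument, which implicitly requires a bound on $\|A_n\|$ uniform in $n$; you make this explicit via Banach--Steinhaus, and your dominating sequence $2C^k\|g\|/k!$ then gives the tail control and the passage to the limit in one stroke (Tannery/dominated convergence) instead of the paper's three-way split. Second, in the inductive step the paper uses the decomposition $A(A^p-A_n^p)g+(A-A_n)A_n^p g$, whose second term applies $A-A_n$ to the $n$-dependent vector $A_n^p g$; strictly, concluding that this tends to zero again needs the uniform bound (e.g.\ writing $(A-A_n)(A_n^p g-A^p g)+(A-A_n)A^p g$), a step the paper glosses over. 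Your reversed decomposition $A_n(A_n^k g-A^k g)+(A_n-A)A^k g$ puts the $n$-dependence where the uniform bound $\|A_n\|\le C$ handles it and applies $A_n-A$ only to the fixed vector $A^k g$, so the hypothesis is used exactly where it is valid. In short: same route, but your version closes the small gaps in the paper's write-up and is the one I would keep.
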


\begin{lemma}[Continuity of operator logarithm]\label{logarithm}
	Consider bounded operators $A,A_n\in\hL(X).$ If $\lim_{n\to\infty}\|(A-A_n)g\|=0,\forall g\in X$, then $$\lim_{n\to\infty}\|(\log A-\log A_n )g\|=0,$$ where the logarithm is defined as $$\log A = \frac{1}{2\pi i}\int_{+\partial \Omega}\log \lambda~ (\lambda I - A)^{-1}{\rm d}\lambda,$$ and $+\partial \Omega$ is a smooth, positively oriented boundary of $\Omega\subset\bC$ that $\sigma(A)\cup\sigma(A_n)\subset \Omega$. Noted that the argument of $\lambda\in+\partial\Omega$, i.e., $\arg \lambda$, is single-valued.
\end{lemma}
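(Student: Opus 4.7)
The plan is to start from the contour-integral definition and reduce the difference $\log A - \log A_n$ to an integral in which the factor $(A - A_n)$ is isolated. First I would write
\[
(\log A - \log A_n) g = \frac{1}{2\pi i} \int_{+\partial\Omega} \log\lambda \, \bigl[(\lambda I - A)^{-1} - (\lambda I - A_n)^{-1}\bigr] g \, {\rm d}\lambda,
\]
which makes sense because $+\partial\Omega$ surrounds both $\sigma(A)$ and $\sigma(A_n)$, so both resolvents exist on the contour. Applying the second resolvent identity $(\lambda I - A)^{-1} - (\lambda I - A_n)^{-1} = (\lambda I - A_n)^{-1}(A - A_n)(\lambda I - A)^{-1}$ turns the integrand into $\log\lambda \cdot (\lambda I - A_n)^{-1}(A - A_n) h_\lambda$, where $h_\lambda := (\lambda I - A)^{-1} g$ is a fixed vector in $X$ depending only on $g$ and $\lambda$.

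Next I would exploit the strong-convergence hypothesis pointwise at each fixed $\lambda$. Because $h_\lambda$ is a single fixed element of $X$, the assumption $\|(A - A_n) f\| \to 0$ for every $f \in X$ applied at $f = h_\lambda$ gives $\|(A - A_n) h_\lambda\| \to 0$ as $n \to \infty$. To lift this pointwise decay of the integrand on the compact contour $+\partial\Omega$ into decay of the integral itself, I would use a dominated-convergence argument: after bounding
\[
\|(\log A - \log A_n) g\| \le \frac{1}{2\pi} \int_{+\partial\Omega} |\log\lambda| \, \|(\lambda I - A_n)^{-1}\| \, \|(A - A_n) h_\lambda\| \, |{\rm d}\lambda|,
\]
the factors $|\log\lambda|$ and $|{\rm d}\lambda|$ are harmless because the contour is smooth and compact, and $\|(A - A_n) h_\lambda\|$ is already known to vanish pointwise and to stay bounded (by Banach--Steinhaus, $\sup_n \|A_n\| < \infty$, so $\|(A - A_n) h_\lambda\| \le 2 C \|h_\lambda\|$).

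The main obstacle I expect is the uniform resolvent bound $\sup_{n,\, \lambda \in +\partial\Omega} \|(\lambda I - A_n)^{-1}\| < \infty$, since strong convergence alone does not a priori prevent the spectra of $A_n$ from creeping arbitrarily close to $+\partial\Omega$. I would handle this by invoking the fact that $+\partial\Omega$ is chosen so that $\sigma(A) \cup \sigma(A_n) \subset \Omega$ for all $n$, which (together with the uniform norm bound from Banach--Steinhaus and the usual Neumann-series perturbation argument on the distance from $\partial\Omega$ to the spectra) gives the required uniform bound. Once this is in place, dominated convergence closes the argument and yields $\|(\log A - \log A_n)g\| \to 0$. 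In the application inside the paper, $A, A_n$ act on the finite-dimensional space $\hF_M$, so strong convergence is norm convergence and the uniform resolvent bound follows immediately from upper semicontinuity of the spectrum; the general statement only needs the additional care described above.
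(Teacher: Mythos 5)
Your overall architecture (write $(\log A-\log A_n)g$ as one contour integral, apply the second resolvent identity to isolate $(A-A_n)$ acting on the fixed vector $h_\lambda=(\lambda I-A)^{-1}g$, then pass to the limit by dominated convergence over the compact contour) is coherent and genuinely different from the paper's route, which never invokes the resolvent identity: the paper expands \emph{both} resolvents in Neumann series on a large circle and transfers strong convergence of the powers $A_n^kg\to A^kg$ term by term, exactly as in its exponential lemma. The genuine gap in your proposal is the step you yourself flagged and then dispatched too quickly: the uniform bound $\sup_{n,\,\lambda\in+\partial\Omega}\|(\lambda I-A_n)^{-1}\|<\infty$. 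Neither of your suggested justifications is valid for a contour chosen only by the spectral-inclusion condition $\sigma(A)\cup\sigma(A_n)\subset\Omega$ in an infinite-dimensional $X$: the estimate $\|(\lambda I-A_n)^{-1}\|\le \mathrm{dist}(\lambda,\sigma(A_n))^{-1}$ holds for normal operators on Hilbert space, not for general bounded operators, and the Neumann-type perturbation series $(\lambda I-A_n)^{-1}=\bigl(I-(\lambda I-A)^{-1}(A_n-A)\bigr)^{-1}(\lambda I-A)^{-1}$ converges only when $\|A_n-A\|$ is small, i.e.\ it needs norm convergence, which strong convergence does not supply. Concretely, take $A=I$ on $\ell^2$ and $A_n=I+2S_n$, where $S_n$ is a truncated nilpotent shift of length $m_n$ supported on coordinate blocks drifting to infinity: then $A_n\to A$ strongly, $\|A_n\|\le 3$, $\sigma(A_n)=\{1\}$ for all $n$, yet on any fixed small circle around $1$ one has $\|(\lambda I-A_n)^{-1}\|\gtrsim 2^{m_n}\to\infty$ (and for suitable $g$ and rapidly growing $m_n$ even $\|\log(A_n)g\|\to\infty$ when the logarithm is taken over such a contour). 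So the uniform resolvent bound, and with it your dominated-convergence step, can genuinely fail for an arbitrary admissible contour; the restriction on the contour is essential, not cosmetic.

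The repair is precisely the contour choice the paper makes: take $K$ with $\|A\|<K$ and $\sup_n\|A_n\|<K$ (a common bound is implicitly assumed in the paper and is supplied in general by Banach--Steinhaus) and let $+\partial\Omega$ be the circle $|\lambda|=K$. On that circle the Neumann series $(\lambda I-A_n)^{-1}=\sum_{k\ge0}A_n^k/\lambda^{k+1}$ converges and yields the uniform bound $\|(\lambda I-A_n)^{-1}\|\le (K-\sup_n\|A_n\|)^{-1}$, after which your second-resolvent-identity argument closes and is arguably tidier than the paper's term-by-term treatment of $\sum_k(A^k-A_n^k)\lambda^{-(k+1)}g$, which requires redoing the tail estimates of the exponential lemma with the geometric factor $\|A\|/K<1$. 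Your closing remark about the finite-dimensional space $\hF_M$ is correct: there strong and norm convergence coincide, the uniform resolvent bound is automatic, and that is the setting in which the paper invokes the lemma for the $N\to\infty$ limit.
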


Now we are ready to prove the convergence of the reconstruction over a finite time horizon.

\begin{theorem}[Convergence]\label{convergence}
	Assume $\{U^{kT_s}g\}_{k=0}^{K}$ being a set of basis functions of $\hF_e^\gamma$ for some $K$ ($K$ can be $\infty$). Consider $\hF_M = \sspan\{g,g(t+T_s),\ldots,g((M-1)T_s)\}$ and $g\in \hF_e^\gamma=\hL^2[0,T_{\rm max}]$, $\tau\in[0,T_{\rm max}]$, where $T_{\rm max}>0$. If the sampling frequency satisfying $\omega_s>2 \max|\bIm(\sigma(L))|$ (rad/s), we have\begin{equation*}
		\lim_{M\to \infty}\lim_{N\to\infty}\left\|\left\{U^\tau-\exp\left(\frac{\tau}{T_s}\Log~\widehat{U}^{T_s}_M(N)\right)\right\}g \right\|=0.
	\end{equation*} Particularly, when $\hF_M = \hF_e^\gamma$ and the sampling frequency $\omega_s>2 \max|\bIm(\sigma(L))|$ (rad/s), we have \begin{equation*}
		\lim_{N\to\infty}\left\|\left\{U^\tau-\exp\left(\frac{\tau}{T_s}\Log~\widehat{U}^{T_s}_M(N)\right)\right\}g \right\|=0.
	\end{equation*}
\end{theorem}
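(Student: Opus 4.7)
The plan is to reduce the reconstruction error to two separate limits that can each be handled with the continuity lemmas already established. Introducing the exact projected Koopman operator $U^{T_s}_M = P_M U^{T_s}|_{\hF_M}$, I would split the error by the triangle inequality as
\[
\bigl\|\bigl(U^\tau - \exp(\tfrac{\tau}{T_s}\Log\widehat{U}^{T_s}_M(N))\bigr)g\bigr\| \le E_1(M) + E_2(M,N),
\]
where $E_1(M) = \bigl\|\bigl(U^\tau - \exp(\tfrac{\tau}{T_s}\Log U^{T_s}_M)\bigr)g\bigr\|$ collects the error from projecting onto a finite-dimensional subspace and $E_2(M,N) = \bigl\|\bigl(\exp(\tfrac{\tau}{T_s}\Log U^{T_s}_M) - \exp(\tfrac{\tau}{T_s}\Log\widehat{U}^{T_s}_M(N))\bigr)g\bigr\|$ collects the finite-sample estimation error. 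The inner limit $N\to\infty$ will then be used to kill $E_2(M,N)$, and the outer limit $M\to\infty$ will be used to kill $E_1(M)$.

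First I would dispatch $E_2(M,N)$ by sending $N\to\infty$ with $M$ fixed. The key ingredient is the strong convergence $\widehat{U}^{T_s}_M(N)g \to U^{T_s}_M g$, which I would justify by an EDMD-style argument adapted to the Hankel construction in \eqref{identified_koopman}: since $\widehat{U}^{T_s}_M(N) = X^{\dagger}Y$ is the minimum-norm least-squares operator on the $M$-dimensional time-delay subspace, increasing $N$ enriches the column span of $X$ and drives the regression toward the exact projected evolution, following the line of reasoning in \cite{arbabi2017ergodic}. Chaining Lemma \ref{logarithm} (which upgrades strong operator convergence to strong convergence of principal logarithms) and then Lemma \ref{exponential} (which does the same for exponentials) turns this into $E_2(M,N)\to 0$.

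Next I would dispatch $E_1(M)$ by sending $M\to\infty$. Because $\{U^{kT_s}g\}_{k=0}^{K}$ is assumed to form a basis of $\hF_e^\gamma$, the subspaces $\hF_M$ are nested and exhaust $\hF_e^\gamma$, so the projection satisfies $P_M \to I$ strongly and hence $U^{T_s}_M g \to U^{T_s} g$. A second application of Lemmas \ref{logarithm} and \ref{exponential} yields $\exp(\tfrac{\tau}{T_s}\Log U^{T_s}_M)g \to \exp(\tfrac{\tau}{T_s}\Log U^{T_s})g$; and since the sampling bound $\omega_s > 2\max|\bIm(\sigma(L))|$ holds, Theorem \ref{thKs} identifies $\tfrac{1}{T_s}\Log U^{T_s}$ with the generator $L$, giving $\exp(\tfrac{\tau}{T_s}\Log U^{T_s}) = e^{\tau L} = U^\tau$. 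Hence $E_1(M) \to 0$. The special case $\hF_M = \hF_e^\gamma$ falls out immediately: $P_M$ is already the identity, so $E_1(M)\equiv 0$ and only the $N\to\infty$ step remains.

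The hardest part will be ensuring that the principal logarithm of $\widehat{U}^{T_s}_M(N)$ is well-defined throughout the convergence, so that Lemma \ref{principal} applies uniformly in $N$. The sampling condition places $\sigma(U^{T_s}_M)$ strictly inside the principal strip $\hG(\pi)$; I would invoke upper semicontinuity of the spectrum under the strong convergence $\widehat{U}^{T_s}_M(N)\to U^{T_s}_M$ to confine $\sigma(\widehat{U}^{T_s}_M(N))$ to a slightly enlarged strip still contained in $\hG(\pi)$ for all sufficiently large $N$. A secondary technical point is rigorously justifying the EDMD-style convergence in the single-trajectory Hankel setting: this will rely crucially on the time-delay basis hypothesis of Theorem \ref{recons_requrie} together with the invariance of $\hF_M$ under $U^{T_s}$ that it provides, which together ensure that the Hankel columns span (asymptotically) a $U^{T_s}$-invariant subspace on which $\widehat{U}^{T_s}_M(N)$ converges to $U^{T_s}_M$.
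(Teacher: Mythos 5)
Your proposal is correct and follows essentially the same route as the paper's proof: a triangle-inequality split into a finite-sample estimation error (killed as $N\to\infty$ via the convergence of the identified operator to the projected Koopman operator, as in \cite{korda2018convergence}, chained through Lemmas \ref{logarithm} and \ref{exponential}) and a projection error (killed as $M\to\infty$ via $P_M\to I$ strongly and the same two lemmas), with the sampling bound and Lemma \ref{principal} identifying $\tfrac{1}{T_s}\Log U^{T_s}$ with $L$ so that the limit operator is $U^\tau$. The only cosmetic difference is that the paper inserts the intermediate operator $P_M U^{T_s} P_M$ and thus uses a three-term split where you use two, which does not change the substance of the argument.
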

\begin{proof}
	When $\omega_s>2 \max|\bIm(\sigma(L))|$, it follows \bl{from} Lemma \ref{principal} that $LT_s$ is the principal logarithm of $U^{T_s}$, i.e., $$L= \frac{1}{T_s}\Log ~U^{T_s},$$ where $T_s = 2\pi/\omega_s$. 
	Since $\|U^\tau\|<\infty$, it follows that
	\begin{equation*}
		\begin{aligned}
			&\left\|\left\{U^\tau-\exp\left(\frac{\tau}{T_s}\Log~\widehat{U}^{T_s}_M(N)\right)\right\}g \right\|\\ 
			=& \left\|\left\{\exp\left(\frac{\tau}{T_s}\Log~U^{T_s}
			\right)-\exp\left(\frac{\tau}{T_s}\Log~\widehat{U}^{T_s}_M(N)\right)\right\} g\right\|\\
			\le& \left\|\left\{\exp\left(\frac{\tau}{T_s}\Log~U^{T_s}
			\right)-\exp\left(\frac{\tau}{T_s}\Log~(P_M U^{T_s}P_M)\right)\right\} g\right\|\\+&\left\|\left\{\exp\left(\frac{\tau}{T_s}\Log~(P_M U^{T_s}P_M)
			\right)-\exp\left(\frac{\tau}{T_s}\Log~U^{T_s}_M\right)\right\} g\right\|\\+&\left\|\left\{\exp\left(\frac{\tau}{T_s}\Log~U^{T_s}_M
			\right)-\exp\left(\frac{\tau}{T_s}\Log~\widehat{U}^{T_s}_M(N)\right)\right\} g\right\|.
		\end{aligned}
	\end{equation*}
	
	Here we show that these three terms in the last inequality tend to zero as $N\to\infty$ and $M\to\infty$ for $\hF_e^\gamma=\hL^2[0,T_{max}]$.
	Since the identified Koopman operator $\widehat{U}^{T_s}_M(N)$ conveges to $U^{T_s}_M$ as $N\to\infty$ for any norm \cite{korda2018convergence}, we have 
	$$\lim_{N\to \infty}\|U^{T_s}_Mg - \widehat{U}^{T_s}_M(N)g\|=0, ~\forall g\in\hF_M.$$ It follows \bl{from} Lemma \ref{logarithm} that $$\lim_{N\to\infty}\|\Log U^{T_s}_Mg-\Log \widehat{U}^{T_s}_M(N) g\|=0.$$
	By Lemma \ref{exponential} and $\tau<\infty$, we have \begin{equation*}
		\begin{aligned}
			&\lim_{N\to\infty}\left\|\left\{\exp\left(\frac{\tau}{T_s}\Log~U^{T_s}_M
			\right)-\exp\left(\frac{\tau}{T_s}\Log~\widehat{U}^{T_s}_M(N)\right)\right\} g\right\|\\&=0.
		\end{aligned}
	\end{equation*}
	
	Since the projection operator $P_M$ converges to identity operator $I$ in the strong operator topology as $M\to\infty$. So we have, for $\forall g\in\hF_M$,
	\begin{equation*}
		\begin{aligned}
			&\lim_{M\to \infty}\left\|(U^{T_s}
			-P_MU^{T_s}P_M) g\right\|\\ 
			\le &\lim_{M\to \infty}\left\{\left\|(U^{T_s}
			-P_MU^{T_s}) g\right\| +\left\|(P_MU^{T_s}
			-P_MU^{T_s}P_M) g\right\|\right\}\\
			= &\lim_{M\to \infty}\left\{\left\|(I-P_M)U^{T_s}g\right\| +\left\|P_MU^{T_s}(I-P_M)g\right\|\right\}=0.
		\end{aligned}
	\end{equation*}
	
	Additionally, it follows from $P_M g = g, U_M^{T_s} g = P_M U^{T_s}g$ for $\forall g\in\hF_M$ that
	\begin{equation*}\begin{aligned}
			&\lim_{M\to\infty}\left\|(P_M U^{T_s}P_M-U_M^{T_s}) g\right\|=
			0.
	\end{aligned}\end{equation*}
	Based on Lemma \ref{exponential}--\ref{logarithm}, it follows that 
	\begin{equation*}
		\begin{aligned}
			\left\|\left\{\exp\left(\frac{\tau}{T_s}\Log~U^{T_s}
			\right)- 
			\exp\left(\frac{\tau}{T_s}\Log~(P_MU^{T_s}P_M)\right)\right\} g\right\|
		\end{aligned}
	\end{equation*}tends to zero as $M\to\infty$, and 
	\begin{equation*}
		\begin{aligned}
			\left\| \left\{\exp\left(\frac{\tau}{T_s}\Log~(P_M U^{T_s}P_M)
			\right)-\exp\left(\frac{\tau}{T_s}\Log~U^{T_s}_M\right)\right\} g\right\|
		\end{aligned}
	\end{equation*} tends to zero as $M\to\infty$.
	Therefore, we have \begin{equation*}
		\begin{aligned}
			\lim_{M\to \infty}\lim_{N\to\infty}\left\|\left\{U^\tau-\exp\left(\frac{\tau}{T_s}\Log~\widehat{U}^{T_s}_M(N)\right)\right\}g \right\|=0.
		\end{aligned}
	\end{equation*}
	Particularly, when $\hF_M = \hF_e^\gamma$, i.e., the dimension of $\hF_e^\gamma$ is $M$, we have $P_M = I$. It can be proved similarly that \begin{equation*}\begin{aligned}
			\lim_{N\to\infty}\left\|\left\{U^\tau-\exp\left(\frac{\tau}{T_s}\Log~\widehat{U}^{T_s}_M(N)\right)\right\}g \right\|=0.
		\end{aligned}
	\end{equation*}
	
\end{proof}


\section{Numerical examples}\label{sec:num}
In this section, we illustrate Theorem \ref{thKs} numerically by reconstructing four types of band-limited and non-band-limited signals using  KR method given in Section \ref{sec:method}. Specifically, we aim to show that the signal can be successfully recovered when \eqref{eqcons} is satisfied; otherwise, signal aliasing exists. Hence, the reconstruction method needs to be effective even when the sampling frequency approaches the sampling bound, which is the reason why we choose the KR method. Its advantage of reconstructing signals when sampling is not fast enough will be shown by the comparisons with the methods of cubic spline interpolation, piecewise cubic Hermite interpolation (PCHIP), and polynomial curve fitting. Moreover, the reconstruction behavior in the presence of noise is also analyzed. In the following, we first present the examples of signals.

\subsubsection*{a) Band-limited signal}
\begin{equation*}
	g(t) = -\cos(2t)+\cos(0.5t+\pi/2)+1.5\cos(4t+\pi/3).
\end{equation*}

\subsubsection*{b) Non-band-limited exponential signal}
\begin{equation*}
	g(t) = e^{-t}\cos(4t+\pi/6)+e^{-0.5t}\cos(2t).
\end{equation*}

\subsubsection*{c) Non-band-limited polynomial signal}
\begin{equation*}
	g(t) = t\cos(4t+\pi/3).
\end{equation*}

\subsubsection*{d) Non-band-limited exponential and polynomial signal}
\begin{equation*}
	g(t) = te^{-t}\cos(4t+\pi/3).
\end{equation*}

Although these signals can also be reconstructed by many other methods, we choose the KR method because it remains effective for signal reconstruction when the sampling period approaches the bound, which \bl{helps} illustrate Theorem \ref{thKs} numerically. To \bl{demonstrate} this advantage, we first show the comparison of reconstruction error with interpolation and fitting methods. The experimental setup of KR method is given in Table \ref{setup}, where the the number of basis functions is selected according to Remark \ref{choice}. The sampling condition of these signals is $\omega_s> 8$ rad/s and the critical sampling period is $T_\gamma \approx 0.785$s, as analyzed according to Proposition \ref{invers_spectrum}, Proposition \ref{poly_exp_spectrum} and Theorem \ref{thKs}. We collect samples with growing sampling periods that satisfy $T_s<T_\gamma$ to compare the behavior of reconstruction methods. Using the same samples, the reconstruction error\bl{s} are illustrated respectively in Fig. \ref{fig2}--Fig. \ref{fig5} for these four signals a)-d), where the top, middle, and bottom subgraphs represent the cases when the sampling periods are $0.2$s, $0.4$s, and $0.6$s, respectively. In these figures, blue, orange, yellow and purple lines represent the reconstruction error of cubic spline interpolation, PCHIP, polynomial curve fitting of degree $12$, and KR method, respectively. It shows that, compared to \bl{the} KR method, the reconstruction error\bl{s} of these three methods become non-negligible as the sampling period increases. In contrast, the KR method shows robustness against the increase of the sampling period. Hence, it can distinguish the error caused by the method itself and \bl{the error due to} signal aliasing when the sampling period exceeds the critical sampling period $T_\gamma$.

\begin{table*}[!htb]
	\caption{Setup of signal reconstruction}
	\centering
	\begin{tabular}{ccccc}
		\hline\hline
		\multirow{2}{*}{Signals}&\multirow{2}{*}{Critical sampling period (s)}& \multirow{2}{*}{Number of basis functions}&\multirow{2}{*}{Number of samples}&\\ 
		\\\hline
		a) Band-limited signal           & $\pi/4$   &   $6$   & $20$ & \\ 
		b) Exponential signal               & $\pi/4$   &   $4$ & $20$ &    \\ 
		c) Polynomial signal              & $\pi/4$  &    $6$ &  $20$ & \\ 
		d) Exponential and polynomial    & $\pi/4$   &  $8$   & $20$ & \\ 
		\hline\hline
	\end{tabular}
	\label{setup}
\end{table*}

\begin{figure}[!t]
	\centering
	\subfloat[]{
		\label{Fig2.sub1}
		\includegraphics[width=.45\textwidth]{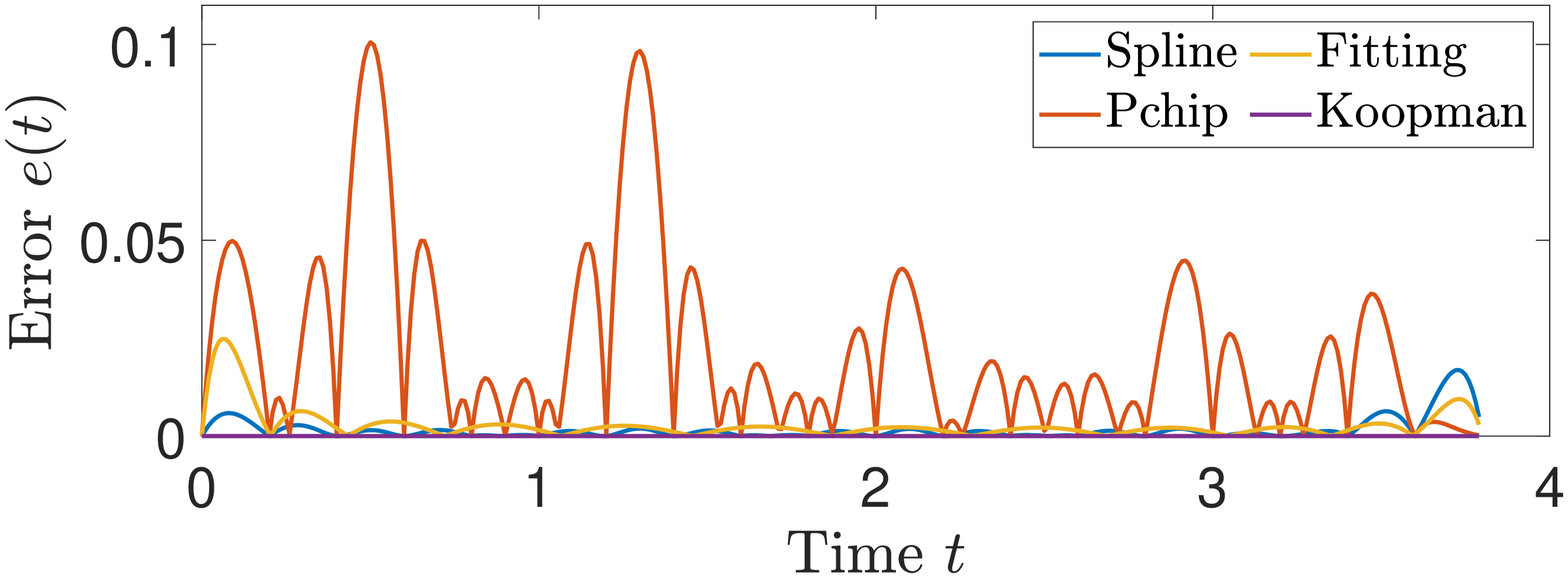}}\\
	\subfloat[]{
		\label{Fig2.sub2}
		\includegraphics[width=.45\textwidth]{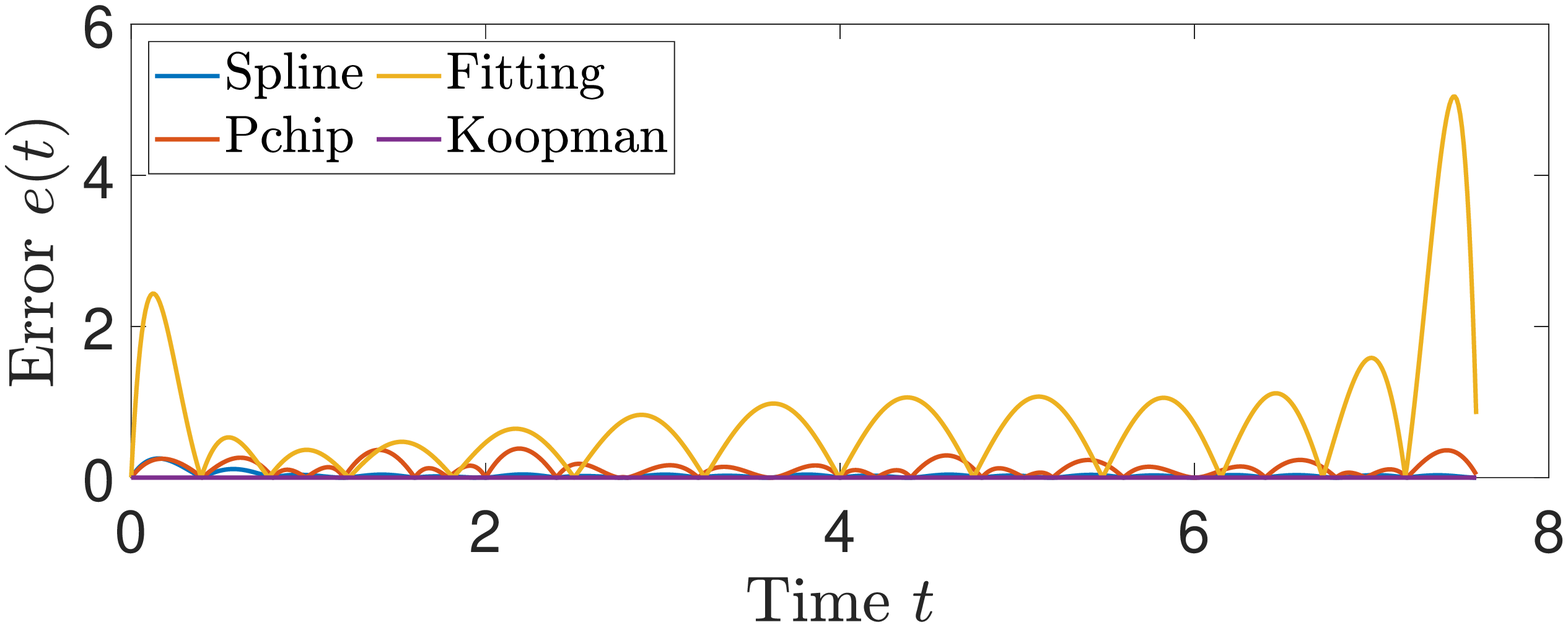}}\\
	\subfloat[]{
		\label{Fig2.sub3}
		\includegraphics[width=.45\textwidth]{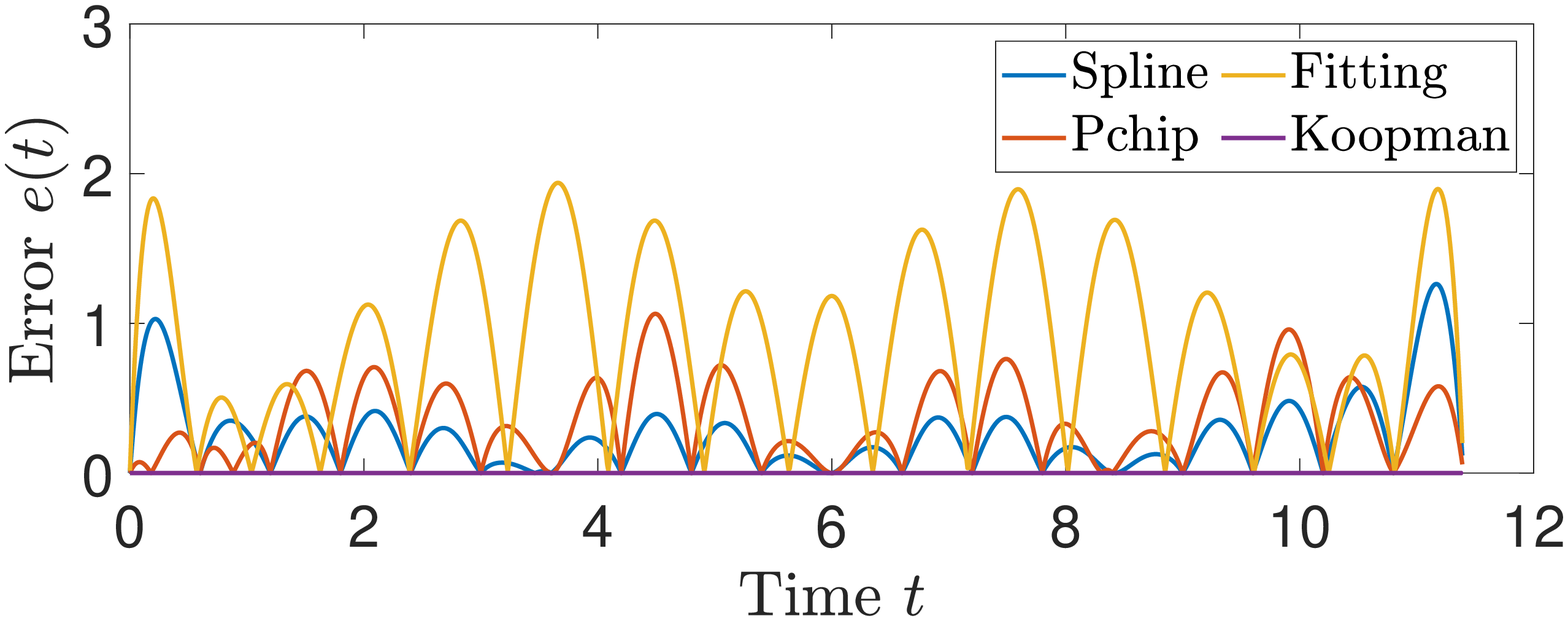}}\\
	\caption{Reconstruction of the band-limited signal from samples of sampling period $T_s = 0.2$s (a), $T_s = 0.4$s (b), and $T_s=0.6$s (c)
		.}
	\label{fig2}
\end{figure}

\begin{figure}[thpb]
	\centering
	\subfloat[]{
		\label{Fig3.sub1}
		\includegraphics[width=.42\textwidth]{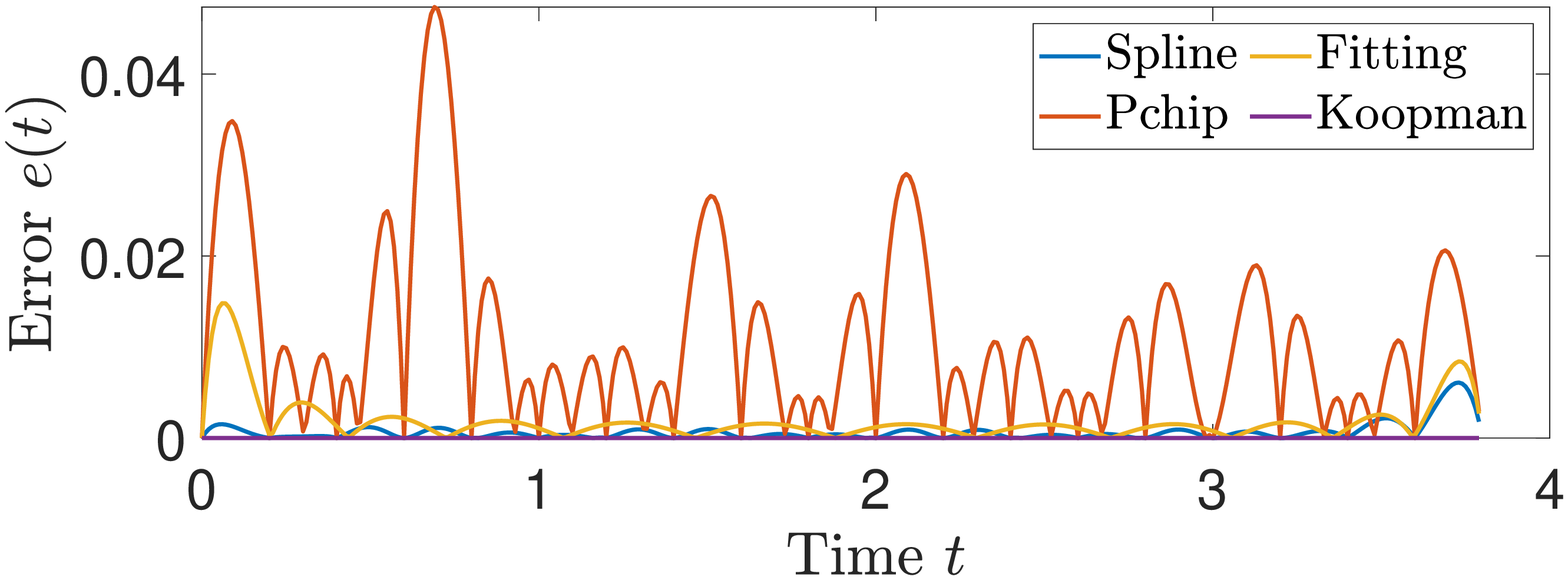}}\\
	\subfloat[]{
		\label{Fig3.sub2}
		\includegraphics[width=.42\textwidth]{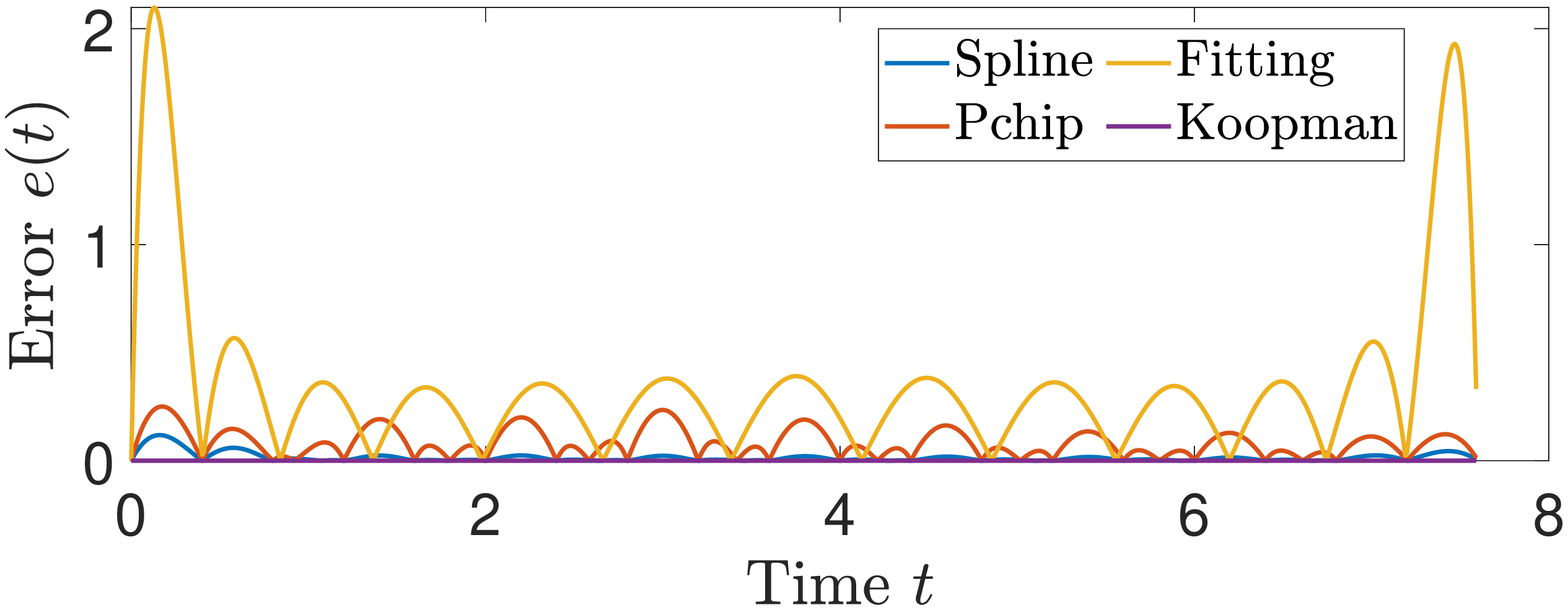}}\\
	\subfloat[]{
		\label{Fig3.sub3}
		\includegraphics[width=.42\textwidth]{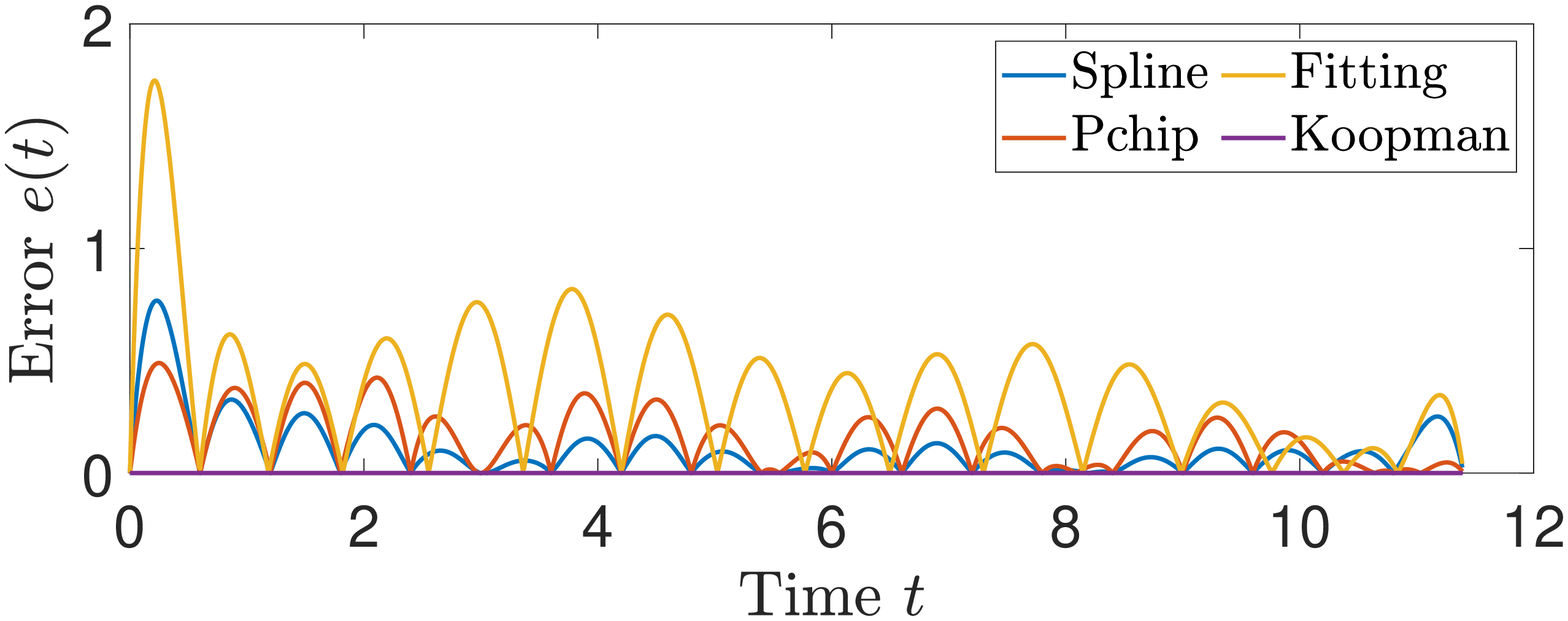}}\\
	\caption{Reconstruction of the signal with exponential growth from samples of sampling period $T_s = 0.2$s (a), $T_s = 0.4$s (b), and $T_s=0.6$s (c)}
	\label{fig3}
\end{figure}

\begin{figure}[thpb]
	\centering
	\subfloat[]{
		\label{Fig4.sub1}
		\includegraphics[width=.42\textwidth]{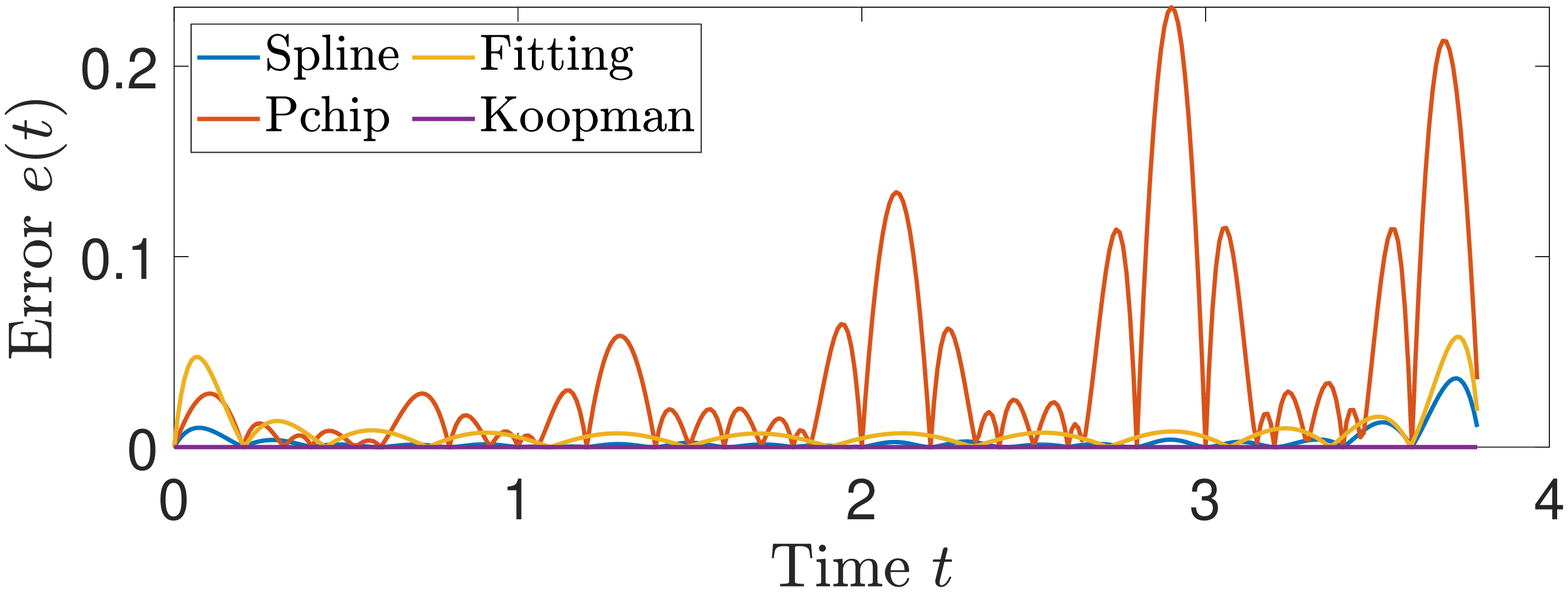}}\\
	\subfloat[]{
		\label{Fig4.sub2}
		\includegraphics[width=.42\textwidth]{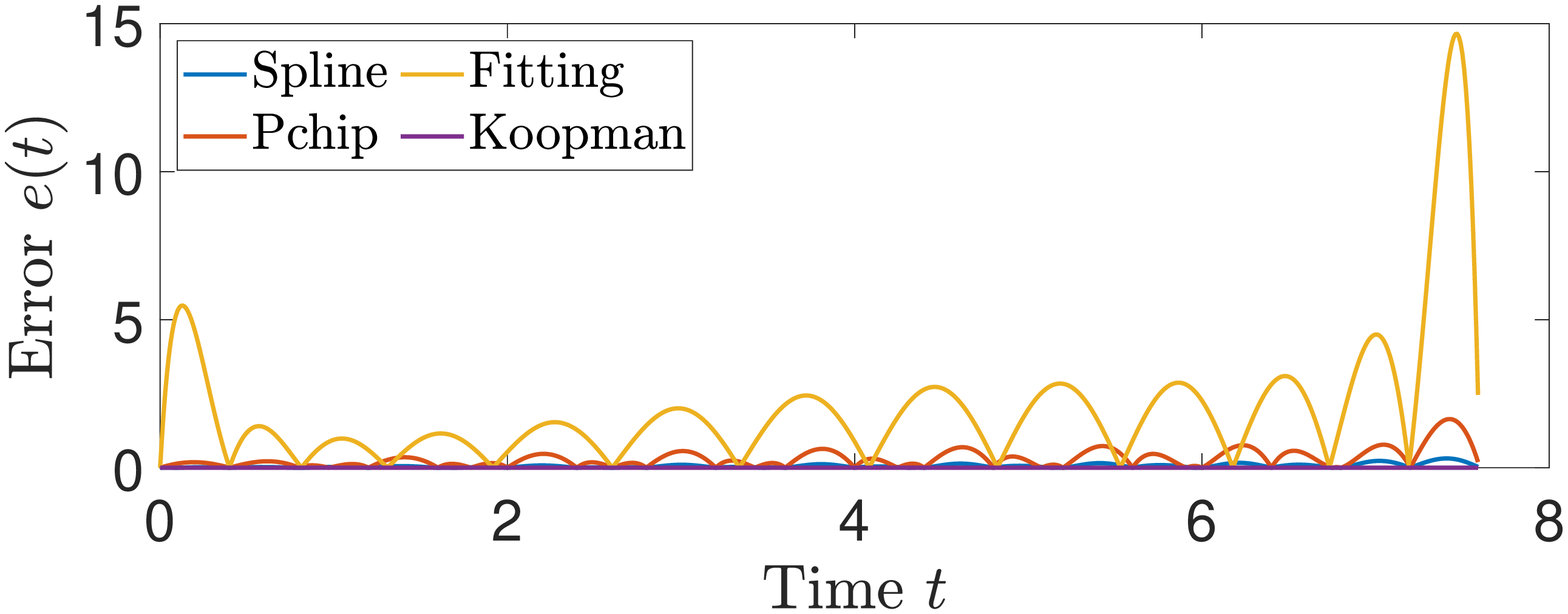}}\\
	\subfloat[]{
		\label{Fig4.sub3}
		\includegraphics[width=.42\textwidth]{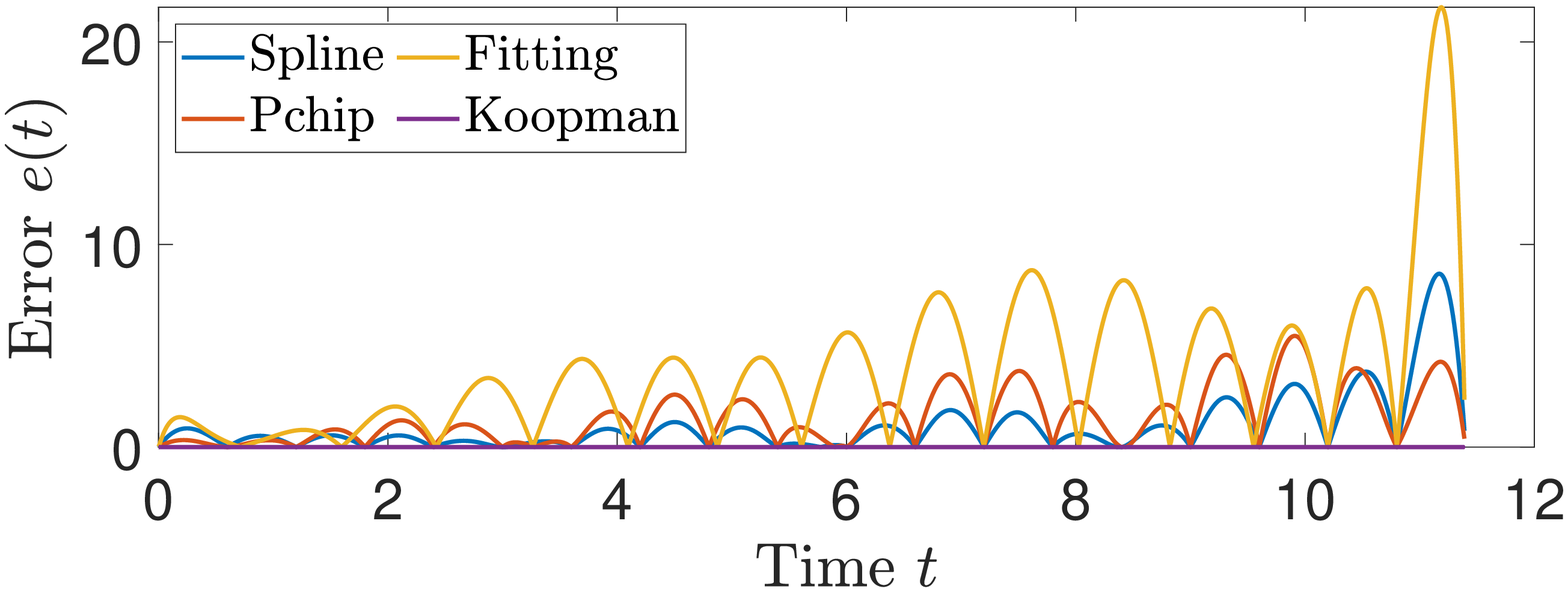}}\\
	\caption{Reconstruction of the signal with polynomial growth from samples of sampling period $T_s = 0.2$s (a), $T_s = 0.4$s (b), and $T_s=0.6$s (c)}
	\label{fig4}
\end{figure}

\begin{figure}[thpb]
	\centering
	\subfloat[]{
		\label{Fig5.sub1}
		\includegraphics[width=.42\textwidth]{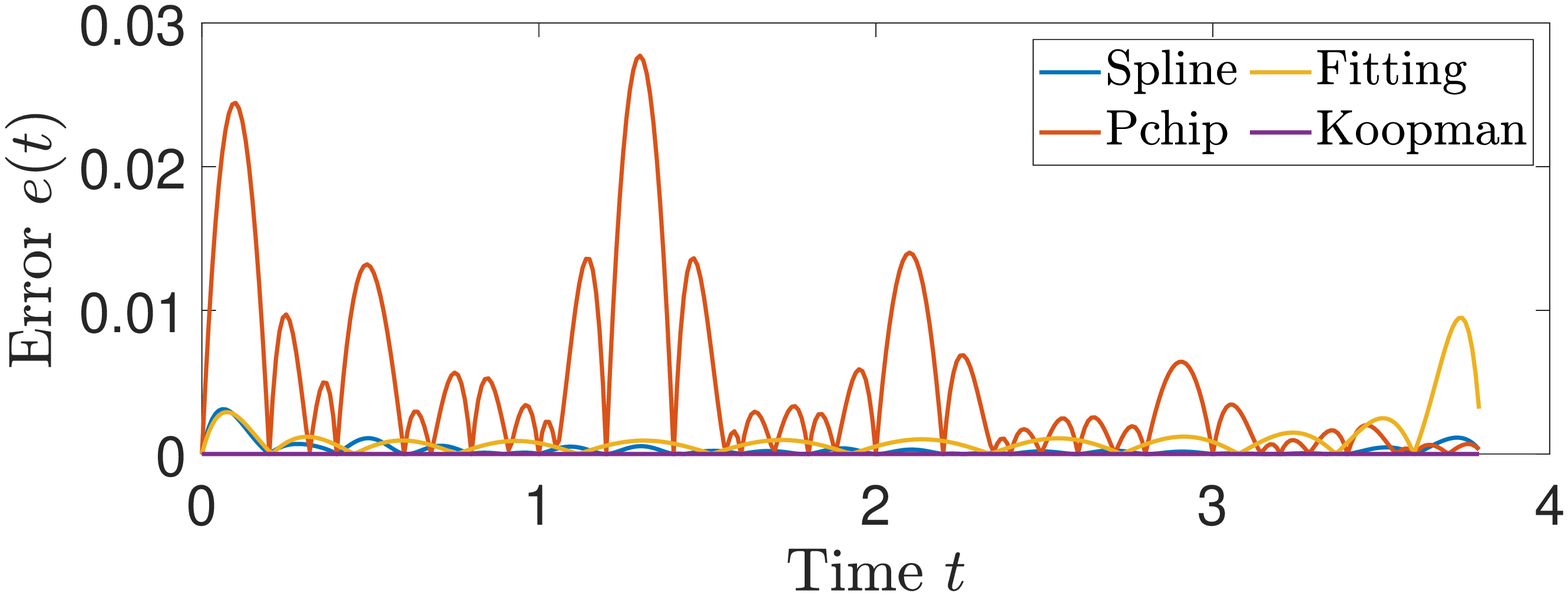}}\\
	\subfloat[]{
		\label{Fig5.sub2}
		\includegraphics[width=.42\textwidth]{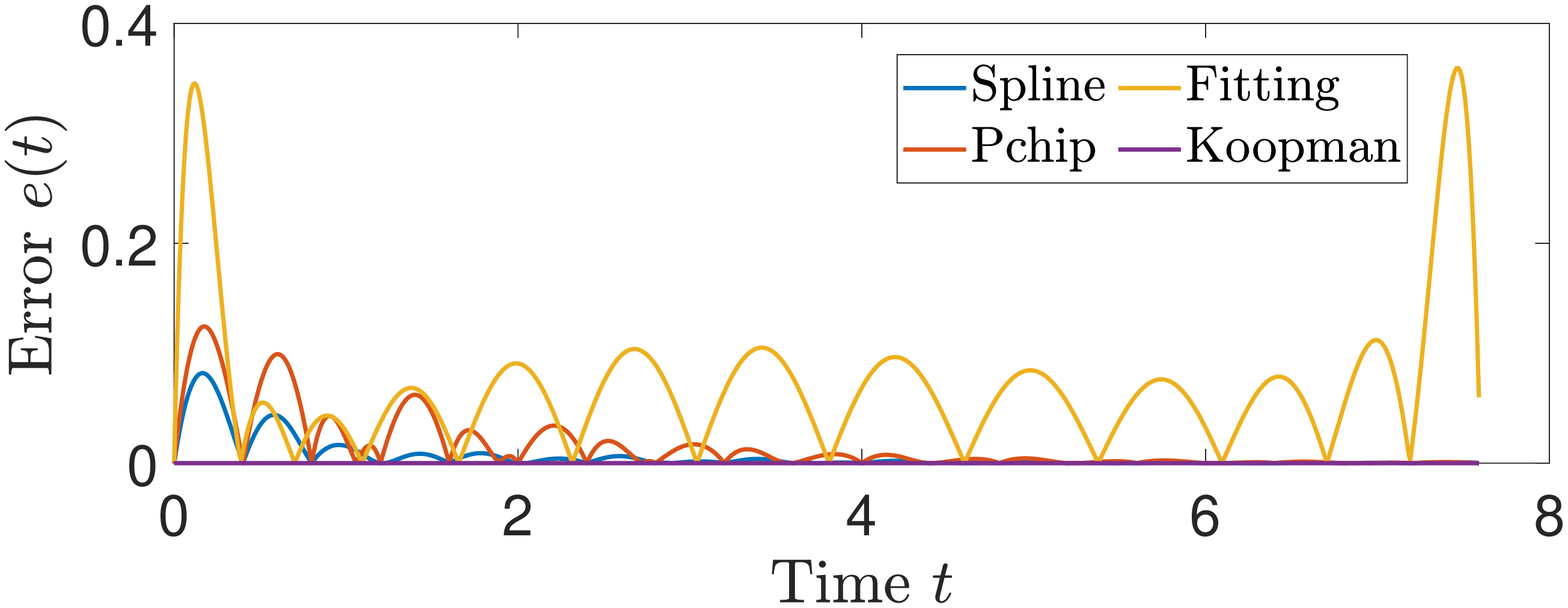}}\\
	\subfloat[]{
		\label{Fig5.sub3}
		\includegraphics[width=.42\textwidth]{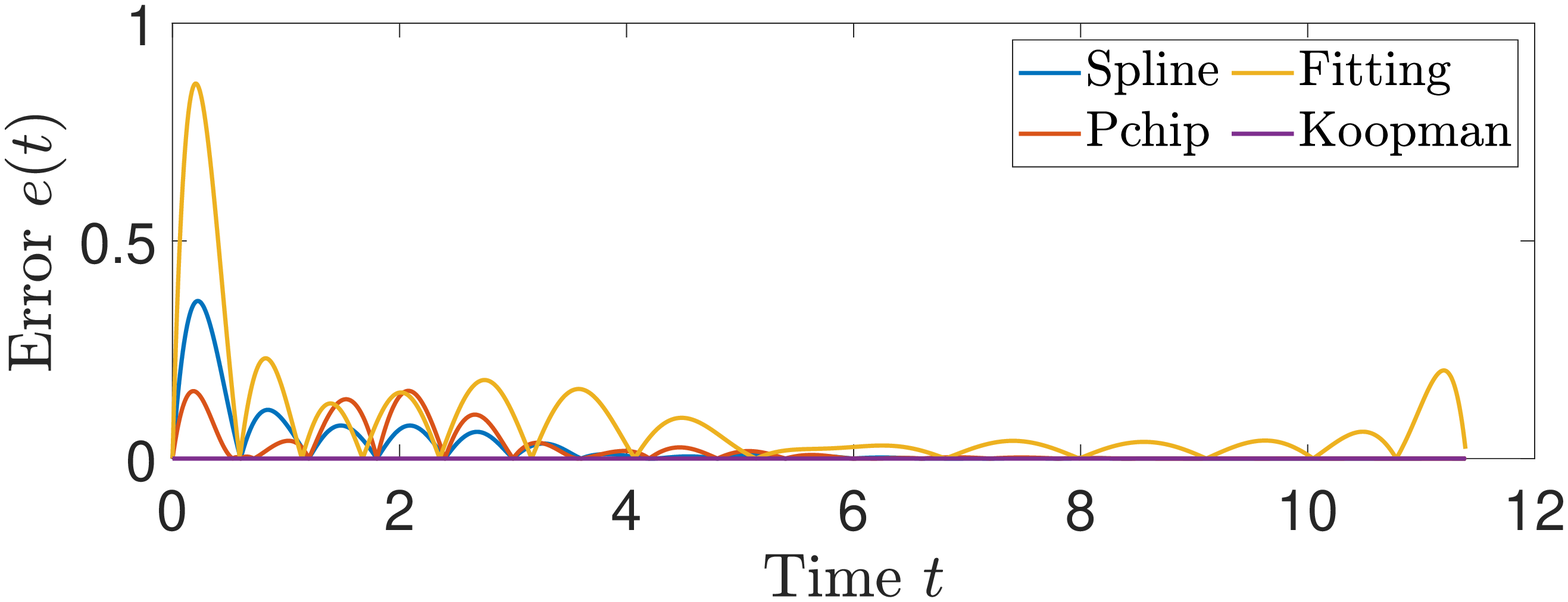}}\\
	\caption{Reconstruction of the signal with exponential and polynomial growth when sampling periods are $T_s = 0.2$s (a), $T_s = 0.4$s (b), and $T_s=0.6$s (c)}
	\label{fig5}
\end{figure}

Hence, we proceed to illustrate Theorem \ref{thKs} using the KR method. Fig. \ref{fig6}--Fig. \ref{fig9} show the results of signal reconstruction of these four signals, where the sampling periods are chosen to be $0.78$s and $0.79$s, both close to the sampling bound $T_\gamma \approx 0.785$s given by Theorem \ref{thKs}. In these figures, the samples, true signal $g(t)$, and reconstructed signal $\hat{g}(t)$ are denoted as triangles, blue lines, and orange dashed lines, respectively. It shows that, even though the sampling period $T_s$ is very closed to $T_\gamma$, the recovered signal is almost identical to the truth when $T_s <T_\gamma$ as shown in Fig. \ref{Fig6.sub1}--Fig.\ref{Fig9.sub1}. Moreover, the reconstruction fails when the sampling condition is not satisfied as shown in Fig. \ref{Fig6.sub2}--Fig. \ref{Fig9.sub2}. Specifically, when the sampling period $T_s=0.79$s \bl{slightly} exceeds $T_\gamma\approx0.785$, the reconstructed signal is distinctly different from the true signal but still passing through the samples. Therefore, signal aliasing exists when the sampling condition is not satisfied, which is consistent with Theorem \ref{thKs}.


\begin{figure}[thpb]
	\centering
	\subfloat[]{
		\label{Fig6.sub1}
		\includegraphics[width=.44\textwidth]{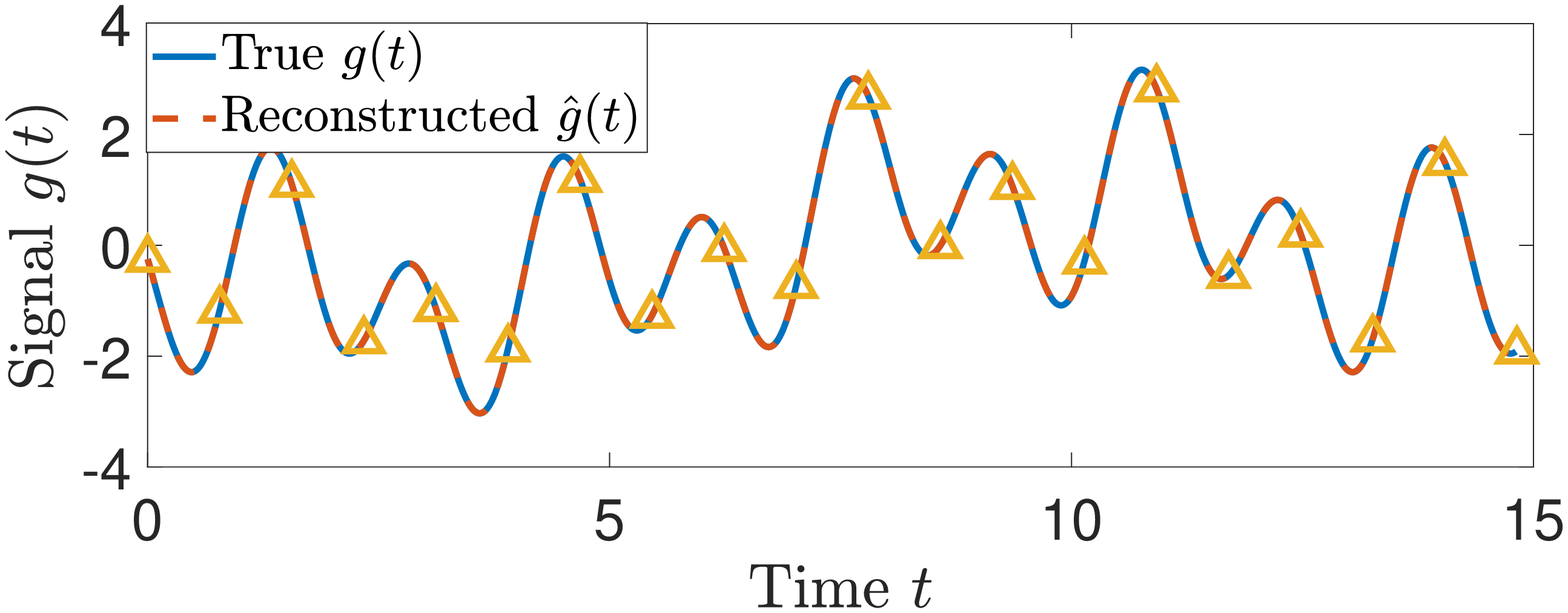}}\\
	\subfloat[]{
		\label{Fig6.sub2}
		\includegraphics[width=.44\textwidth]{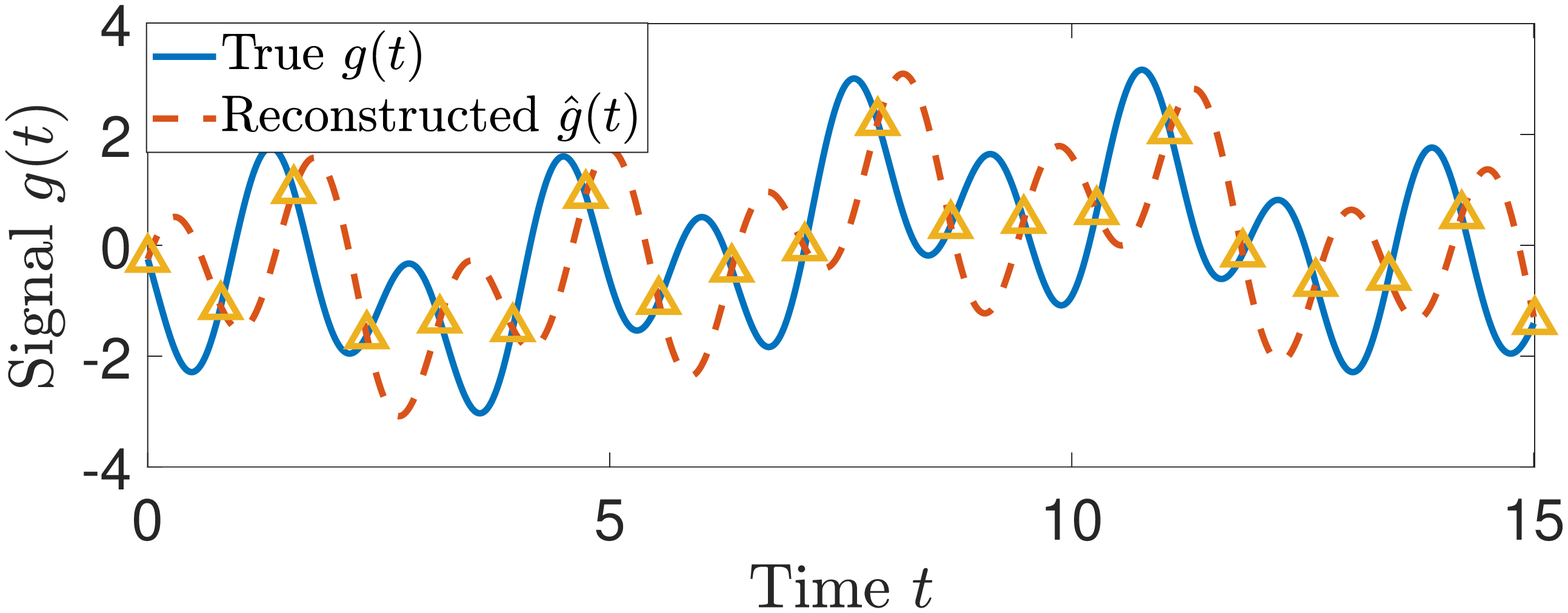}}\\
	\caption{Reconstruction of the band-limited signal from the samples of sampling period $T_s = 0.78$s (a), and $T_s=0.79$s (b)
		.}
	\label{fig6}
\end{figure}
\begin{figure}[thpb]
	\centering
	\subfloat[]{
		\label{Fig7.sub1}
		\includegraphics[width=.44\textwidth]{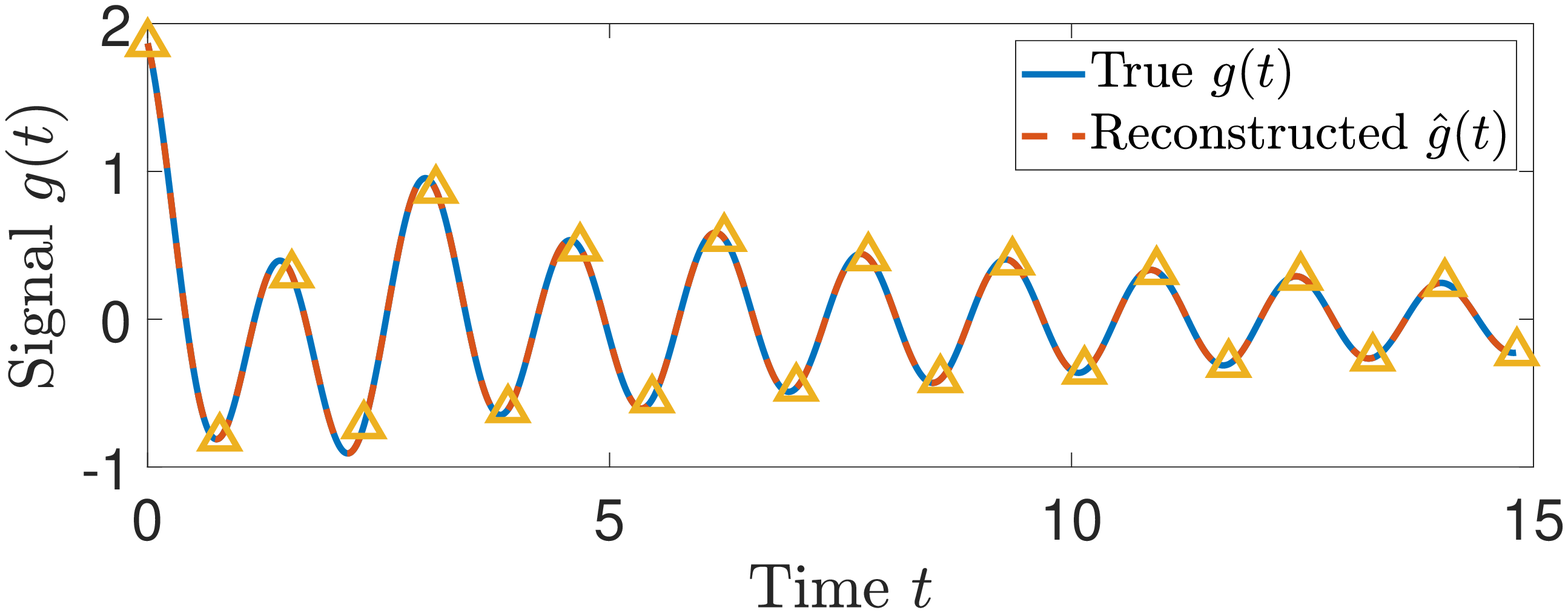}}\\
	\subfloat[]{
		\label{Fig7.sub2}
		\includegraphics[width=.44\textwidth]{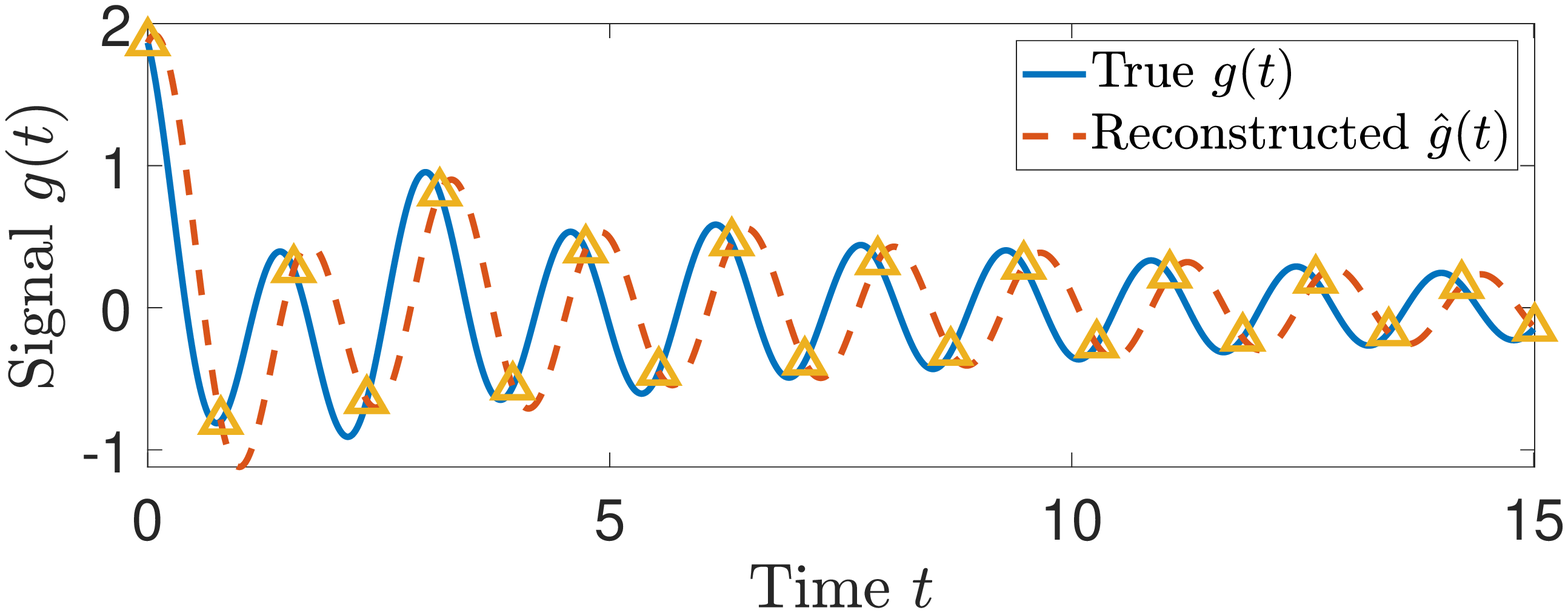}}\\
	\caption{Reconstruction of the signal with exponential growth from the samples of sampling period $T_s = 0.78$s (a), $T_s=0.79$s (b). }
	\label{fig7}
\end{figure}

\begin{figure}[thpb]
	\centering
	\subfloat[]{
		\label{Fig8.sub1}
		\includegraphics[width=.44\textwidth]{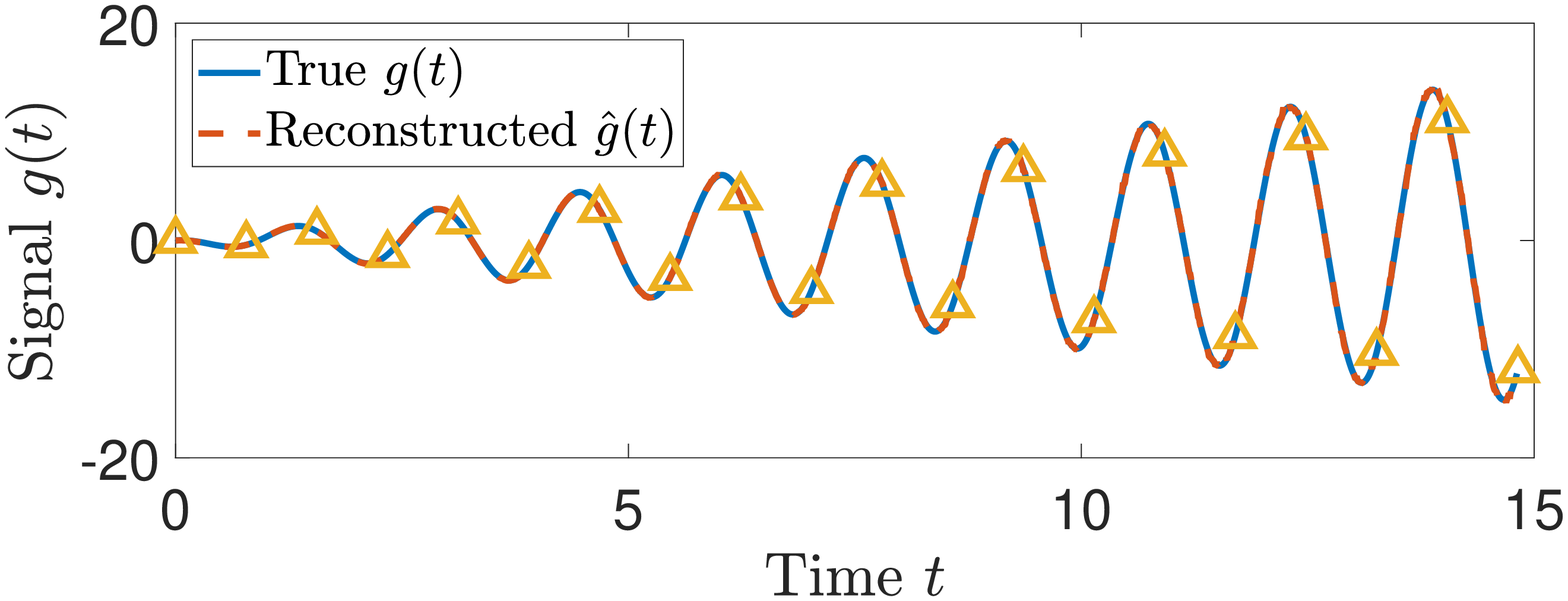}}\\
	\subfloat[]{
		\label{Fig8.sub2}
		\includegraphics[width=.44\textwidth]{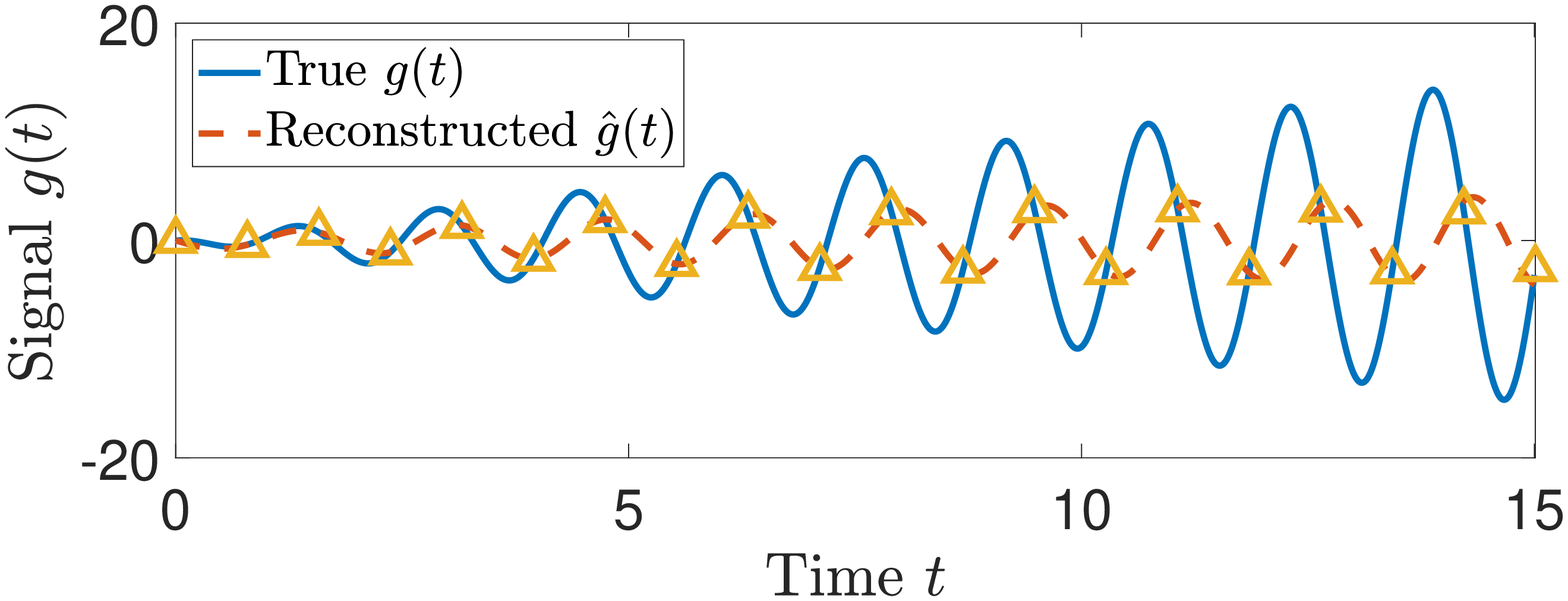}}\\
	\caption{Reconstruction of the signal with polynomial growth from the samples of sampling period $T_s = 0.78$s (a), $T_s=0.79$s (b). }
	\label{fig8}
\end{figure}

\begin{figure}[thpb]
	\centering
	\subfloat[]{
		\label{Fig9.sub1}
		\includegraphics[width=.44\textwidth]{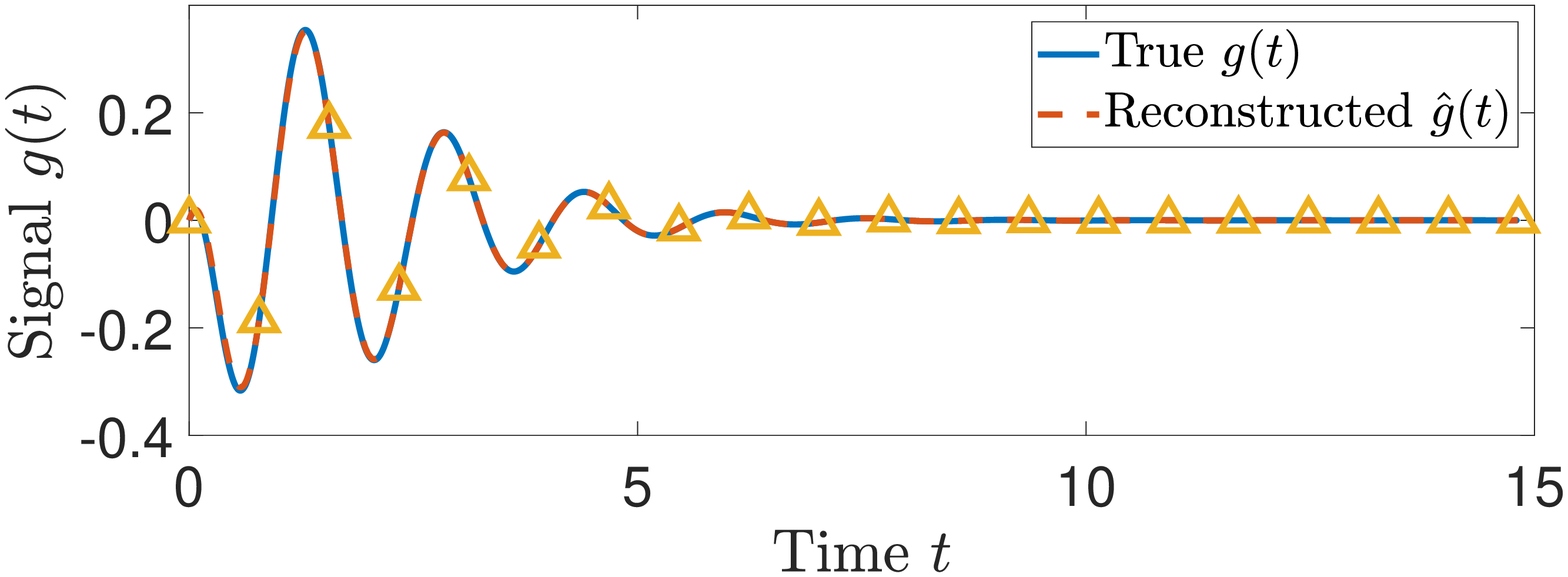}}\\
	\subfloat[]{
		\label{Fig9.sub2}
		\includegraphics[width=.44\textwidth]{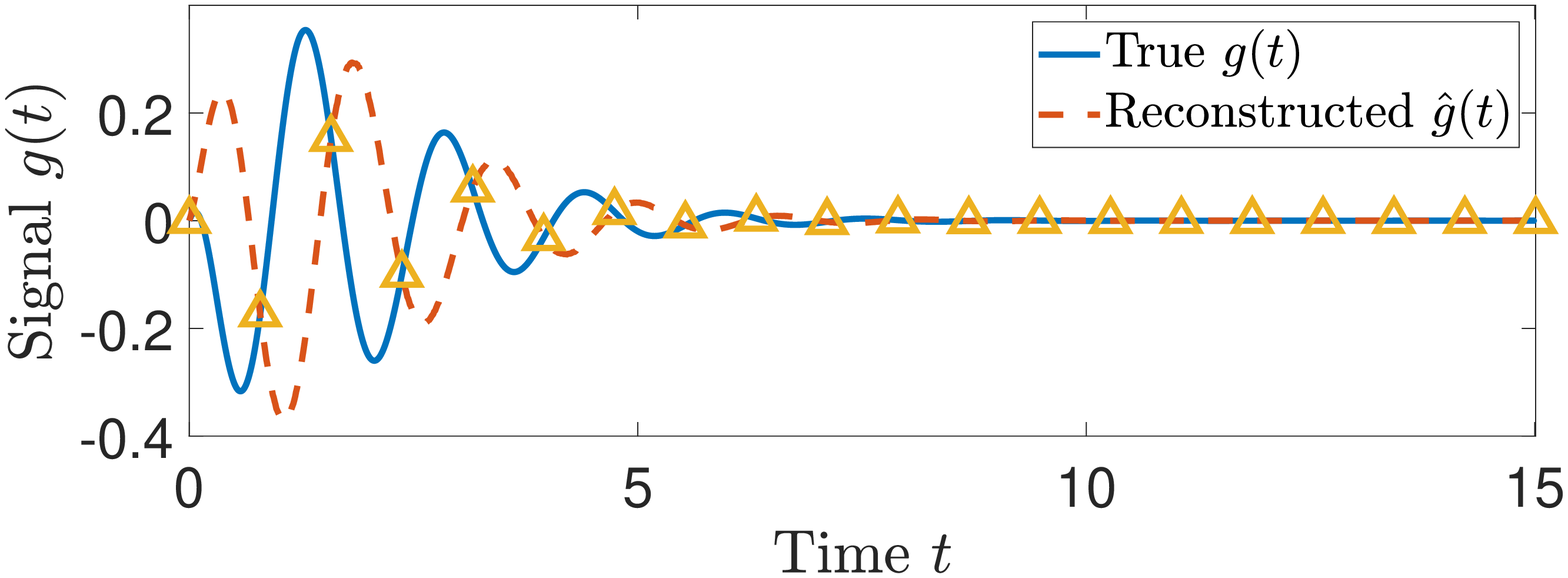}}\\
	\caption{Reconstruction of the signal with polynomial and exponential growth from the samples of sampling period $T_s = 0.78$s (a), $T_s=0.79$s (b). }
	\label{fig9}
\end{figure}

However, samples are inevitably corrupted with noise in practice. To numerically analyze the behavior of KR method in this case, we reconstruct these four signals a)--d) from samples with white noise. To tolerate the perturbations in the KR method, we sample signals with a sampling period $T_s = 0.3$s. Moreover, we use $\tau\in[0,T_s)$ and all the available samples for the signal reconstruction, i.e, $[g(kT_s+\tau),\ldots,g((k+M-1)T_s+\tau)] = [g(kT_s),\ldots,g((k+M-1)T_s)]\overline{U}^\tau.$ The reconstruction results are shown in Fig. \ref{fig10}-- Fig.\ref{fig13} for these four signals, where the top, middle, and bottom subgraphs represent the cases when the signal-to-noise ratio ($\text{SNR}$) is $30,20$, and $10$, respectively. The grey, blue, and dashed orange line\bl{s} denote the noisy $g_n(t)$, clean $g(t)$, and reconstructed signal $\hat{g}(t)$, respectively. The triangles denote the noisy samples. These results show that \bl{the} KR method can reconstruct signals when $\text{SNR}=30$. Even when the noise is strong ($\text{SNR}=10, 20$), the reconstructed signal can still maintain the original signal\bl{'s} evolution trend. Hence, it suggests that this method is able to tolerate a certain level of white noise. \bl{Intuitively, this robustness can be attributed to the KR method's emphasis on capturing the overall characteristics of the signal, which helps mitigate the adverse effects of sample noise. Specifically, when a signal is recovered using this method, its amplitude growth and oscillation frequency are determined by the spectral properties of the Koopman generator, which are identified from the samples. As long as the noisy samples do not obscure the overall trend of the original signal, the identified generator's spectrum remains close to the truth. Consequently, the reconstructed signal retains characteristics similar to those of the original.}

\begin{figure}
	\centering
	\subfloat[]{
		\label{Fig10.sub1}
		\includegraphics[width=.42\textwidth]{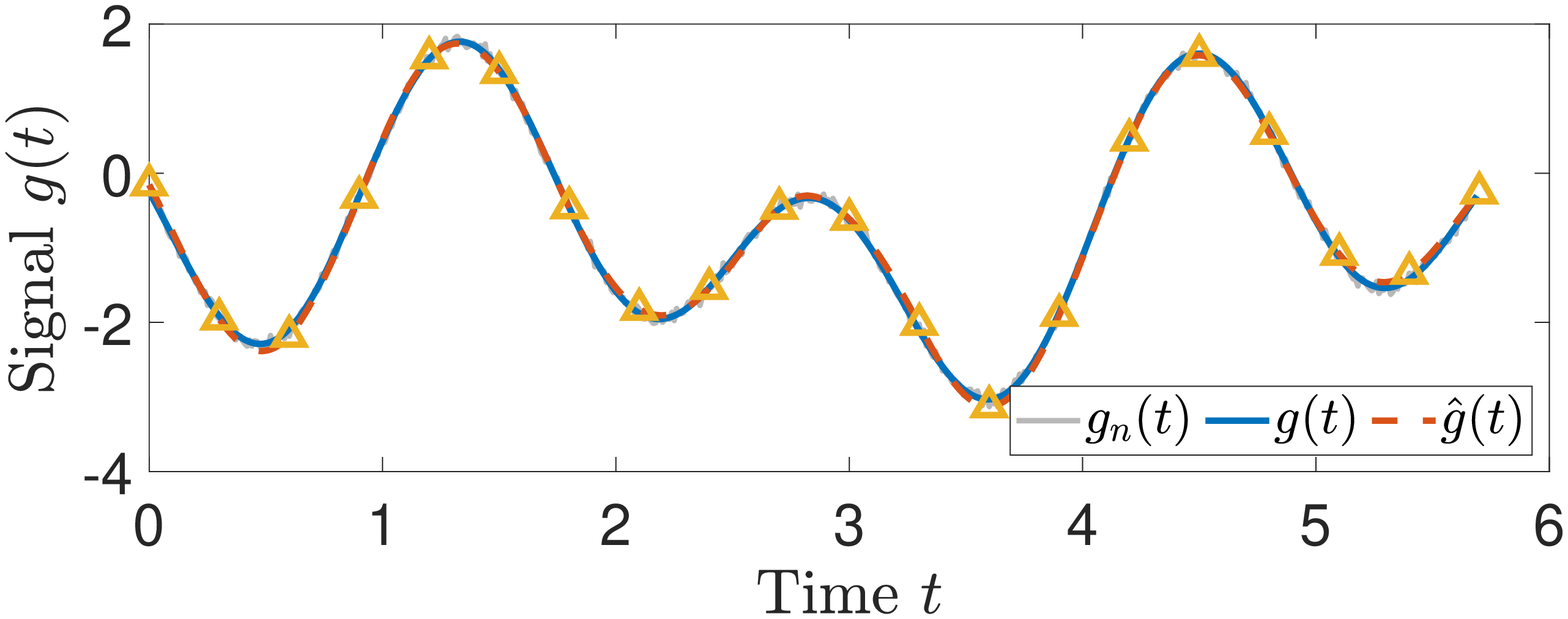}}\\
	\subfloat[]{
		\label{Fig10.sub2}
		\includegraphics[width=.42\textwidth]{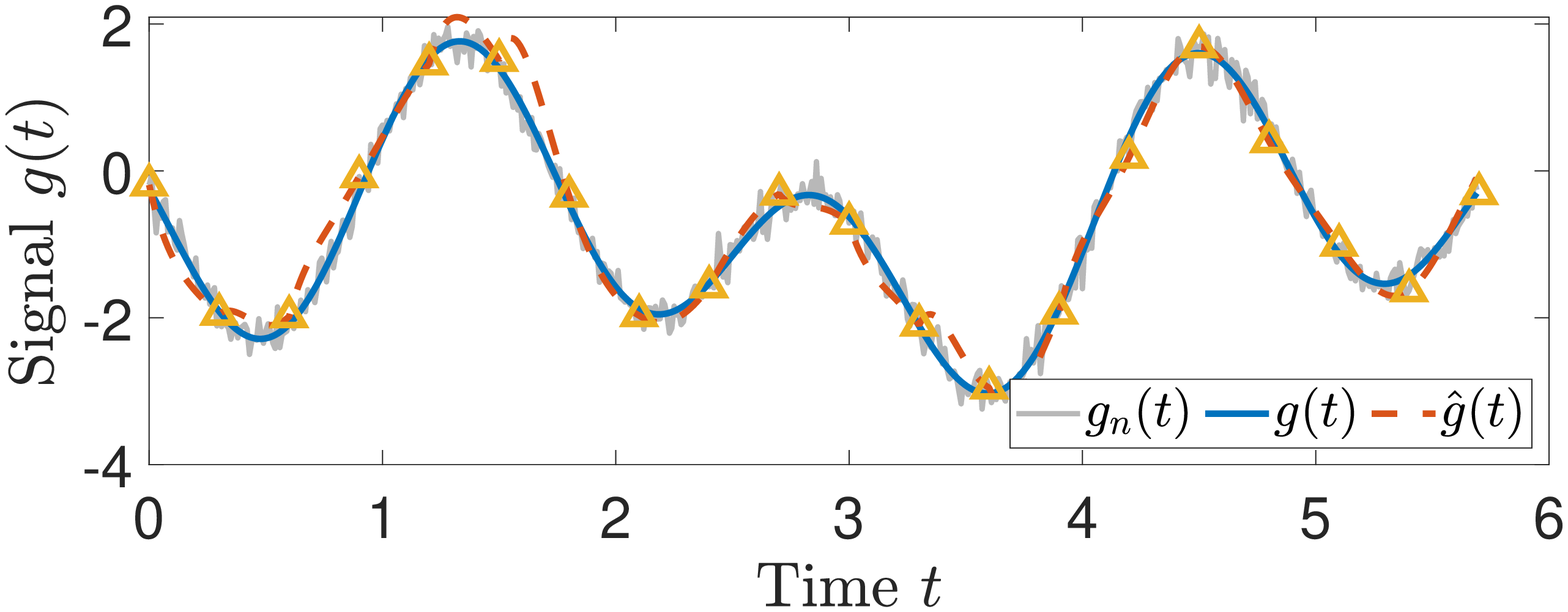}}\\
	\subfloat[]{\label{Fig10.sub3}
		\includegraphics[width=.42\textwidth]{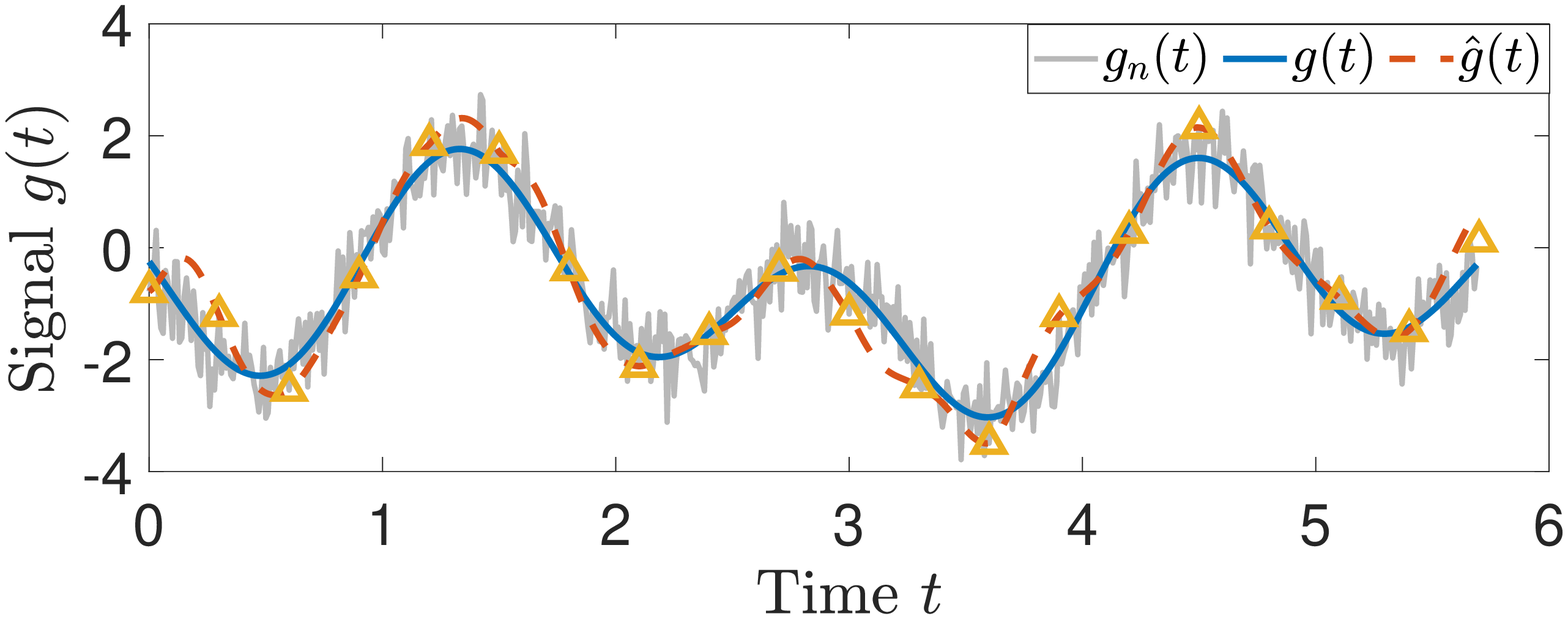}}
	\caption{The reconstruction of the band-limited signal by KR method in the presence of noise with $\text{SNR} = 30$ (a), $\text{SNR}=20$ (b), $\text{SNR}=10$ (c).}
	\label{fig10}
\end{figure}

\begin{figure}[thpb]
	\centering
	\subfloat[]{
		\label{Fig11.sub1}
		\includegraphics[width=.42\textwidth]{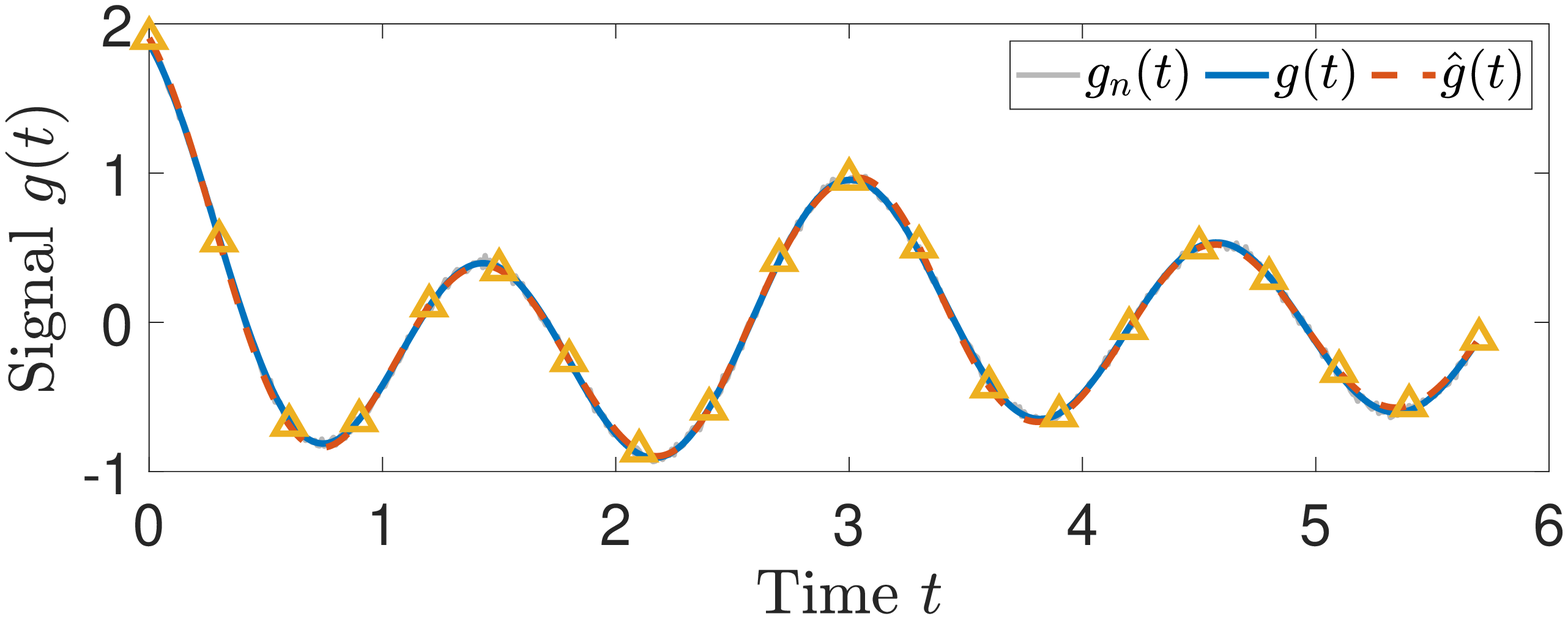}}\\
	\subfloat[]{
		\label{Fig11.sub2}
		\includegraphics[width=.42\textwidth]{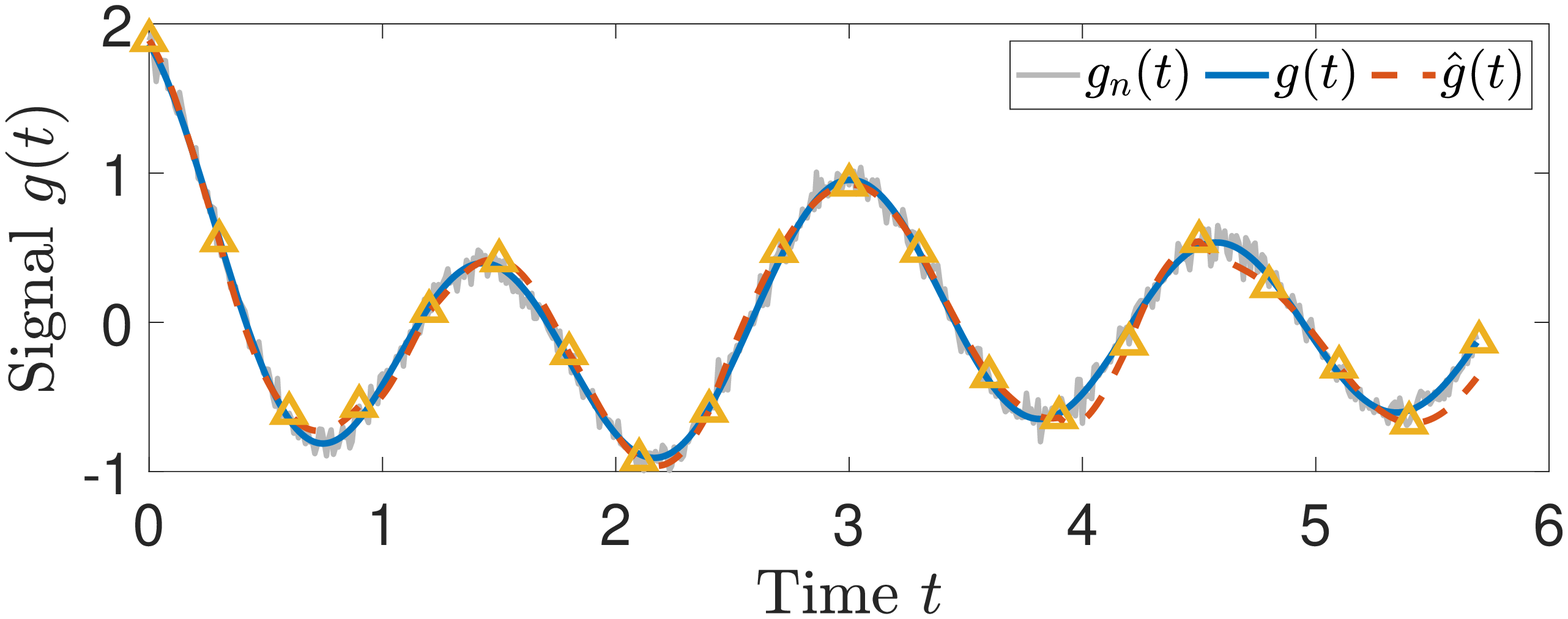}}\\
	\subfloat[]{\label{Fig11.sub3}
		\includegraphics[width=.42\textwidth]{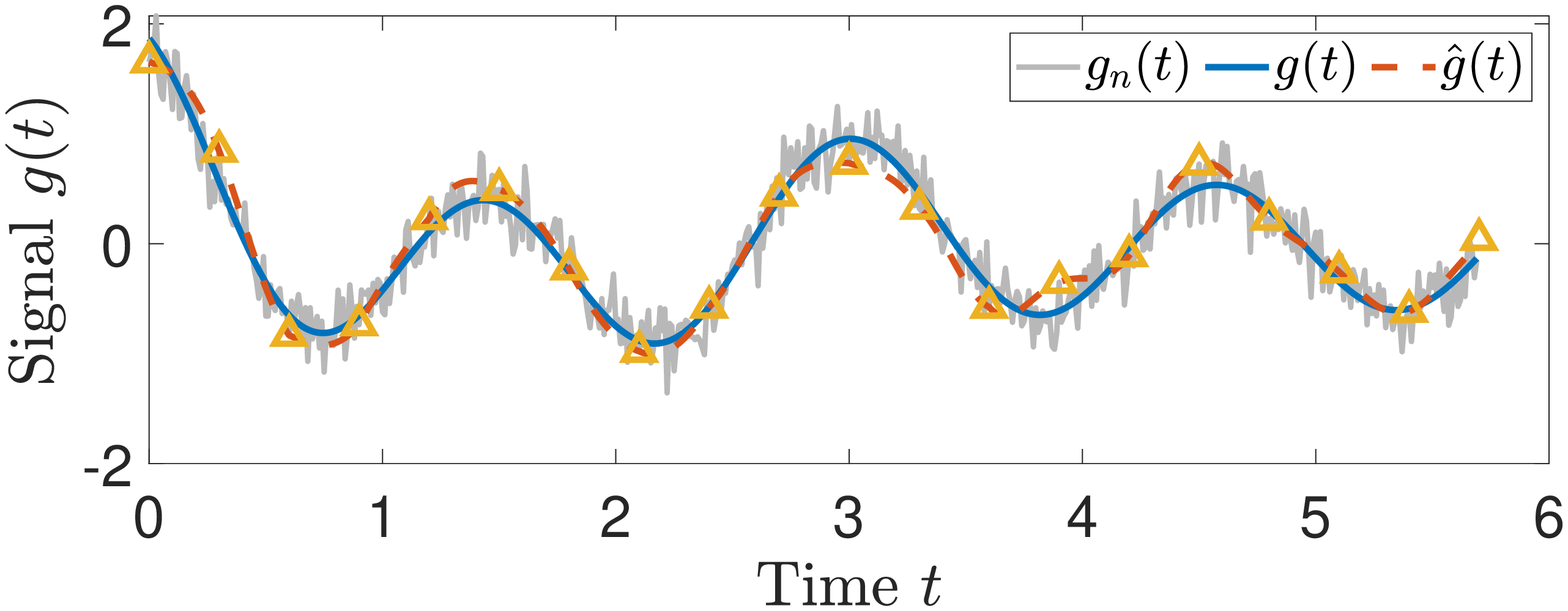}}
	\caption{The reconstruction of the signal with exponential growth by KR method when $ \text{SNR}= 30$ (a), $\text{SNR}=20$ (b),  $\text{SNR}=10$ (c).}
	\label{fig11}
\end{figure}

\begin{figure}[thpb]
	\centering
	\subfloat[]{
		\label{Fig12.sub1}
		\includegraphics[width=.42\textwidth]{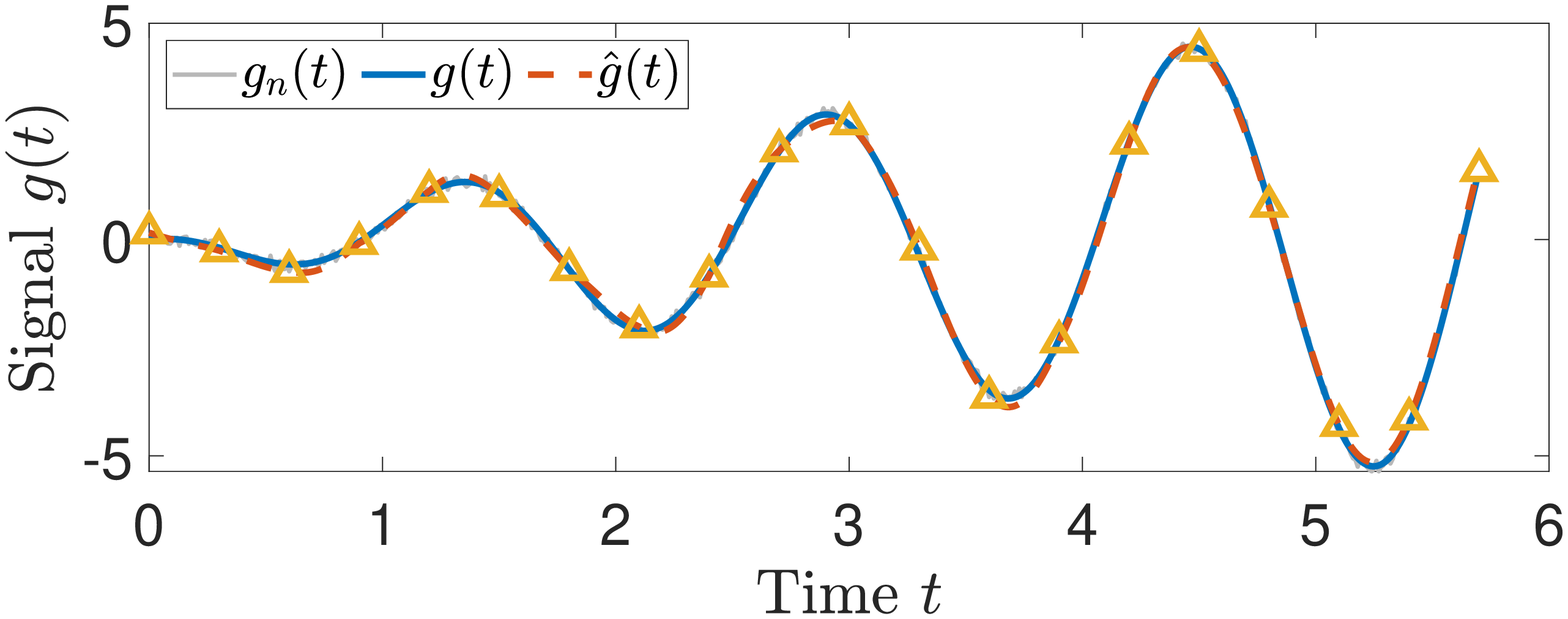}}\\
	\subfloat[]{
		\label{Fig12.sub2}
		\includegraphics[width=.42\textwidth]{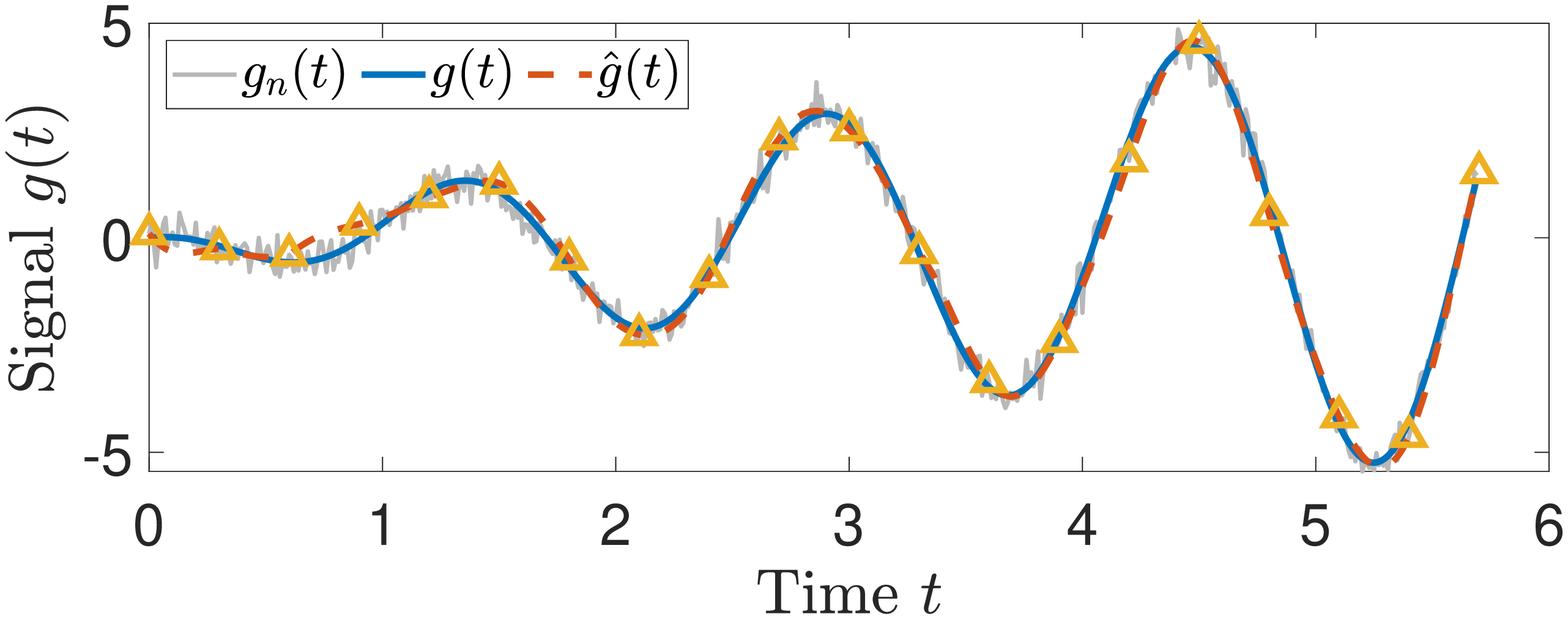}}\\
	\subfloat[]{\label{Fig12.sub3}
		\includegraphics[width=.42\textwidth]{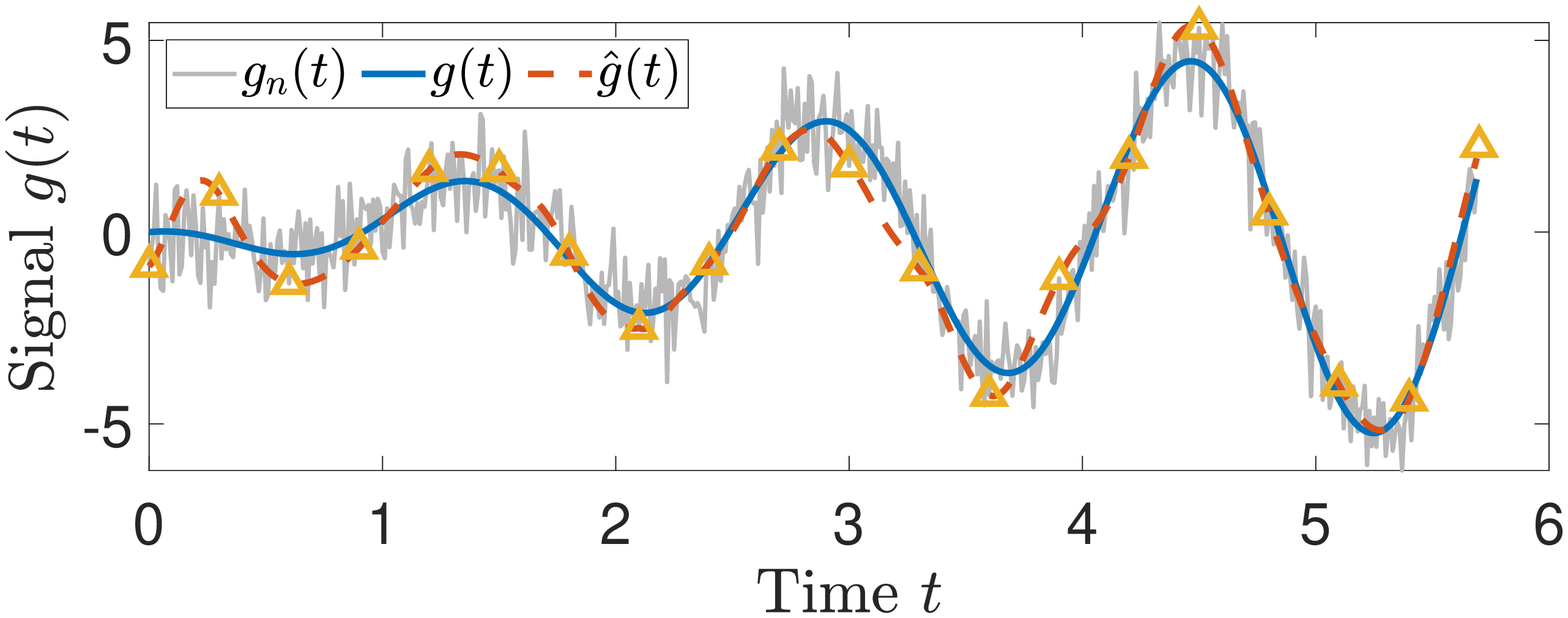}}
	\caption{The reconstruction of the signal with polynomial growth by KR method when $ \text{SNR}= 30$ (a), $\text{SNR}=20$ (b), $\text{SNR}=10$ (c).}
	\label{fig12}
\end{figure}

\begin{figure}[thpb]
	\centering
	\subfloat[]{
		\label{Fig13.sub1}
		\includegraphics[width=.42\textwidth]{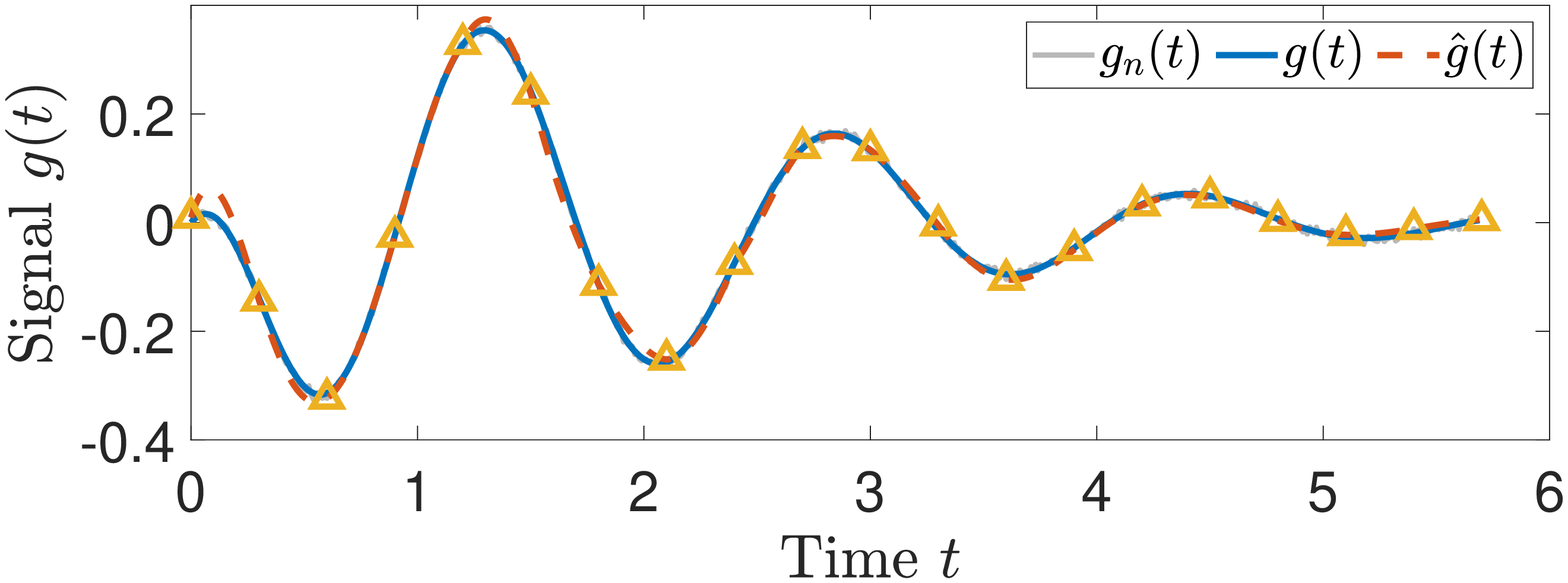}}\\
	\subfloat[]{
		\label{Fig13.sub2}
		\includegraphics[width=.42\textwidth]{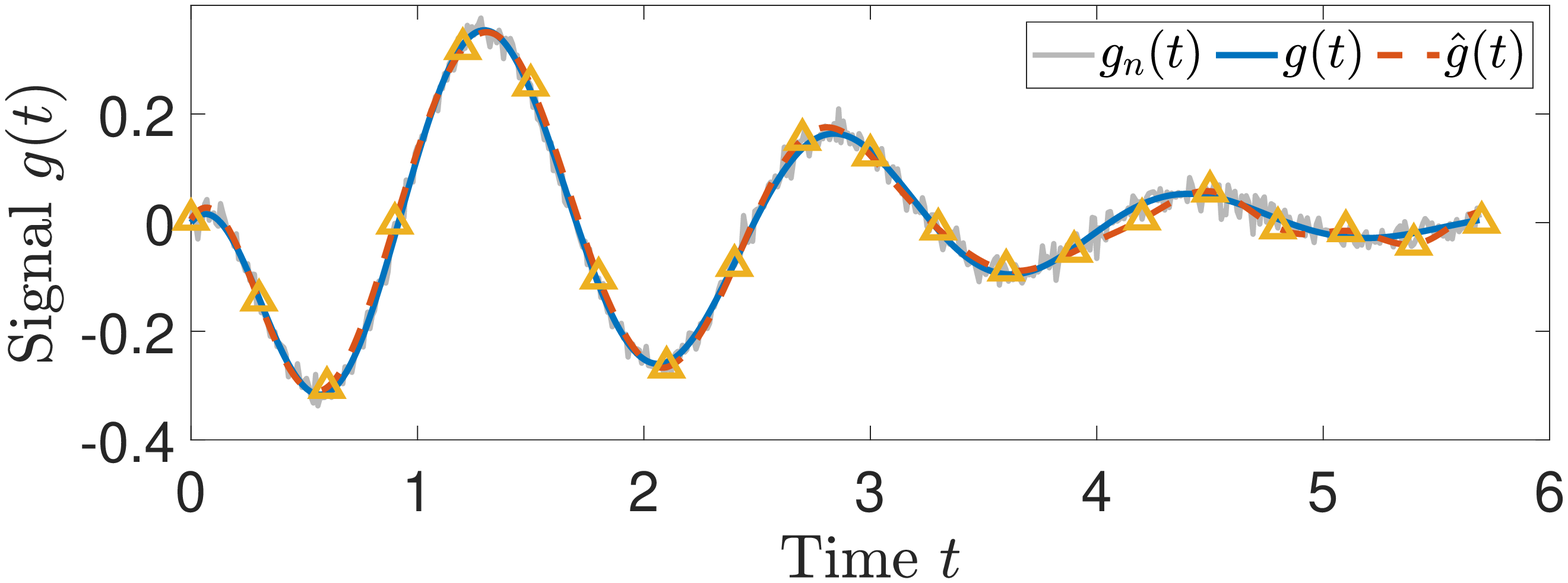}}\\
	\subfloat[]{\label{Fig13.sub3}
		\includegraphics[width=.42\textwidth]{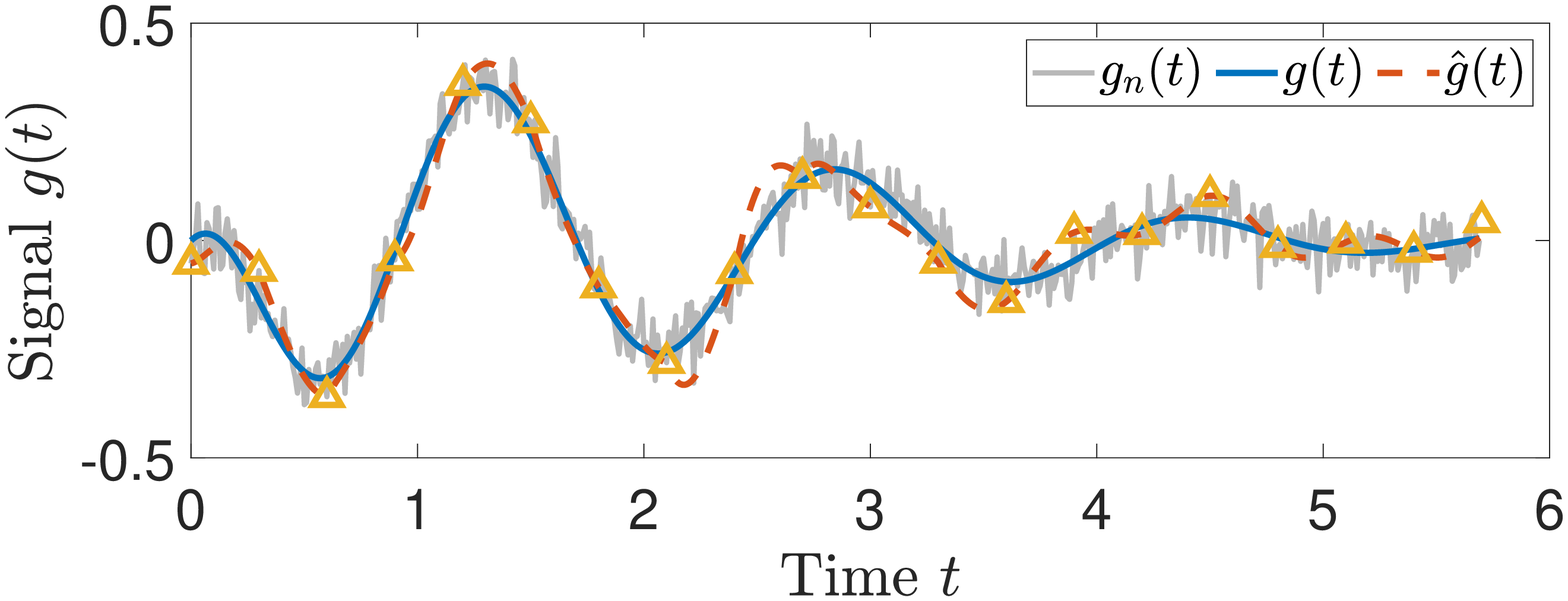}}
	\caption{The reconstruction of the signal with exponential and polynomial growth by KR method 
		when  $\text{SNR}= 30$ (a), $\text{SNR}=20$ (b), $\text{SNR}=10$ (c).}
	\label{fig13}
\end{figure}


\section{CONCLUSIONS}\label{sec:con}
Firstly, a generalized sampling theorem is proposed for signals in a generator-bounded space by the Koopman operator theory. This result shows that the sampling bound is determined by the imaginary part of the Koopman spectrum. Through the generalization from the one-dimensional Fourier spectrum to two-dimensional Koopman spectrum, it gives a finite bound of sampling rate for many non-band-limited signals, making the Nyquist rate a special case. Secondly, the reconstruction formula is theoretically investigated, revealing that it can reduce to classical forms for certain signals, such as band-limited signals and Zakai's class of signals. For numerical illustration, the KR method is provided with theoretical convergence, which illustrates the generalized sampling theorem on several signals related to band-limited, exponential, and polynomial functions. Moreover, this method exhibits robustness against low sampling frequency and noise. 

\bl{Several extensions could further advance the application of operator theory in sampling problems, which we consider a promising direction for future work. Firstly, addressing data noise within the Koopman operator framework remains a critical issue. In general, smoothing and filtering are required because interpolating noisy data is dangerous \cite{pawlak2003postfiltering,pawlak2007signal}. There are numerous studies focusing on characterizing and correcting the effects of data noise in Koopman spectral analysis (e.g., \cite{dawson2016characterizing, hemati2017biasing}). Building on these studies, integrating techniques such as Kalman filtering and smoothing into the Koopman operator framework would be beneficial for identifying the Koopman operator from noisy data, thereby enhancing the robustness of signal reconstruction approach. Secondly, the sampling of stochastic signals is of great importance and can be explored through the lens of the stochastic Koopman operator \cite{vcrnjaric2020koopman} or the Frobenius-Perron operator \cite{lasota2013chaos}, which is adjoint to the Koopman operator. Thirdly, irregular sampling and missing values present another problem for research. From the Koopman operator perspective, this could be studied by approximating the infinitesimal generator using the Hille-Yosida theorem \cite[Page 8-13]{pazy2012semigroups}, rather than through the operator logarithm, as numerically implemented in \cite{meng2024koopman}. Considering these aspects, Koopman operator theory has great potential for studying more comprehensive sampling methodology and theory.} 



\appendices 
	\bl{\section{Other examples of generator-bounded space}\label{otherexamples}
	
	\begin{lemma}\label{lem:f0}
		Let $f\in L^2([0,b],\Real)$ and its derivative $f^{(1)}\in L^2([0,b],\Real)$. We have 
		$$
		\abs{f(0)}^2\le \frac2b \int_0^{b}\abs{f(t)}^2{\rm d} t + 2b\int_0^{b} \abs{f^{(1)}(t)}^2{\rm d} t.
		$$ 
	\end{lemma}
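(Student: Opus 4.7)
The plan is to bound $|f(0)|^2$ pointwise in terms of $|f(t)|^2$ at an arbitrary $t\in[0,b]$ plus a controlled derivative term, and then average over $t$ to convert the pointwise value into the $L^2$ norm on the right-hand side.

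First I would use the fundamental theorem of calculus (which applies since $f^{(1)}\in L^2([0,b])$ makes $f$ absolutely continuous) to write $f(0)=f(t)-\int_0^t f^{(1)}(s)\,{\rm d}s$ for each $t\in[0,b]$. Taking absolute values and applying the elementary inequality $(a+b)^2\le 2a^2+2b^2$ yields
\begin{equation*}
    |f(0)|^2\le 2|f(t)|^2+2\left(\int_0^t |f^{(1)}(s)|\,{\rm d}s\right)^2.
\end{equation*}
Next I would apply the Cauchy--Schwarz inequality to the derivative integral, giving $\left(\int_0^t |f^{(1)}(s)|\,{\rm d}s\right)^2\le t\int_0^t |f^{(1)}(s)|^2\,{\rm d}s\le b\int_0^b |f^{(1)}(s)|^2\,{\rm d}s$, so that
\begin{equation*}
    |f(0)|^2\le 2|f(t)|^2+2b\int_0^b |f^{(1)}(s)|^2\,{\rm d}s.
\end{equation*}

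Finally, the right-hand side depends on $t$ only through $|f(t)|^2$, so I would integrate the inequality over $t\in[0,b]$ and divide by $b$. This replaces $2|f(t)|^2$ with $\frac{2}{b}\int_0^b|f(t)|^2\,{\rm d}t$ while leaving the derivative term unchanged (since it is $t$-independent), producing the claimed bound exactly. There is no real obstacle here: the only mild point to verify is that $f^{(1)}\in L^2$ implies $f$ is absolutely continuous so that the fundamental theorem of calculus is valid, which is standard. The two factors of $2$ and the powers of $b$ on the right-hand side of the lemma are precisely what the $2a^2+2b^2$ splitting and the Cauchy--Schwarz factor $t\le b$ produce.
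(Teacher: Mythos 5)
Your proof is correct and follows essentially the same route as the paper's: the fundamental theorem of calculus to relate $f(0)$ to $f(t)$, Cauchy--Schwarz on the derivative integral, the elementary inequality $(x+y)^2\le 2x^2+2y^2$, and an average over $[0,b]$. The only difference is cosmetic (you square pointwise before averaging, whereas the paper integrates first and squares at the end), and both yield the stated constants exactly.
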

	\begin{proof}
		By the fundamental theorem of calculus, for $\forall x\in [0,b]$, we can write 
		$$
		f(0) = f(x) - \int_0^x f^{(1)}(t){\rm d} t.
		$$
		It follows that
		$$
		\abs{f(0)} \le \abs{f(x)} + \int_0^{b} \abs{f^{(1)}(t)}{\rm d} t.
		$$
		Integrating both sides over $[0,b]$ gives
		$$
		b\abs{f(0)} \le \int_0^b\abs{f(t)}{\rm d} t + b\int_0^{b} \abs{f^{(1)}(t)}{\rm d} t.
		$$
		Using the Cauchy-Schwarz inequality to the integrals on the right-hand side gives
		\begin{align*}
			b\abs{f(0)} & \le \sqrt{\int_0^{b}\abs{f(t)}^2{\rm d} t}\sqrt{\int_0^{b}1{\rm d} t} + b \sqrt{\int_0^{b} \abs{f^{(1)}(t)}^2{\rm d} t}\sqrt{\int_0^{b}1{\rm d} t}\\
			& = \sqrt{b}\sqrt{\int_0^{b}\abs{f(t)}^2{\rm d} t} + b \sqrt{b}  \sqrt{\int_0^{b} \abs{f^{(1)}(t)}^2{\rm d} t}.
		\end{align*}
		Squaring both sides and using the elementary inequality $(x+y)^2\le 2x^2+2y^2$ conclude the proof.
	\end{proof}
	
	\begin{lemma}\label{lem:g}
		Let $g$ be a (classical) band-limited signal with bandwidth $W$. Then 
		$$
		\abs{g(t)}\le 2W\norm{g}_2^2,\quad \forall t\in\Real,
		$$
		where $\norm{g}_2=\sqrt{\int_{-\infty}^\infty\abs{g(t)}^2{\rm d} t}$.
	\end{lemma}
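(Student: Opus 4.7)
The plan is to exploit the Fourier representation of the band-limited signal together with Cauchy-Schwarz and Parseval's identity (Lemma \ref{lemma7}). Since $g$ has bandwidth $W$, its Fourier transform $G$ is supported on $[-W,W]$, so I start from
\begin{equation*}
g(t) = \frac{1}{2\pi}\int_{-W}^{W} G(\omega)e^{i\omega t}\,{\rm d}\omega.
\end{equation*}

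First, I would bound $|g(t)|$ by moving the absolute value inside the integral and invoking the Cauchy-Schwarz inequality against the constant function $1$ on the compact interval $[-W,W]$. This produces a factor of $\sqrt{2W}$ together with $\bigl(\int_{-W}^{W}|G(\omega)|^2\,{\rm d}\omega\bigr)^{1/2}$, and crucially the resulting estimate is independent of $t$, giving a uniform pointwise bound.

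Next, I would invoke Parseval's identity as stated in Lemma \ref{lemma7} to rewrite $\int_{-W}^{W}|G(\omega)|^2\,{\rm d}\omega$ in terms of $\|g\|_2^2$. Combining these two steps and squaring yields a pointwise estimate of the form $|g(t)|^2 \le c_W\,\|g\|_2^2$, where $c_W$ depends only on $W$ and the chosen Fourier normalization; the inequality claimed in the lemma then follows up to this constant (the factor $2W$ being a convenient upper estimate under the appropriate normalization convention).

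The argument is essentially a one-line application of Cauchy-Schwarz followed by Parseval, so no substantive obstacle is expected. The only care required is to track the Fourier-transform normalization convention consistently with Lemma \ref{lemma7} in order to match the explicit constant stated in the lemma.
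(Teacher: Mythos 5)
Your proposal is correct and follows essentially the same route as the paper's proof: Cauchy--Schwarz against the constant function on the compact frequency support followed by Parseval's identity (Lemma \ref{lemma7}), yielding a $t$-uniform bound $\abs{g(t)}^2\le c_W\norm{g}_2^2$. The only difference is bookkeeping of the Fourier normalization (the paper uses the $e^{2\pi i ts}$ convention with $W$ in Hz, which gives exactly the constant $2W$, while your angular-frequency convention gives $W/\pi\le 2W$, so the claimed bound still follows), and note that, as in the paper's own proof, the quantity actually bounded is $\abs{g(t)}^2$ rather than $\abs{g(t)}$.
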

	
	\begin{proof}
		Write 
		$
		g(t) = \int_{-W}^{W} G(s)e^{2\pi i t s}{\rm d} s, 
		$
		where $G\in L^2[-W,W]$ is the Fourier transform of $g$.  By the Cauchy-Schwarz inequality and Parseval's identity (Lemma \ref{lemma7}), we have
		\begin{align*}
			\abs{g(t)}^2 &\le  \left(\int_{-W}^{W} \abs{G(s)} {\rm d} s\right)^2  \le \int_{-W}^W 1^2 {\rm d} t \cdot \int_{-W}^{W} \abs{G(s)}^2 {\rm d} s \\&= 2W \int_{-W}^{W} \abs{G(s)}^2 {\rm d} s = 2W  \int_{-\infty}^\infty \abs{g(t)}^2 {\rm d} t. 
		\end{align*}
	\end{proof}
	
%
	
	\begin{lemma}\label{derivative_bound}
		Let $f(t)=f(0)+tg(t)$, where $g$ is a classical band-limited signal of bandwidth $W$. Then there exists a constant $C_W$, depending only on $W$, such that
		\begin{equation}\label{eq:bound}
			\int_{-\infty}^\infty \frac{\abs{f^{(1)}(t)}^2}{1+t^2} {\rm d} t \le C_W \int_{-\infty}^\infty \frac{\abs{f(t)}^2}{1+t^2}{\rm d} t.         
		\end{equation}
	\end{lemma}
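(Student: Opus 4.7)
The plan is to reduce the inequality to the bound $\|g\|_2^2\le C_W\int\frac{|f(t)|^2}{1+t^2}\,{\rm d}t$ and then apply Bernstein's inequality for band-limited functions to reintroduce the derivative on the left-hand side of \eqref{eq:bound}.

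First I differentiate $f=f(0)+tg$ to obtain $f^{(1)}(t)=g(t)+tg^{(1)}(t)$. Using $(a+b)^2\le 2a^2+2b^2$ together with the trivial bounds $\frac{1}{1+t^2},\frac{t^2}{1+t^2}\le 1$, the left-hand side of \eqref{eq:bound} is dominated by $2\|g\|_2^2+2\|g^{(1)}\|_2^2$. Since $g^{(1)}$ is band-limited of the same bandwidth $W$, Bernstein's inequality (which follows from Parseval, Lemma \ref{lemma7}, applied to $2\pi i s\,G(s)$) gives $\|g^{(1)}\|_2\le 2\pi W\|g\|_2$, so the left-hand side is at most $(2+8\pi^2 W^2)\|g\|_2^2$. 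It therefore suffices to show $\|g\|_2^2\le C_W\int\frac{|f|^2}{1+t^2}\,{\rm d}t$.

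For this I combine Lemmas \ref{lem:f0} and \ref{lem:g} in a two-stage absorption argument. From $tg(t)=f(t)-f(0)$ I get $\int\frac{t^2|g|^2}{1+t^2}\,{\rm d}t\le 2\int\frac{|f|^2}{1+t^2}\,{\rm d}t+2\pi|f(0)|^2$. Splitting $\|g\|_2^2$ at $|t|=R$, I use $\frac{1}{t^2}\le\frac{1+R^2}{R^2(1+t^2)}$ on $|t|>R$ to pass to the weighted integral above, and the pointwise bound $|g(t)|^2\le 2W\|g\|_2^2$ from Lemma \ref{lem:g} on $|t|\le R$; choosing $R=1/(8W)$ makes the latter contribution $\tfrac{1}{2}\|g\|_2^2$, which is absorbed, yielding $\|g\|_2^2\le A_1(W)\int\frac{|f|^2}{1+t^2}\,{\rm d}t+A_2(W)|f(0)|^2$. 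Next I invoke Lemma \ref{lem:f0} (together with its symmetric analogue on $[-b,0]$, obtained by the same proof) to bound $|f(0)|^2$: $\int_0^b|f|^2\,{\rm d}t$ is bounded by $(1+b^2)\int\frac{|f|^2}{1+t^2}\,{\rm d}t$, while $\int_0^b|f^{(1)}|^2\,{\rm d}t$, expanded via $|f^{(1)}|^2\le 2|g|^2+2t^2|g^{(1)}|^2$ and integrated on $[0,b]$, is dominated by $(2+8\pi^2 b^2 W^2)\|g\|_2^2$ through Bernstein. Crucially, this step uses the unweighted $L^2$-norm of $g$ rather than a weighted one, so the argument is not circular.

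Substituting back produces an inequality of the form $\|g\|_2^2\le A(W,b)\int\frac{|f|^2}{1+t^2}\,{\rm d}t+B(W,b)\|g\|_2^2$. The main technical obstacle is the simultaneous balancing of the two free parameters $R$ and $b$ as functions of $W$ so that both absorption steps close at once; since $B(W,b)\to 0$ as $b\to 0^+$, taking $b$ small enough (depending on $W$) makes $B(W,b)<1$, after which solving for $\|g\|_2^2$ and combining with the first paragraph gives the claim with an explicit constant $C_W$ depending only on $W$.
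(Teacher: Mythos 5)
Your proof is correct, and it rests on exactly the same ingredients as the paper's: Lemma \ref{lem:f0} to control $|f(0)|^2$, the pointwise bound of Lemma \ref{lem:g} near the origin, and the Parseval/Bernstein estimate $\|g^{(1)}\|_2\le 2\pi W\|g\|_2$, combined through split-and-absorb arguments. The organization, however, is genuinely different. The paper argues in two stages: it first reduces to the case $f(0)=0$ by a conditional absorption in which the cutoff $b$ is chosen in terms of the very constant $C_W$ being established (namely $b=1/(8C_W\pi)$), and then proves the vanishing case $f=tg$ directly, obtaining the explicit constant $(2+8\pi^2W^2)(1+\Delta^2)/\bigl(\Delta^2(1-4\Delta W)\bigr)$. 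You instead funnel everything through the single quantity $\|g\|_2^2$: you reduce \eqref{eq:bound} to $\|g\|_2^2\le C\int_{-\infty}^{\infty}|f(t)|^2/(1+t^2)\,{\rm d}t$, absorb the near-origin piece at the radius $R=1/(8W)$ via Lemma \ref{lem:g}, and then absorb the $|f(0)|^2$ contribution into $\|g\|_2^2$ by taking $b$ small, using the unweighted Bernstein bound so that, as you note, no circularity arises. This single-pass arrangement avoids the paper's slightly self-referential choice of $b$ and decouples the two parameters ($R$ is fixed by $W$ first, then $b$ is chosen small against the $W$-dependent coefficient $A_2$), at the cost of a less explicit final constant; the paper's version isolates the clean case $f=tg$ with a fully explicit $C_W$. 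Two minor remarks: both arguments (yours included) implicitly use $\|g\|_2<\infty$, which holds since $G\in L^2[-W,W]$, and your appeal to a ``symmetric analogue on $[-b,0]$'' of Lemma \ref{lem:f0} is unnecessary — the one-sided version suffices.
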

	
	\begin{proof}
		We first show that, without loss of generality, we can assume $f(0)=0$. If not, we can let $\hat f = f - f(0)$. Then $\hat{f}(0)=0$ and $f^{(1)}=\hat f^{(1)}$. Assume that there exists a constant $C_W>1$, which only depends on $W$, such that 
		$$
		\int_{-\infty}^\infty \frac{\abs{\hat f^{(1)}(t)}^2}{1+t^2} {\rm d} t \le C_W \int_{-\infty}^\infty \frac{\abs{\hat f(t)}^2}{1+t^2}{\rm d} t. 
		$$
		Then we have 
		\begin{align*}
			\int_{-\infty}^\infty \frac{\abs{\hat f^{(1)}(t)}^2}{1+t^2} {\rm d} t &\le C_W \int_{-\infty}^\infty \frac{\abs{f(t) -f(0)}^2}{1+t^2}{\rm d} t \\    
			& \le 2 C_W \int_{-\infty}^\infty \frac{\abs{f(t)}^2} {1+t^2}{\rm d} t + 2 C_W\int_{-\infty}^\infty\frac{\abs{f(0)}^2}{1+t^2}{\rm d} t \\
			& = 2 C_W \int_{-\infty}^\infty \frac{\abs{f(t)}^2} {1+t^2}{\rm d} t + 2 C_W \pi \abs{f(0)}^2. 
		\end{align*}
		By Lemma \ref{lem:f0}, we have
		\begin{align*}
			\abs{f(0)}^2 \le \frac2b \int_0^{b}\abs{f(t)}^2{\rm d} t + 2b\int_0^{b} \abs{f^{(1)}(t)}^2{\rm d} t.
		\end{align*}    
		We can choose $b$ such that $4 C_W \pi b = \frac12$, i.e., $b= \frac{1}{8 C_W \pi}$. Recal that $f^{(1)}=\hat f^{(1)}$. It follows that 
		\begin{equation}
		\begin{aligned}
			\int_{-\infty}^\infty \frac{\abs{\hat f^{(1)}(t)}^2}{1+t^2} {\rm d} t \le 2 C_W \int_{-\infty}^\infty \frac{\abs{f(t)}^2} {1+t^2}{\rm d} t \\+ 32 C_W^2 \pi^2  \int_0^{b}\abs{f(t)}^2{\rm d} t + \frac12  \int_0^{b} \abs{\hat f^{(1)}(t)}^2{\rm d} t.  \label{eq:bound2}
		\end{aligned}
		\end{equation}
		Note that, by direct comparison, 
		\begin{align*}
			\int_0^{b}\abs{f(t)}^2{\rm d} t &\le  \left(1+\frac{1}{(8C_W\pi)^2}\right)\int_0^{b}\frac{\abs{f(t)}^2}{1+t^2}{\rm d} t \\&\le \left(1+\frac{1}{(8C_W\pi)^2}\right)\int_{-\infty}^{\infty}\frac{\abs{f(t)}^2}{1+t^2}{\rm d} t,
		\end{align*}
		and
		\begin{align*}
			\int_0^{b}\abs{\hat f^{(1)}(t)}^2{\rm d} t&\le \left(1+\frac{1}{(8C_W\pi)^2}\right)\int_0^{b}\frac{\abs{\hat f^{(1)}(t)}^2}{1+t^2}{\rm d} t \\&\le \left(1+\frac{1}{(8C_W\pi)^2}\right)\int_{-\infty}^{\infty}\frac{\abs{\hat f^{(1)}(t)}^2}{1+t^2}{\rm d} t.
		\end{align*}
		Since $C_W>1$, we have $\frac12 \left(1+\frac{1}{(8C_W\pi)^2}\right)<1$. Substituting these two inequalities into (\ref{eq:bound2}) gives 
		\begin{equation}\label{eq:newbound}
			\int_{-\infty}^\infty \frac{\abs{\hat f^{(1)}(t)}^2}{1+t^2} {\rm d} t \le C_W' \int_{-\infty}^\infty \frac{\abs{f(t)}^2} {1+t^2}{\rm d} t        
		\end{equation}
		for some new constant $C_W'$, which also only depends on $W$. Since $f^{(1)}=\hat f^{(1)}$, this proves (\ref{eq:bound}) in the general case. 
		
		In the following, we assume, without loss of generality, that $f(0)=0$ and derive a constant $C_W$ such that (\ref{eq:bound}) holds. Note that, in this case, we have $f(t)=tg(t)$ for a classical band-limited signal $g$ with bandwidth $W$. 
		
		We have 
		\begin{align}
			\int_{-\infty}^\infty \frac{\abs{f^{(1)}(t)}^2}{1+t^2} {\rm d} t &\le \int_{-\infty}^\infty \frac{\abs{g(t) + tg^{(1)}(t)}^2}{1+t^2} {\rm d} t \notag\\
			& \le 2\int_{-\infty}^\infty \frac{\abs{g(t)}^2}{1+t^2} {\rm d} t  + 2 \int_{-\infty}^\infty \frac{\abs{tg^{(1)}(t)}^2}{1+t^2} {\rm d} t \notag\\
			& \le 2\int_{-\infty}^\infty \abs{g(t)}^2 {\rm d} t + 2\int_{-\infty}^\infty \abs{g^{(1)}(t)}^2 {\rm d} t \notag\\
			& \le (2 + 8\pi^2W^2)  \int_{-\infty}^\infty \abs{g(t)}^2 {\rm d} t, 
			\label{eq:bound0}
		\end{align}
		where in the last inequality is derived by the Parseval's identity (twice), i.e.,
		\begin{align*}
			\int_{-\infty}^\infty \abs{g^{(1)}(t)}^2 {\rm d} t &=  \int_{-W}^{W} \abs{2\pi i s G(s) }^2 {\rm d} s  \\&\le 4\pi^2 W^2 \int_{-W}^{W} \abs{G(s)}^2 {\rm d} s \\&= 4\pi^2 W^2  \int_{-\infty}^\infty \abs{g(t)}^2 {\rm d} t. 
		\end{align*}
		By Lemma \ref{lem:g}, we have 
		\begin{align}
			\int_{-\infty}^\infty\abs{g(t)}^2{\rm d} t  =  \int_{\abs{t}\le\Delta} \abs{g(t)}^2{\rm d} t + \int_{\abs{t}\ge \Delta} \abs{g(t)}^2 {\rm d} t \notag\\
			 \le 4\Delta W\int_{-\infty}^\infty\abs{g(t)}^2 {\rm d} t + \frac{1+\Delta^2}{\Delta^2} \int_{\abs{t}\ge \Delta} \frac{t^2g^2(t)}{1+t^2}{\rm d} t, \notag
		\end{align}   
		which implies 
		\begin{align}
			\int_{-\infty}^{\infty} \abs{g(t)}^2 {\rm d} t  & \le \frac{1 +\Delta^2}{\Delta^2(1-4\Delta W)}\int_{-\infty}^\infty \frac{t^2g^2(t)}{1+t^2}{\rm d} t, \label{eq:bound1}
		\end{align}
		provided that $4\Delta W<1$. Recall that $f(t)=tg(t)$. By (\ref{eq:bound0}) and (\ref{eq:bound1}), we have 
		$$
		\int_{-\infty}^\infty \frac{\abs{f^{(1)}(t)}^2}{1+t^2} {\rm d} t \le C_W \int_{-\infty}^\infty \frac{\abs{f(t)}^2}{1+t^2}{\rm d} t 
		$$
		with 
		$$
		C_W=  \frac{(2 + 8\pi^2W^2)(1 +\Delta^2)}{\Delta^2(1-4\Delta W)}. 
		$$
		Note that $C_W>1$. The proof is complete.  
	\end{proof}
	
	\begin{proposition}\label{generalized_zakai}
		The space of Zakai's signal class \begin{equation}
			\begin{aligned}
				\hF(c,\delta) = \left\{g = g * h: g\in L^2\left((1+t^2)^{-1}\right), \right. \\ \left.\ h(t) = \frac{1}{2\pi} \int_{-\infty}^{\infty}H(\omega)e^{i\omega t}{\rm d}\omega\right\},\label{zk}
			\end{aligned}
		\end{equation} where 
		\begin{equation*}
			H(\omega) = \left\{\begin{matrix}
				&1~ &|\omega|\le c,\\
				&1-\frac{|\omega|-c}{\delta}~& c< |\omega|\le c+\delta,\\
				&0~&|\omega|>c+\delta,
			\end{matrix}
			\right.
		\end{equation*} is a generator-bounded space.
	\end{proposition}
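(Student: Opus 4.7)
The plan is to verify the two conditions of Definition \ref{space} for $\hF(c,\delta)$ equipped with the weighted $L^2$ norm $\|f\|^2=\int_{-\infty}^{\infty}|f(t)|^2/(1+t^2)\,{\rm d}t$. First I would establish the Banach structure and well-definedness of the convolution. Because $H$ is continuous, compactly supported in $[-(c+\delta),c+\delta]$, and piecewise linear, its distributional second derivative is a finite sum of Dirac masses, which by the Fourier differentiation rule forces $h(t)=O(1/t^2)$ at infinity and hence $C_h:=\int(1+u^2)|h(u)|^2\,{\rm d}u<\infty$. A Cauchy--Schwarz argument applied to $f=g_0*h$, together with the convolution identity $\int_{-\infty}^{\infty}[(1+(t-s)^2)(1+t^2)]^{-1}\,{\rm d}t=2\pi/(4+s^2)\le 2\pi/(1+s^2)$, then yields $\|g_0*h\|\le\sqrt{2\pi C_h}\,\|g_0\|_{L^2((1+t^2)^{-1})}$, placing $\hF(c,\delta)$ inside the separable Hilbert space $L^2((1+t^2)^{-1})$; closedness of the image would be verified directly. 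Koopman invariance is immediate from the shift-covariance of convolution, $U^\tau(g_0*h)=(g_0(\cdot+\tau))*h$, together with the equivalence of $(1+t^2)^{-1}$ and $(1+(t-\tau)^2)^{-1}$ up to a $\tau$-dependent constant.

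The main work is to bound the generator $L|_{\hF(c,\delta)}f=f^{(1)}$, and my strategy is to reduce to Lemma \ref{derivative_bound} by representing each $f=g_0*h\in\hF(c,\delta)$ as $f(t)=f(0)+tG(t)$ with $G$ classical band-limited of bandwidth proportional to $c+\delta$. Since $\widehat{h}=H$ has compact support in $[-(c+\delta),c+\delta]$, the tempered distribution $\widehat f=\widehat{g_0}H$ is likewise compactly supported, and the Paley--Wiener--Schwartz theorem shows that $f$ extends to an entire function of exponential type at most $c+\delta$ with polynomial growth on $\bR$. Both $f(0)=\int g_0(s)h(-s)\,{\rm d}s$ and $f^{(1)}(0)=\int g_0(s)h^{(1)}(-s)\,{\rm d}s$ are finite by Cauchy--Schwarz using the rapid decay of $h$ and $h^{(1)}$. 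Setting $G(t):=(f(t)-f(0))/t$ for $t\ne 0$ and $G(0):=f^{(1)}(0)$, the power-series expansion of $f$ around the origin shows $G$ is also entire of exponential type at most $c+\delta$.

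The remaining step is to show $G\in L^2(\bR)$, so that the classical Paley--Wiener theorem identifies $G$ as a classical band-limited signal in the sense of Lemma \ref{lem:g}. For $|t|\ge 1$,
\begin{equation*}
|G(t)|^2 \le \frac{2|f(t)|^2+2|f(0)|^2}{t^2}\le \frac{4|f(t)|^2}{1+t^2}+\frac{2|f(0)|^2}{t^2},
\end{equation*}
which is integrable because $\|f\|<\infty$ and $\int_{|t|\ge 1}t^{-2}\,{\rm d}t<\infty$; for $|t|<1$, continuity of $G$ at the origin provides a uniform bound. Once $G$ is known to be classical band-limited, Lemma \ref{derivative_bound} delivers
\begin{equation*}
\int\frac{|f^{(1)}(t)|^2}{1+t^2}\,{\rm d}t\le C_{c+\delta}\int\frac{|f(t)|^2}{1+t^2}\,{\rm d}t,
\end{equation*}
i.e., $\|Lf\|\le\sqrt{C_{c+\delta}}\,\|f\|$, which establishes the generator bound.

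The hard part will be the transfer between two different frameworks: elements of $\hF(c,\delta)$ live naturally in the weighted space $L^2((1+t^2)^{-1})$, but Lemma \ref{derivative_bound} requires a classical band-limited function $G$ in ordinary $L^2$. The decomposition $f(t)=f(0)+tG(t)$ combined with the Paley--Wiener perspective is the bridge: it converts weighted $L^2$ control on $f$ into unweighted $L^2$ control on $G$, which is precisely what is needed to apply the lemma.
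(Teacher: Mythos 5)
Your proof lands on the same key mechanism as the paper's: both reduce the generator bound to Lemma \ref{derivative_bound} applied to the representation $f(t)=f(0)+tG(t)$ with $G$ a classical band-limited signal in $L^2(\bR)$, and both handle Koopman invariance via shift covariance of the convolution identity together with comparability of the weights $(1+t^2)^{-1}$ and $(1+(t+\tau)^2)^{-1}$. The genuine difference is how the representation is obtained: the paper simply cites the Cambanis--Masry characterization \cite{cambanis1976zakai} that members of Zakai's class have this form with a band-limited $g_0$, whereas you re-derive it from scratch -- compact support of the (distributional) Fourier transform, Paley--Wiener(--Schwartz) entirety of exponential type at most $c+\delta$, and the bound $|G(t)|^2\le 4|f(t)|^2/(1+t^2)+2|f(0)|^2/t^2$ for $|t|\ge 1$, which is exactly the right bridge from the weighted norm on $f$ to the unweighted $L^2$ norm on $G$. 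This buys self-containedness at the cost of some distribution-theoretic care (e.g.\ $\widehat{g_0}H$ multiplies a tempered distribution by a merely Lipschitz $H$; it is cleaner to get entirety and type directly from $f(z)=\int g_0(s)h(z-s)\,{\rm d}s$ using $|h(x+iy)|\le C_y/(1+x^2)$), but the outline is sound and the application of Lemma \ref{derivative_bound} then goes through exactly as in the paper.

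Two caveats. First, the class \eqref{zk} is the fixed-point set $\{g\in L^2((1+t^2)^{-1}):g=g*h\}$, not the image of the convolution operator; your reading as an image is harmless for the spectral-support and generator arguments, but it makes your preliminary embedding inequality $\|g_0*h\|\le\sqrt{2\pi C_h}\,\|g_0\|$ unnecessary (membership in the weighted space is part of the definition) and, as sketched, unjustified: the natural Cauchy--Schwarz splittings combined with the Lorentzian convolution identity either produce a divergent second factor or control $\|g_0*h\|$ by the unweighted $\|g_0\|_{2}$, which is infinite for, e.g., $g_0\equiv 1$. (Boundedness of convolution by $h$ on $L^2((1+t^2)^{-1})$ does hold, but it needs a different argument, e.g.\ splitting the kernel $\frac{\sqrt{1+s^2}}{\sqrt{1+t^2}}h(t-s)$ into the plain convolution by $h$, bounded via Plancherel, plus a Hilbert--Schmidt remainder using $|\sqrt{1+s^2}-\sqrt{1+t^2}|\le|t-s|$.) Second, ``closedness of the image would be verified directly'' is asserted rather than proved; the paper likewise leaves the Banach-space structure implicit, so this is not a gap relative to the paper's own proof, but it should not be presented as routine.
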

	\begin{proof}
		We first show that, the space given by \eqref{zk} is invariant under the action of the Koopman operator. Specifically, \begin{equation}
			U^\tau g(t) = U^\tau (g*h)(t)  = \int_{-\infty}^{\infty} g(\theta) h(t+\tau-\theta){\rm d}\theta.
		\end{equation}
		Let $v = \theta-\tau$, we have\begin{equation}\label{in1}
			U^\tau g = g(t+\tau) = \int_{-\infty}^{\infty}g(v+\tau)h(t-v) {\rm d}v = (U^\tau g)*h.
		\end{equation} 
		Moreover, we compute \begin{align}
			\|U^\tau g\| &= \int_{-\infty}^{\infty}\frac{|g(t+\tau)|^2}{1+t^2}{\rm d}t \\&=\int_{-\infty}^{\infty}\frac{|g(t+\tau)|^2}{1+(t+\tau)^2}\frac{1+(t+\tau)^2}{1+t^2}{\rm d}t.
		\end{align}
		It can be proved that, for $\forall \tau>0$, there exists $f(\tau)$, where $\frac{\tau^2+\sqrt{\tau^4+4\tau^2}}{2}\le f(\tau)<\infty$, such that \begin{equation}
			\frac{1+(t+\tau)^2}{1+t^2}\le 1+f(\tau)<\infty, \forall t\in\bR.
		\end{equation} Then we have \begin{equation}\label{in2}
			\|U^\tau g\|\le (1+f(\tau))\|g\|<\infty.
		\end{equation} Hence, it follows from \eqref{in1} and \eqref{in2} that $U^\tau g\in\hF(c,\delta)$ for $\forall g\in \hF(c,\delta).$
		
		Then we show that, the generator is bounded on this space, i.e., $\|L|_{\hF_e}\|<\infty$. The signals in \eqref{zk} can be written as $g(t)=g(0)+tg_0(t)$, where $g_0$ is a classical band-limited signal of bandwidth $W=c/2\pi$ \cite{cambanis1976zakai}. Consider the weighted $L^2$ norm $\|g\| = \sqrt{\int_{-\infty}^{\infty}\frac{|g(t)|^2}{1+t^2}{\rm d}t}$.  It follows from Lemma \ref{derivative_bound} that \begin{align*}
			\|L|_{\hF_e}\| &= \sup_{g\in\hF_e}\frac{\|g^{(1)}\|}{\|g\|} = \sup_{g\in\hF_e}\sqrt{\frac{\int_{-\infty}^\infty \frac{\abs{g^{(1)}(t)}^2}{1+t^2} {\rm d} t }{\int_{-\infty}^\infty \frac{\abs{g(t)}^2}{1+t^2}{\rm d} t}}\\ &\le \sqrt{C_W}<\infty,
		\end{align*} where $C_W$ is a constant that only depends on the bandwidth of $g_0(t)$. Hence, the space \eqref{zk} is a generator-bounded space.
	\end{proof}}
	
\hspace{1em}
\section{Proof of Proposition \ref{invers_spectrum}}\label{invers_spectrum_proof}
\begin{proof}
	If $\lambda I - L|_{\hF_e}$ is surjective, there exists $\hat{g}(t) =\frac{1}{2\pi} \int_{-c}^{c}\hat{G}(\omega)e^{(\alpha+i\omega) t}{\rm d}\omega\in\hF_e$ for $$\forall g(t) = \frac{1}{2\pi}\int_{-c}^{c}G(\omega)e^{(\alpha+i\omega) t}{\rm d}\omega\in \hF_e$$ such that $(\lambda I - L|_{\hF_e})\hat{g} = g.$ It follows that $$\hat{G}(\omega) = G(\omega)/(\lambda-\alpha-i\omega)$$ for $\forall G\in L^2[-c,c]$. Then we have $\lambda \in \bC/(\alpha+i[-c,c])$. 
	Similarly, the sufficiency can be proved, i.e., $(\lambda I - L|_{\hF_e})$ is surjective if $\lambda\in  \bC/(\alpha+i[-c,c])$. Thus, the resolvent set $\rho(L|_{\hF_e})\subseteq\bC/(\alpha+i[-c,c])$. Then we show that $(\lambda I - L|_{\hF_e})$ is also injective for $\forall \lambda\in  \bC/(\alpha+i[-c,c])$, i.e., \begin{align}
		(\lambda I - L|_{\hF_e})g  
		= 0 
		\Rightarrow g(t) = 
		0,~ \forall t>0.\label{eqg}
	\end{align} It follows from the left part of \eqref{eqg} that $\dot{g}(t) = \lambda g(t)$ and $g(t) = e^{\lambda t}g(0)$, i.e., \begin{equation*}
		\frac{1}{2\pi}\int_{-c}^{c}e^{(\alpha+i\omega) t} G(\omega){\rm d}\omega = e^{\lambda t}g(0), \forall t>0.
	\end{equation*} 
	For $\forall \lambda\in \bC/(\alpha+i[-c,c])$, this equation holds only when $g(t) = 0$, which is consistent with \eqref{eqg}. 
	Then $(\lambda I - L|_{\hF_e})$ is bijective for $\forall \lambda\in \bC/(\alpha+i[-c,c])$. By the definition \cite[Def 1.1]{engel2000one} of resolvent set $\rho(L|_{\hF_e})$, we have $\rho(\cdot) = \bC/(\alpha+i[-c,c])$ and the spectrum is $\sigma(L|_{\hF_e})=\alpha+i[-c,c]$.
\end{proof}
\bl{\section{Reconstruction formula of Zakai's class}\label{zakai}
	Consider the signal $g(t) =(g(0)+ t g_0(t))e^{\alpha t},$ where $g_0(t) = \frac{1}{2\pi} \int_{-c}^{c} G(\omega) e^{i\omega t} {\rm d}\omega, G\in L^2[-c,c], g(0)\in\bR,\alpha\in\bR.$ The reconstruction formula for this signal class can be derived in a manner similar to the proof of Proposition \ref{invers_recons}. In particular, we have \begin{equation*}
		U^\tau g (t) = \left(g(0)+tg_0(t+\tau)+ \tau g_0(\tau)\right)e^{\alpha(t+\tau)}.
	\end{equation*}
	For the third term, it follows from \eqref{eitau} that \begin{align*}
		&\tau e^{\alpha(t+\tau)}g_0(t+\tau) = \\
		&\tau e^{\alpha(t+\tau)}\sum_{k=-\infty}^{\infty}\frac{\sin(\pi/T_s)(\tau-kT_s)}{\pi/T_s(\tau-kT_s)}\int_{-c}^{c} \frac{G(\omega)}{2\pi}e^{i\omega (t+kT_s)}{\rm d}\omega
	\end{align*} By letting $t = 0$, we have \begin{align*}
		g(\tau) =& e^{\alpha \tau}\left(g(0)+\tau g(\tau)\right) \\
		=& \tau e^{\alpha \tau}\sum_{\substack{k=-\infty \\ k \neq 0}}^{\infty}\frac{\sin(\pi/T_s(\tau-kT_s))}{k\pi(\tau-kT_s)}\left(g(kT_s)e^{-\alpha kT_s} - g(0)\right)\\+&e^{\alpha \tau}g(0)+e^{\alpha \tau}\frac{\sin(\pi\tau/T_s)}{\pi/T_s}g_0(0).
	\end{align*}
	Since $g_0(0) = g^{(1)}(0)-\alpha g(0)$, and \begin{equation*}
		\frac{\tau}{k\pi(\tau-kT_s)} = \frac{1}{\pi/T_s(\tau-kT_s)}+\frac{1}{k\pi},
	\end{equation*} we have \begin{align*}
	g(\tau) &= \sum_{\substack{k=-\infty \\ k \neq 0}}^{\infty}e^{\alpha \tau}\left(\frac{\sin(\pi/T_s(\tau-kT_s))}{\pi/T_s(\tau-kT_s)}+\frac{(-1)^k\sin\pi\tau/T_s}{k\pi}\right)\\ &\times \left(g(kT_s)e^{-\alpha kT_s} - g(0)\right) + g(0)e^{\alpha \tau}\\ &+\frac{\sin (\pi\tau/T_s)}{\pi/T_s}(g^{(1)}(0)-\alpha g(0))e^{\alpha \tau}.
	\end{align*}
	Based on \cite[Lemma 3]{zakai1965band}, we obtain that \begin{equation*}
		g^{(1)}(0)-\alpha g(0)+\sum_{\substack{k=-\infty \\ k \neq 0}}^{\infty}\frac{g(kT_s)e^{-\alpha kT_s}-g(0)}{kT_s}(-1)^k =0.
	\end{equation*}
	It follows that \begin{align*}
		g(\tau) &= \sum_{\substack{k=-\infty \\ k \neq 0}}^{\infty}\frac{\sin(\pi/T_s(\tau-kT_s))}{\pi/T_s(\tau-kT_s)}\left(g(kT_s)e^{-\alpha kT_s}-g(0)\right)e^{\alpha\tau}\\&+g(0)e^{\alpha \tau}.
	\end{align*} 
	Based on the reconstruction formula for $f(t) = g(0)$, we have $g(0) = \sum_{k=-\infty}^{\infty}g(0)\frac{\sin(\pi/T_s(\tau-kT_s))}{\pi/T_s(\tau-kT_s)}$. Hence, we obtain the reconstruction formula of $g(\tau)$, i.e., 
	 \begin{equation}
		g(\tau) = \sum_{k=-\infty}^{\infty}g(kT_s)\frac{\sin(\pi/T_s(\tau-kT_s))}{\pi/T_s(\tau-kT_s)}e^{\alpha(\tau-kT_s)}.
	\end{equation}
}

\section{Proof of Proposition \ref{poly_exp_spectrum}}
\begin{proof}\label{poly_exp_spectrum_proof}
	Consider the basis function of \eqref{poly_exp_signal} $$\Phi = [e^{\lambda_1 t},\ldots, t^{b_1}e^{\lambda_1 t},e^{\lambda_2 t},\ldots,t^{b_m}e^{\lambda_m t}].$$ The generator $L|_{\hF_e}$ can be represented by $\overline{L}|_{\hF_e}$, i.e., $L|_{\hF_e}\Phi= \Phi$, where $\overline{L}|_{\hF_e} = \mathrm{diag}\{L_1,\ldots,L_m\}$, 
	and for $\forall k=1,\ldots,m,$
	\begin{equation*}
		L_k = \left(\begin{matrix}
			\lambda_k &1&&\\
			&\ddots&\ddots&\\
			&&\lambda_k&b_k\\
			&&&\lambda_k
		\end{matrix}\right)_{(b_k+1)\times (b_k+1)}.
	\end{equation*} Hence the spectrum of the generator $L|_{\hF_e}$ can be obtained by the eigenvalues of $\overline{L}|_{\hF_e}$, i.e., $\sigma(L|_{\hF_e}) = \{\lambda_k\}_{k=1}^m$.
\end{proof}
\section{Proof of Lemma \ref{exponential}}\label{sec:exp}
\begin{proof}
	Since the operators $A, A_n$ are bounded, the exponential can be expanded by the Taylor series, i.e.,
	\begin{equation*}
		\begin{aligned}
			\|(\exp A-\exp A_n )g\|
			=\left\|\sum_{k=0}^{\infty}\frac{A^k-A_n^k}{k!}g\right\|, ~\forall g\in X.
		\end{aligned}
	\end{equation*}
	By the convergence of $\exp(A)$ and $\exp(A_n)$, for any $\epsilon>0$, there exists $q>0,$ such that
	\begin{equation*}
		\begin{aligned}
			\left\|\sum_{k=q}^{\infty}\frac{A^k}{k!}g \right\|<\frac{\epsilon}{3}, ~
			\left\|\sum_{k=q}^{\infty}\frac{A_n^k}{k!}g \right\|<\frac{\epsilon}{3}.
		\end{aligned}
	\end{equation*}
	Here we show that $\lim_{n\to\infty}\|(A^k-A_n^k)g\|=0$ for $k<q$ by mathematical induction. Consider $q=1$, we have $\lim_{n\to\infty}\|(A-A_n)g\|=0$. Assume that $\lim_{n\to\infty}\|(A^k-A_n^k)g\|=0$ for $k=p$. Then for $k=p+1$, we have
	\begin{equation*}
		\begin{aligned}
			&\lim_{n\to\infty}\|(A^{p+1}-A_n^{p+1})g\|\\
			\le&\lim_{n\to\infty}\|(AA^{p}-AA_n^p)g\|+\lim_{n\to\infty}\|(AA_n^p-A_nA_n^p)g\|\\
			\le&\lim_{n\to\infty}\|A\|\|(A^{p}-A_n^p)g\|+\lim_{n\to\infty}\|(A-A_n)A_n^pg\|=0.
		\end{aligned}
	\end{equation*}
	It follows that,
	$
	\lim_{n\to\infty}\sum_{k=0}^{q-1}\|(A^{k}-A_n^{k})g\|=0,\forall g\in X.
	$
	Then for any $\epsilon>0,$ there exists $n>0$ such that \begin{equation*}
		\lim_{n\to\infty}\sum_{k=0}^{q-1}\|(A^{k}-A_n^{k})g\|<\epsilon/3,
	\end{equation*}
	Therefore, for any $\epsilon>0,g\in X$, we have 
	\begin{equation*}
		\begin{aligned}
			&\|(\exp A-\exp A_n )g\|\\
			\le&\sum_{k=0}^{q-1} \left\|\frac{A^k-A_n^k}{k!}g\right\|+ \left\|\sum_{k=q}^{\infty}\frac{A^k}{k!}g\right\|+\left\|\sum_{k=q}^{\infty}\frac{A_n^k}{k!}g\right\|<\epsilon.
		\end{aligned}
	\end{equation*}
\end{proof}

\section{Proof of Lemma \ref{logarithm}}\label{sec:log}
\begin{proof}
	Let $\|A\|<K$, $\|A_n\|<K$ with $K>0$ and $\Omega = \{\lambda\in\bC:|\lambda|< K \} $. Consider the resolvent operator $R(\lambda,A) = (\lambda I-A)^{-1}$. Since $|\lambda|>\|A\|$ and $|\lambda|> \|A_n\|$ for $\lambda\in+\partial\Omega$, the resolvent operator is expanded by \cite[P. 584]{dunford1988linear}
	\begin{equation*}
		R(\lambda,A) = \sum_{k=0}^{\infty}\frac{A^k}{\lambda^{k+1}},~ R(\lambda,A_n) = \sum_{k=0}^{\infty}\frac{A_n^k}{\lambda^{k+1}}.
	\end{equation*}
	By the convergence of these series and $\lim_{n\to\infty}\|(A-A_n)g\| = 0$, we can prove that \begin{equation*}
		\lim_{n\to\infty}\|(R(\lambda,A)-R(\lambda,A_n))g\| = \lim_{n\to\infty}\|\sum_{k=0}^{\infty}\frac{A^k-A_n^k}{\lambda^{k+1}}g\| = 0,
	\end{equation*} which is similar to the proof of Lemma \ref{exponential}. So, for $\forall \epsilon>0,$ there exists $n>0$ such that $
	\|(R(\lambda,A)-R(\lambda,A_n))g\|\le \frac{\epsilon}{Kr},
	$
	where $r = \max_{\lambda\in +\partial \Omega} |\log\lambda|.$
	It follows that
	\begin{equation*}
		\begin{aligned}
			&\|\log A-\log A_n )g\|\\=&\frac{1}{2\pi}\left\|\int_{+\partial \Omega}\log\lambda (R(\lambda,A) - R(\lambda,A_n))g {\rm d}\lambda\right\|\\
			\le &\frac{1}{2\pi}\int_{+\partial \Omega}|\log\lambda| \left\|(R(\lambda,A) - R(\lambda,A_n))g\right\| {\rm d}\lambda\\
			< & \frac{r}{2\pi} \int_{+\partial \Omega} \frac{\epsilon}{Kr} {\rm d}\lambda = \epsilon.
		\end{aligned}
	\end{equation*}
	Therefore, we have $\lim_{n\to\infty}\|(\log A-\log A_n )g\|=0.$
\end{proof}

\bibliographystyle{IEEEtran}

\begin{thebibliography}{10}
	\bibitem{shannon1949communication}
	C.~E. Shannon, ``Communication in the presence of noise,'' in \emph{Proc. IRE}, vol.~37, no.~1, pp. 10--21, 1949.
	
	\bibitem{kramer1959sampling}
	H.~Kramer, ``A generalized sampling theorem,'' \emph{” J. Math. Phys.}, vol.~38, no.~1, pp. 68--72, 1959.
	
	\bibitem{walter1988sampling}
	G.~G. Walter, ``Sampling bandlimited functions of polynomial growth,''
	\emph{SIAM J. Math. Anal.}, vol.~19, no.~5, pp. 1198--1203,
	1988.
	\bibitem{cambanis1976zakai}
	S.~Cambanis and E.~Masry, ``Zakai’s class of bandlimited functions and
	processes: Its characterization and properties,'' \emph{SIAM J. Appl. Math.}, vol.~30, no.~1, pp. 10--21, 1976.
	
	\bibitem{jerri1977shannon}
	A.~J. Jerri, ``The {S}hannon sampling theorem—its various extensions and
	applications: A tutorial review,'' in \emph{Proc. IEEE}, vol.~65,
	no.~11, pp. 1565--1596, 1977.
	
	\bibitem{unser2000sampling}
	M.~Unser, ``Sampling-50 years after {S}hannon,'' in \emph{Proc. IEEE},
	vol.~88, no.~4, pp. 569--587, 2000.
	
	\bibitem{zayed1993advances}
	A.~Zayed, \emph{Advances in \bl{S}hannon's \bl{S}ampling \bl{T}heory}, CRC Press, 1993.
	
	\bibitem{eldar2015sampling}
	Y.~C. Eldar, \emph{Sampling \bl{T}heory: Beyond \bl{B}andlimited \bl{S}ystems}.\hskip 1em plus
	0.5em minus 0.4em\relax Cambridge,
	U.K.: Cambridge Univ. Press, 2015.
	
	\bibitem{de1994approximation}
	C.~\bl{d}e~Boor, R.~A. DeVore, and A.~Ron, ``Approximation from shift-invariant
	subspaces of ${L}_2(\bl{{\mathbb R}}^{d})$,'' \emph{Trans. Am. Math. Soc.}, vol. 341, no.~2, pp. 787--806, 1994.
	
	\bibitem{deboor1994structure}
	C.~\bl{d}e~\bl{B}oor, R.~A. DeVore, and A.~Ron, ``The structure of finitely generated
	shift-invariant spaces in ${L}_2(\bl{{\mathbb R}}^{d})$,'' \emph{J. Funct. Anal.},
	vol. 119, no.~1, pp. 37--78, 1994.
	
	\bibitem{vetterli2002sampling}
	M.~Vetterli, P.~Marziliano, and T.~Blu, ``Sampling signals with finite rate of
	innovation,'' \emph{IEEE Trans. Signal Process.}, vol.~50, no.~6,
	pp. 1417--1428, 2002.
	
	\bibitem{mishali2011sub}
	M.~Mishali and Y.~C. Eldar, ``Sub-{N}yquist sampling,'' \emph{IEEE Signal Process. Mag.}, vol.~28, no.~6, pp. 98--124, 2011.
	
	\bibitem{yuan2019data}
	Y.~Yuan, X.~Tang, W.~Zhou, W.~Pan, X.~Li, H.-T. Zhang, H.~Ding, and
	J.~Gon{\c{c}}alves, ``Data driven discovery of cyber physical systems,''
	\emph{Nat. Commun.}, vol.~10, no.~1, pp. 1--9, 2019.
	
	\bibitem{Koopman1931hamiltonian}
	B.~O. Koopman, ``{Hamiltonian systems and transformation in Hilbert space},''
	\emph{Proc. Natl. Acad. Sci. U. S. A.}, vol.~17, no.~5, p. 315, 1931.
	
	\bibitem{mezic2005spectral}
	I.~Mezi{\'c}, ``Spectral properties of dynamical systems, model reduction and
	decompositions,'' \emph{Nonlinear Dyn.}, vol.~41, no.~1, pp. 309--325,
	2005.
	
	\bibitem{Yue2020a}
	Z.~Yue, J.~Thunberg, L.~Ljung, Y.~Yuan, and J.~Gon{\c{c}}alves, ``System
	aliasing in dynamic network reconstruction: {I}ssues on low sampling
	frequencies,'' \emph{IEEE Trans. Autom. Control}, vol.~66, no.~12, pp. 5788--5801, 2021.
	
	\bibitem{zeng2022sampling}
	Z.~Zeng, Z.~Yue, A.~Mauroy, J.~Gon{\c{c}}alves, and Y.~Yuan, ``A sampling
	theorem for exact identification of continuous-time nonlinear dynamical
	systems,'' in \emph{Proc. 61st IEEE Conf. Decis. Control}.\hskip 1em plus 0.5em minus 0.4em\relax IEEE, 2022, pp. 6686--6692.
	
	\bibitem{zeng2024sampling}
	Z.~Zeng, Z.~Yue, A.~Mauroy, J.~Gonçalves, and Y.~Yuan, ``A sampling theorem
	for exact identification of continuous-time nonlinear dynamical systems,''
	\emph{IEEE Trans. Autom. Control}, 2024. doi:10.1109/TAC.2024.3409639
	
	\bibitem{oppenheim1997signals}
	A.~V. Oppenheim, A.~S. Willsky, S.~H. Nawab, and J.-J. Ding, \emph{Signals and Systems}.\hskip 1em plus 0.5em minus 0.4em\relax Upper Saddle River, NJ, USA: Prentice Hall, 1997.
	
	\bibitem{butzer2011interpolation}
	P.~Butzer, P.~Ferreira, J.~Higgins, S.~Saitoh, G.~Schmeisser, and R.~Stens,
	``Interpolation and sampling: E.T. Whittaker, K. Ogura and their followers,''
	\emph{J. Fourier Anal. Appl.}, vol.~17, pp. 320--354,
	2011.
	
	\bibitem{ogura1920certain}
	K.~Ogura, ``On a certain transcendental integral function in the theory of
	interpolation,'' \emph{Tohoku Math. J.}, vol.~17,
	pp. 64--72, 1920.
	
	\bibitem{mauroy2020koopman}
	A.~Mauroy, I.~Mezi{\'c}, and Y.~Susuki, \emph{The Koopman Operator in Systems and Control}.\hskip 1em plus 0.5em
	minus 0.4em\relax London: Springer Cham, 2020.
	
	\bibitem{engel2000one}
	K.-J. Engel, R.~Nagel, and S.~Brendle, \emph{One-Parameter Semigroups for Linear Evolution Equations}.\hskip 1em plus 0.5em minus 0.4em\relax New York, NY, USA: Springer, 1999.
	
	\bibitem{dunford1954spectral}
	N.~Dunford and J.~T. Schwartz, \emph{Linear Operators, Part 2: Spectral Theory}\hskip 1em plus 0.5em
	minus 0.4em\relax New York, NY, USA: Interscience, 1963.
	
	\bibitem{krabbe1956logarithm}
	G.~Krabbe, ``On the logarithm of a uniformly bounded operator,''
	\emph{Trans. Am. Math. Soc.}, vol.~81, no.~1, pp.
	155--166, 1956.
	
	\bibitem{hille1996functional}
	E.~Hille and R.~S. Phillips, \emph{Functional Analysis and Semi-Groups}.\hskip
	1em plus 0.5em minus 0.4em\relax Providence, RI, USA: Amer. Math. Soc., 1974.
	
	\bibitem{hespanha2018linear}
	J.~P. Hespanha, \emph{Linear Systems Theory}.\hskip 1em plus 0.5em minus
	0.4em\relax Princeton, NJ, USA: Princeton
	Univ. Press, 2018.
	
	\bibitem{blondel2009unsolved}
	V.~D. Blondel and A.~Megretski, \emph{Unsolved \bl{P}roblems in \bl{M}athematical \bl{S}ystems and \bl{C}ontrol \bl{T}heory}.\hskip 1em plus 0.5em minus 0.4em\relax Princeton, NJ: Princeton Univ. Press, 2009.
	
	\bibitem{marks2009handbook}
	R.~J. Marks, \emph{Handbook of \bl{F}ourier \bl{A}nalysis \& its \bl{A}pplications}.\hskip 1em
	plus 0.5em minus 0.4em\relax Oxford,
	U.K.: Oxford Univ. Press, 2009.
	
	\bibitem{yao1966truncation}
	K.~Yao and J.~B. Thomas, ``On truncation error bounds for sampling
	representations of band-limited signals,'' \emph{IEEE Trans. Aerosp. Electron. Syst.}, no.~6, pp. 640--647, 1966.
	
	\bibitem{brown1969bounds}
	J.~Brown, ``Bounds for truncation error in sampling expansions of band-limited
	signals,'' \emph{IEEE Trans. Inf. Theory}, vol.~15, no.~4,
	pp. 440--444, 1969.
	
	\bibitem{pawlak1996recovering}
	M.~Pawlak and U.~Stadtmuller, ``Recovering band-limited signals under noise,''
	\emph{IEEE Trans. Inf. Theory}, vol.~42, no.~5, pp.
	1425--1438, 1996.
	
	\bibitem{campbell1968sampling}
	L.~Campbell, ``Sampling theorem for the Fourier transform of a distribution with bounded support,'' \emph{SIAM J. Appl. Math.}, vol.~16, no.~3, pp. 626--636, 1968.
	
	\bibitem{jagerman1966bounds}D.~Jagerman, ``Bounds for truncation error of the sampling expansion,''\emph{SIAM J. Appl. Math.}, vol.~14, no.~4, pp. 714--723,1966.
	
	
	\bibitem{knab1979interpolation}
	J.~Knab, ``Interpolation of band-limited functions using the approximate prolate series (corresp.),'' \emph{IEEE Trans. Inf. Theory},
	vol.~25, no.~6, pp. 717--720, 1979.
	

	
	
	\bibitem{arbabi2017ergodic}
	H.~Arbabi and I.~Mezic, ``Ergodic theory, dynamic mode decomposition, and
	computation of spectral properties of the {K}oopman operator,'' \emph{SIAM J. Appl. Dyn. Syst.}, vol.~16, no.~4, pp. 2096--2126, 2017.
	
	\bibitem{petersen2008matrix}
	K.~B. Petersen and M.~S. Pedersen, \emph{The \bl{M}atrix \bl{C}ookbook}, Denmark: Tech. Univ. Denmark, 2006
	
	\bibitem{moler2003nineteen}
	C.~Moler and C.~Van~Loan, ``Nineteen dubious ways to compute the exponential of
	a matrix, twenty-five years later,'' \emph{SIAM review}, vol.~45, no.~1, pp.
	3--49, 2003.
	
	\bibitem{trefethen2022numerical}
	L.~N. Trefethen and D.~Bau, \emph{Numerical linear algebra}.\hskip 1em plus
	0.5em minus 0.4em\relax Philadelphia, PA,
	USA: SIAM, 2022.
	
	\bibitem{higham2008functions}
	N.~J. Higham, \emph{Functions of Matrices: Theory and Computation}.\hskip 1em
	plus 0.5em minus 0.4em\relax Philadelphia, PA, USA: SIAM, 2008.
	
	
	\bibitem{al2013computing}
	A.~H. Al-Mohy, N.~J. Higham, and S.~D. Relton, ``Computing the fr{\'e}chet
	derivative of the matrix logarithm and estimating the condition number,''
	\emph{SIAM J. Sci. Comput.}, vol.~35, no.~4, pp. C394--C410,
	2013.
	
	
	\bibitem{al2010new}
	A.~H. Al-Mohy and N.~J. Higham, ``A new scaling and squaring algorithm for the
	matrix exponential,'' \emph{SIAM J. Matrix Anal. Appl.}, vol.~31, no.~3, pp. 970--989, 2010.
	
	
	\bibitem{mauroy2019koopman}
	A.~Mauroy and J.~Gon{\c{c}}alves, ``Koopman-based lifting techniques for
	nonlinear systems identification,'' \emph{IEEE Trans. Autom. Control}, vol.~65, no.~6, pp. 2550--2565, 2019.
	
	\bibitem{hauer1990initial}
	J.~F. Hauer, C.~Demeure, and L.~Scharf, ``Initial results in prony analysis of
	power system response signals,'' \emph{IEEE Trans. Power Syst.},
	vol.~5, no.~1, pp. 80--89, 1990.
	
	
	
	\bibitem{korda2018convergence}
	M.~Korda and I.~Mezi{\'c}, ``On convergence of extended dynamic mode
	decomposition to the {K}oopman operator,'' \emph{J. Nonlinear Sci.}, vol.~28, no.~2, pp. 687--710, 2018.
	
	\bibitem{pawlak2003postfiltering}
	M.~Pawlak, E.~Rafajlowicz, and A.~Krzyzak, ``Postfiltering versus prefiltering
	for signal recovery from noisy samples,'' \emph{IEEE Trans. Inf. Theory}, vol.~49, no.~12, pp. 3195--3212, 2003.
	
	\bibitem{pawlak2007signal}
	M.~Pawlak and U.~Stadtmuller, ``Signal sampling and recovery under dependent
	errors,'' \emph{IEEE Trans. Inf. Theory}, vol.~53, no.~7, pp.
	2526--2541, 2007.
	
	\bibitem{dawson2016characterizing}
	S.~T. Dawson, M.~S. Hemati, M.~O. Williams, and C.~W. Rowley, ``Characterizing and correcting for the effect of sensor noise in the dynamic mode
	decomposition,'' \emph{Exp. Fluids}, vol.~57, pp. 1--19, 2016.
	
	\bibitem{hemati2017biasing}
	M.~S. Hemati, C.~W. Rowley, E.~A. Deem, and L.~N. Cattafesta, ``De-biasing the
	dynamic mode decomposition for applied koopman spectral analysis of noisy
	datasets,'' \emph{Theor. Comput. Fluid Dyn.}, vol.~31, pp.
	349--368, 2017.
	
	
	\bibitem{vcrnjaric2020koopman}
	N.~{\v{C}}rnjari{\'c}-{\v{Z}}ic, S.~Ma{\'c}e{\v{s}}i{\'c}, and I.~Mezi{\'c},
	``Koopman operator spectrum for random dynamical systems,'' \emph{J. Nonlinear Sci.}, vol.~30, pp. 2007--2056, 2020.
	
	\bibitem{lasota2013chaos}
	A.~Lasota and M.~C. Mackey, \emph{Chaos, Fractals, and Noise: Stochastic Aspects of Dynamics}.\hskip 1em plus 0.5em minus 0.4em\relax New York, NY, USA: Springer–Verlag, 1994.
	
	\bibitem{pazy2012semigroups}
	A.~Pazy, \emph{Semigroups of Linear Operators and Applications to Partial Differential Equations}.\hskip 1em plus 0.5em minus 0.4em\relax New York, NY, USA: Springer, 1983.
	
	\bibitem{meng2024koopman}
	Y.~Meng, R.~Zhou, M.~Ornik, and J.~Liu, ``Koopman-based learning of infinitesimal generators without operator logarithm,'' \emph{arXiv preprint
		arXiv:2403.15688}, 2024.
		
	\bibitem{zakai1965band}
	M.~Zakai, ``Band-limited functions and the sampling theorem,''
	\emph{Inf. Control}, vol.~8, no.~2, pp. 143--158, 1965.
	
	\bibitem{dunford1988linear}
	N.~Dunford and J.~T. Schwartz, \emph{Linear Operators, Part 1: General Theory}.\hskip 1em plus 0.5em minus 0.4em\relax New York, NY, USA: Interscience, 1957.
	
\end{thebibliography}




\begin{IEEEbiography}[{\includegraphics[width=1in,height=1.25in,clip,keepaspectratio]{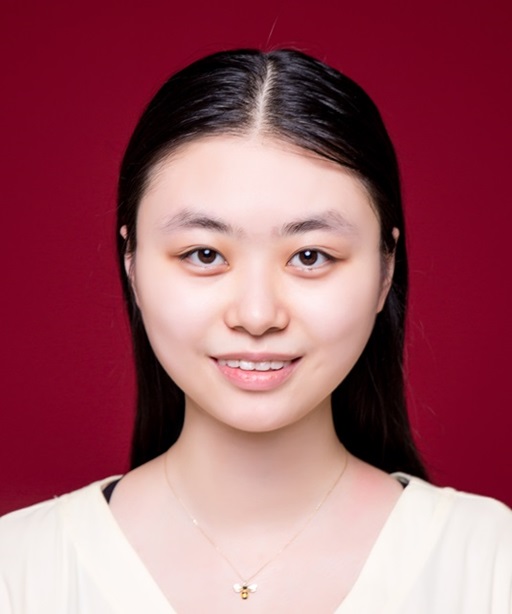}}]
	{Zhexuan Zeng} is currently working toward the Ph.D. degree in control science and engineering with the Department of Automatic Control, Huazhong University of Science and Technology, Wuhan, China, under the supervision of Prof. Ye Yuan. 
	
	Her research interests include system identification, sampling theorem, and applications of operator theoretic methods.
\end{IEEEbiography}

\begin{IEEEbiography}[{\includegraphics[width=1in,height=1.25in,clip,keepaspectratio]{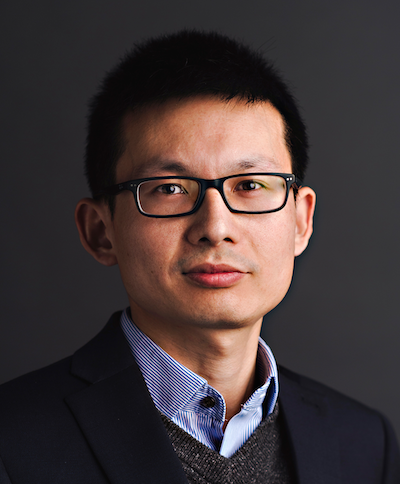}}]
	{Jun Liu} (Senior Member, IEEE) received the B.S. degree in Applied Mathematics from Shanghai Jiao-Tong
	University in 2002, the M.S. degree in Mathematics
	from Peking University in 2005, and the Ph.D. degree in Applied Mathematics from the University of
	Waterloo in 2011. 
	
	He held an NSERC Postdoctoral Fellowship in Control and Dynamical Systems at Caltech between 2011 and 2012. He was a Lecturer in Control and Systems Engineering at the University of Sheffield between 2012 and 2015. In 2015, he joined the Faculty of Mathematics at the University of Waterloo, where he currently is a Professor of
	Applied Mathematics and directs the Hybrid Systems Laboratory.  His main research interests are in the theory and applications of hybrid systems and control, including rigorous computational methods for control design with applications in cyber-physical systems and robotics. He was awarded a Marie-Curie Career Integration Grant from the European Commission in 2013, a Canada Research Chair from the Government of Canada in 2017, an Early Researcher Award from the Ontario Ministry of Research, Innovation and Science in 2018, and an Early Career Award from the Canadian Applied and Industrial Mathematics Society and Pacific Institute for the Mathematical Sciences (CAIMS/PIMS) in 2020. His best paper awards include the Zhang Si-Ying Outstanding Youth Paper Award (2010, 2015) and the Nonlinear Analysis: Hybrid Systems Paper Prize (2017). 
\end{IEEEbiography}


\begin{IEEEbiography}[{\includegraphics[width=1in,height=1.25in,clip,keepaspectratio]{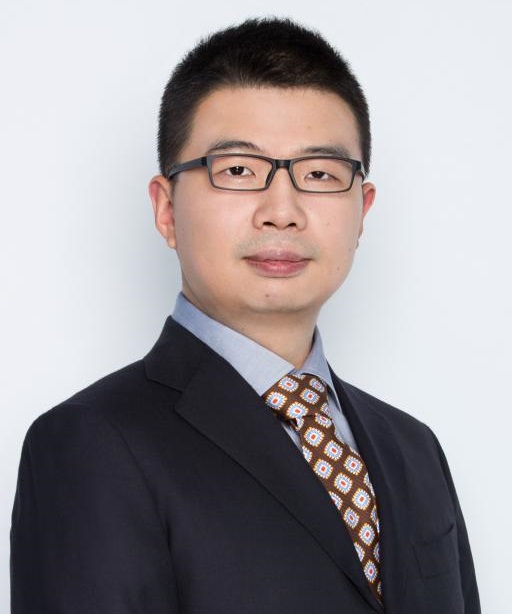}}]
	{Ye Yuan} (Senior Member, IEEE) received the B.Eng. degree (Valedictorian) in Automation from Shanghai Jiao Tong University, Shanghai, China, in 2008, and the 
	M.Phil. and Ph.D. degrees in control theory from 
	the Department of Engineering, University of 
	Cambridge, Cambridge, U.K., in 2009 and 2012,
	respectively.  
	
	He was a Postdoctoral Researcher with UC Berkeley, and a Junior Research Fellow with	Darwin College, University of Cambridge. He is currently a Full Professor with the Huazhong University of Science and Technology, Wuhan, China. His research interests include system identification and control with applications to cyber-physical systems.
\end{IEEEbiography}
\balance


\end{document}